\newtheorem{theorem}{Theorem}[section]
\newtheorem*{theorem*}{Theorem}
\newtheorem{proposition}[theorem]{Proposition}
\newtheorem*{proposition*}{Proposition}
\newtheorem{lemma}[theorem]{Lemma}
\newtheorem*{lemma*}{Lemma}
\newtheorem{corollary}[theorem]{Corollary}
\newtheorem*{conjecture*}{Conjecture}
\newtheorem{fact}[theorem]{Fact}
\newtheorem*{fact*}{Fact}
\newtheorem*{hypothesis*}{Hypothesis}
\theoremstyle{definition}
\newtheorem{definition}[theorem]{Definition}
\newtheorem*{definition*}{Definition}
\newtheorem{algorithm}[theorem]{Algorithm}
\theoremstyle{remark}
\newtheorem*{claim*}{Claim}
\newtheorem{remark}[theorem]{Remark}
\newtheorem*{remark*}{Remark}
\newtheorem*{observation*}{Observation}
\let\mathbb\varmathbb
\crefname{lemma}{Lemma}{Lemmas}
\crefname{fact}{Fact}{Facts}
\crefname{theorem}{Theorem}{Theorems}
\crefname{corollary}{Corollary}{Corollaries}
\crefname{claim}{Claim}{Claims}
\crefname{example}{Example}{Examples}
\crefname{algorithm}{Algorithm}{Algorithms}
\crefname{problem}{Problem}{Problems}
\crefname{definition}{Definition}{Definitions}
\newcommand{\Authornote}[2]{}
\newcommand{\Authornotecolored}[3]{}
\newcommand{\Authorcomment}[2]{}
\newcommand{\Authorfnote}[2]{}
\newcommand{\Dnote}{\Authornote{D}}
\newcommand{\Pnote}{\Authornote{P}}
\definecolor{forestgreen(traditional)}{rgb}{0.0, 0.27, 0.13}
\newcommand{\Paren}[1]{\left(#1\right)}
\newcommand{\bigparen}[1]{\big(#1\big)}
\newcommand{\Bigparen}[1]{\Big(#1\Big)}
\newcommand{\Brac}[1]{\left[#1\right]}
\newcommand{\abs}[1]{\lvert#1\rvert}
\newcommand{\card}[1]{\lvert#1\rvert}
\newcommand{\set}[1]{\{#1\}}
\newcommand{\Set}[1]{\left\{#1\right\}}
\newcommand{\norm}[1]{\lVert#1\rVert}
\newcommand{\Norm}[1]{\left\lVert#1\right\rVert}
\newcommand{\iprod}[1]{\langle#1\rangle}
\newcommand{\Iprod}[1]{\left\langle#1\right\rangle}
\newcommand{\Esymb}{\mathbb{E}}
\newcommand{\Psymb}{\mathbb{P}}
\DeclareMathOperator*{\E}{\Esymb}
\DeclareMathOperator*{\ProbOp}{\Psymb}
\renewcommand{\Pr}{\ProbOp}
\newcommand{\seteq}{\mathrel{\mathop:}=}
\newcommand{\from}{\colon}
\newcommand{\mper}{\,.}
\newcommand{\mcom}{\,,}
\newcommand\bdot\bullet
\DeclareMathOperator{\Ind}{\mathbf 1}
\DeclareMathOperator{\poly}{poly}
\newcommand{\Hoelder}{H\"{o}lder\xspace}
\newcommand{\Holder}{\Hoelder}
\newcommand{\N}{\mathbb N}
\newcommand{\R}{\mathbb R}
\newcommand{\cA}{\mathcal A}
\newcommand{\cB}{\mathcal B}
\newcommand{\cC}{\mathcal C}
\newcommand{\cD}{\mathcal D}
\newcommand{\cE}{\mathcal E}
\newcommand{\cN}{\mathcal N}
\newcommand{\bbS}{\mathbb S}
\renewcommand{\leq}{\leqslant}
\renewcommand{\le}{\leqslant}
\renewcommand{\geq}{\geqslant}
\renewcommand{\ge}{\geqslant}
\let\epsilon=\varepsilon
\numberwithin{equation}{section}
\newcommand\MYcurrentlabel{xxx}
\newcommand{\MYstore}[2]{%
  \global\expandafter \def \csname MYMEMORY #1 \endcsname{#2}%
}
\newcommand{\MYload}[1]{%
  \csname MYMEMORY #1 \endcsname%
}
\newcommand{\MYnewlabel}[1]{%
  \renewcommand\MYcurrentlabel{#1}%
  \MYoldlabel{#1}%
}
\newcommand{\MYdummylabel}[1]{}
\newcommand{\torestate}[1]{%
  \let\MYoldlabel\label%
  \let\label\MYnewlabel%
  #1%
  \MYstore{\MYcurrentlabel}{#1}%
  \let\label\MYoldlabel%
}
\newcommand{\restatetheorem}[1]{%
  \let\MYoldlabel\label
  \let\label\MYdummylabel
  \begin{theorem*}[Restatement of \cref{#1}]
    \MYload{#1}
  \end{theorem*}
  \let\label\MYoldlabel
}
\newcommand{\restatelemma}[1]{%
  \let\MYoldlabel\label
  \let\label\MYdummylabel
  \begin{lemma*}[Restatement of \cref{#1}]
    \MYload{#1}
  \end{lemma*}
  \let\label\MYoldlabel
}
\newcommand{\restateprop}[1]{%
  \let\MYoldlabel\label
  \let\label\MYdummylabel
  \begin{proposition*}[Restatement of \cref{#1}]
    \MYload{#1}
  \end{proposition*}
  \let\label\MYoldlabel
}
\newcommand{\restatefact}[1]{%
  \let\MYoldlabel\label
  \let\label\MYdummylabel
  \begin{fact*}[Restatement of \prettyref{#1}]
    \MYload{#1}
  \end{fact*}
  \let\label\MYoldlabel
}
\newcommand{\restate}[1]{%
  \let\MYoldlabel\label
  \let\label\MYdummylabel
  \MYload{#1}
  \let\label\MYoldlabel
}
\newcommand{\e}{\epsilon}
\newcommand*{\Id}{\mathrm{Id}}
\newcommand*{\on}{\{\pm 1\}}
\newcommand*{\sos}{\mathrm{sos}}
\newcommand*{\U}{\mathcal{U}}
\newcommand*{\Lowner}{L\"owner\xspace}
\newcommand*{\normtv}[1]{\Norm{#1}_{\mathrm{TV}}}
\newcommand*{\normf}[1]{\Norm{#1}_{\mathrm{F}}}
\DeclareMathOperator{\pE}{\tilde{\mathbb{E}}}
\newcommand*{\dyad}[1]{#1#1{}^{\mkern-1.5mu\mathsf{T}}}
\title{
  Outlier-robust moment-estimation\\ via sum-of-squares
}
\author{
    Pravesh K. Kothari\thanks{Princeton University and Institute for Advanced Study}
      \and
    David Steurer\thanks{ETH Z\"urich. Much of this work was done while at Cornell University and the Institute for Advanced Study. Supported by a Microsoft Research Fellowship, a Alfred P. Sloan Fellowship, NSF awards, and the Simons Collaboration for Algorithms and Geometry.}
}
\begin{document}

\pagestyle{empty}

\maketitle
\thispagestyle{empty} %

\begin{abstract}

We develop efficient algorithms for estimating low-degree moments of unknown distributions in the presence of adversarial outliers.
The guarantees of our algorithms improve in many cases significantly over the best previous ones, obtained in recent works of \cite{DBLP:conf/focs/DiakonikolasKK016,DBLP:conf/focs/LaiRV16,DBLP:conf/stoc/CharikarSV17}.
We also show that the guarantees of our algorithms match information-theoretic lower-bounds for the class of distributions we consider. 
These improved guarantees allow us to give improved algorithms for independent component analysis and  learning mixtures of Gaussians in the presence of outliers.

Our algorithms are based on a standard sum-of-squares relaxation of the following conceptually-simple optimization problem:
Among all distributions whose moments are bounded in the same way as for the unknown distribution, find the one that is closest in statistical distance to the empirical distribution of the adversarially-corrupted sample.

\end{abstract}

\clearpage

  \microtypesetup{protrusion=false}
  \tableofcontents{}
  \microtypesetup{protrusion=true}

\clearpage

\pagestyle{plain}
\setcounter{page}{1}

\section{Introduction}

We consider the problem of \emph{outlier-robust parameter estimation}:
We are given independent draws $x_1,\ldots,x_n$ from an unknown distribution $D$ over $\R^d$ and the goal is to estimate parameters of the distribution, e.g., its mean or its covariance matrix.
Furthermore, we require that the estimator is \emph{outlier-robust}:
even if an $\e$-fraction of the draws are corrupted by an adversary, the estimation error should be small.

These kind of estimators have been studied extensively in statistics (under the term \emph{robust statistics}) \cite{MR0426989,maronna2006robust,huber2011robust,hampel2011robust}.
However, many robust estimators coming out of this research effort are computationally efficient only for low-dimensional distributions (because the running time is say exponential in the dimension $d$) \cite{bernholt2006robust}. 

A recent line of research developed the first robust estimators for basic parameter estimation problems (e.g., estimating the mean and covariance matrix of a Gaussian distribution) that are computationally efficient for the high-dimensional case, i.e., the running time is only polynomial in the dimension $d$ \cite{DBLP:conf/focs/DiakonikolasKK016,DBLP:conf/focs/LaiRV16,DBLP:conf/colt/Cherapanamjeri017,DBLP:conf/stoc/CharikarSV17}.

Our work continues this line of research.
We design efficient algorithms to estimate low-degree moments of distributions.
Our estimators succeed under significantly weaker assumptions about the unknown distribution $D$, even for the most basic tasks of estimating the mean and covariance matrix of $D$.
For example, in order to estimate the mean of $D$ (in an appropriate norm) our algorithms do not need to assume that $D$ is Gaussian or has a covariance matrix with small spectral norm in contrast to assumptions of previous works.
Similary, our algorithms for estimating covariance matrices work, unlike previous algorithms, for non-Gaussian distributions and distributions that are not (affine transformations of) product distributions.

Besides these qualitative differences, our algorithms also offer quantitative improvements.
In particular, for the class of distributions we consider, the guarantees of our algorithms---concretely, the asymptotic behavior of the estimation error as a function of the fraction $\e$ of corruptions---match, for the first time in this generality, information-theoretic lower bounds.

\paragraph{Outlier-robust method of moments}
Our techniques for robust estimation of mean vectors and covariance matrices extend in a natural way to higher-order moment tensors.
This fact allows us to turn many non-outlier-robust algorithms in a black-box way into outlier-robust algorithms.
The reason is that for many parameter estimation problems the best known algorithms in terms of provable guarantees are based on the \emph{method of moments}, which means that they don't require direct access to a sample from the distribution but instead only to its low-degree moments (e.g. \cite{pearson1894contributions,DBLP:conf/stoc/KalaiMV10,DBLP:conf/focs/MoitraV10}).
(Often, a key ingredient of these algorithms in the high-dimensional setting is \emph{tensor decomposition} \cite{DBLP:conf/stoc/MosselR05,DBLP:conf/nips/AnandkumarFHKL12,DBLP:journals/jmlr/AnandkumarGHKT14,DBLP:conf/innovations/HsuK13,DBLP:conf/stoc/BhaskaraCMV14,DBLP:conf/stoc/BarakKS15,DBLP:conf/approx/GeM15,DBLP:conf/focs/MaSS16}.)
If there were no outliers, we could run these kinds of algorithms on the empirical moments of the observed sample.
However, in the presence of outliers, this approach fails dramatically because even a single outlier can have a huge effect on the empirical moments.
Instead, we apply method-of-moment-based algorithms on the output of our outlier-robust moment estimators.
Following this strategy, we obtain new outlier-robust algorithms for independent component analysis (even if the underlying unknown linear transformation is ill-conditioned) and mixtures of spherical Gaussians (even if the number of components of the mixture is large and the means are not separated).

\paragraph{Estimation algorithms from identifiability proofs}
Our algorithms and their analysis follow a recent paradigm for computationally-efficient provable parameter estimation that has been developed in the context of the \emph{sum-of-squares method}.
We say that a parameter estimation problem satisfies \emph{identifiability} if it is information-theoretically possible to recover the desired parameter from the available data (disregarding computational efficiency).
The key idea of this paradigm is that a proof of identifiability can be turned into an efficient estimation algorithm if the proof is captured by a low-complexity proof system like sum-of-squares.
Many estimation algorithms based on convex relaxations, in particular sum-of-squares relaxations, can be viewed as following this paradigm (e.g., compressed sensing and matrix completion \cite{DBLP:conf/focs/CandesRTV05,DBLP:journals/focm/CandesR09,DBLP:journals/tit/Gross11,DBLP:journals/jmlr/Recht11}).
Moreover, this paradigm has been used, with a varying degree of explicitness, in order to design a number of recent algorithms based on sum-of-squares for unsupervised learning, inverse, and estimation problems like overcomplete tensor decomposition, sparse dictionary learning, tensor completion, tensor principal component analysis \cite{DBLP:conf/colt/HopkinsSS15,DBLP:conf/stoc/BarakKS15,DBLP:conf/colt/BarakM16,DBLP:conf/colt/PotechinS17,DBLP:conf/focs/MaSS16}.

In many of these settings, including ours, the proof of identifiability takes a particular form:
given a (corrupted) sample $X\subseteq \R^d$ from a distribution $D$ and a candidate parameter $\hat \theta$ that is close to true parameter $\theta$ of $D$, we want to be able to efficiently certify that the candidate parameter $\hat \theta$ is indeed close to $\theta$.
(This notion of certificate is similar to the one for NP.)
Following the above paradigm, if the certification in this identifiability proof can be accomplished by a low-degree sum-of-squares proof, we can derive an efficient estimation algorithm (that computes an estimate $\hat\theta$ just given the sample $X$).

Next we describe the form of our identifiability proofs for outlier-robust estimation.

\paragraph{Identifiability in the presence of outliers}
Suppose we are given an $\e$-corrupted sample $X=\set{x_1,\ldots,x_n}\subseteq \R^d$ of an unknown distribution $D$ and our goal is to estimate the mean $\mu\in \R^d$ of $D$.
To hope for the mean to be identifiable, we need to make some assumptions about $D$.
Otherwise, $D$ could for example have probability mass $1-\e$ on $0$ and probability mass $\e$ on $\mu/\e$.
Then, an adversary can erase all information about the mean $\mu$ in an $\e$-corrupted sample from $D$ (because only an $\e$ fraction of the draws from $D$ carry information about $\mu$).
Therefore, we will assume that $D$ belongs to some class of distributions $\cC$ (known to the algorithm).
Furthermore, we will assume that the class $\cC$ is defined by conditions on low-degree moments so that if we take a large enough (non-corrupted) sample $X^0$ from $D$, the uniform distribution over $X^0$ also belongs to the class $\cC$ with high probability.
(We describe the conditions that define $\cC$ in a paragraph below.)

With this setup in place, we can describe our robust identifiability proof:
it consists of a (multi-)set of vectors $X'=\set{x_1',\ldots,x_n'}\subseteq \R^d$ that satisfies two conditions:
\begin{enumerate}
\item $x_i'=x_i$ for all but an $\e$ fraction of the indices $i\in [n]$,
\item the uniform distribution over $X'$ is in $\cC$.
\end{enumerate}
Note that given $X$ and $X'$, we can efficiently check the above conditions, assuming that the conditions on the low-degree moments that define $\cC$ are efficiently checkable (which they will be).

Also note that the above notion of proof is complete in the following sense:
if $X$ is indeed an $\e$-corruption of a typical\footnote{The sample $X^0$ of $D$ should be typical in the sense that the empirical low-degree moments of the sample $X^0$ are close to the (population) low-degree moments of the distribution $D$.
If the sample is large enough (polynomial in the dimension), this condition is satisfied with high probability.} sample $X^0$ from a distribution $D\in \cC$, then there exists a set $X'$ that satisfies the above conditions, namely the uncorrupted sample $X^0$.

In \cref{sec:robust-identifiability}, we show that the above notion of proof is also sound:
if $X$ is indeed an $\e$-corruption of a (typical) sample $X^0$ from a distribution $D\in \cC$ and $X'$ satisfies the above conditions, then the empirical mean $\mu'\seteq\tfrac 1n \sum_{i=1}^n x'_i$ of the uniform distribution over $X'$ is close to $\mu$.
We can rephrase this soundness as the following concise mathematical statement, which we prove in \cref{sec:robust-identifiability}:
if $D$ and $D'$ are two distributions in $\cC$ that have small statistical distance, then their means are close to each other (and their higher-order moments are close to each other as well).

Furthermore, the above soundness is captured by a low-degree sum-of-squares proof (using for example the sum-of-squares version of \Holder's inequality);
this fact is the basis of our efficient algorithm for outlier-robust mean estimation (see \cref{sec:moment-estim-algor}).

\paragraph{Outlier-robustness and (certifiable) subgaussianity}
To motivate the aforementioned conditions we impose on the distributions, consider the following scenario:
we are given an $\e$-corrupted sample of a distribution $D$ over $\R$ and our goal is to estimate its variance $\sigma^2$ up to a constant factors.
To hope for the variance to be identifiable, we need to rule out for example that $D$ outputs $0$ with probability $1-\e$ and a Gaussian $N(0,\sigma^2/\e^2)$ with probability $\e$.
Otherwise, an adversary can remove all information about $\sigma$ by changing an $\e$ fraction of the draws in a sample from $D$.
If we look at the low-degree moments of this distribution $D$, we see that $(\E_{D(x)} x^k)^{1/k}\approx \sqrt{k / \e^{1-2/k}} (\E_{D(x)} x^2)^{1/2}$.
So for large enough $k$ (i.e., $k\approx \log (1/\e)$), the ratio between the $L_k$ norm of $D$ and the $L_2$  norm of $D$ exceeds that of a Gaussian by a factor of roughly $1/\sqrt \e$.
In order to rule out this example, we impose the following condition on the low-degree moments of $D$, and we show that this condition is enough to estimate the variance of a distribution $D$ over $\R$ up to constant factors given an $\e$-corrupted sample,
\begin{equation}
  \label{eq:subgaussian-scalar}
  \Bigparen{\E_{D(x)} (x-\mu_D)^k}^{1/k} \le \sqrt {C k} \cdot \Paren{\E_{D(x)} (x-\mu_D)^2 }^{1/2}\text{ for $C>0$ and even $k\in \N$ with } C k \cdot \e^{1-2/k}\ll 1
  \mper
\end{equation}
Here, $\mu_D$ is the mean of the distribution $D$.

In the high-dimensional setting, a natural idea is to impose the condition \cref{eq:subgaussian-scalar} for every direction $u\in \R^d$,
\begin{equation}
  \label{eq:subgaussian-vector}
  \Bigparen{\E_{D(x)} \iprod{x-\mu_D,u}^k}^{1/k} \le \sqrt {C k} \cdot \Paren{\E_{D(x)} \iprod{x-\mu_D,u}^2 }^{1/2}\mper
\end{equation}
In \cref{sec:robust-identifiability}, we show that this condition is indeed enough to ensure identifiability and that it is information-theoretically possible to estimate the covariance matrix $\Sigma_D$ of $D$ up to constant factors (in the L\"owner order sense) assuming again that $C k \cdot \e^{1-2/k}\ll 1$.
Unfortunately, condition \cref{eq:subgaussian-vector} is unlikely to be enough to guarantee an efficient estimation algorithm.
The reason is that \cref{eq:subgaussian-vector} might hold for the low-degree moments of some distribution $D$ but every proof of this fact requires exponential size.
(This phenomenon is related to the fact that finding a vector $u\in \R^d$ that violates \cref{eq:subgaussian-vector} is an NP-hard problem in general.)

To overcome this source of intractability, we require that inequality \cref{eq:subgaussian-vector} is not only true but also has a low-degree sum-of-squares proof.
With this additional condition, we give a polynomial-time algorithm to estimate the covariance matrix $\Sigma_D$ of $D$ up to constant factors (in the L\"owner order sense) assuming again that $C k \cdot \e^{1-2/k}\ll 1$.

\begin{definition}[Certifiable subgaussianity of low-degree moments]
  \label{def:certifiable-subgaussianity}
  A distribution\footnote{We emphasize that our notion of certifiable subgaussianity is a property only of the low-degree moments of a distribution and, in this way, significantly less restrictive then the usual notion of subgaussianity (which restricts all moments of a distribution).} $D$ over $\R^d$ with mean $\mu$ is \emph{$(k,\ell)$-certifiably subgaussian} with parameter $C>0$ if there for every positive integer $k'\le k/2$, there exists a degree-$\ell$ sum-of-squares proof\footnotemark{} of the degree-$2k'$ polynomial inequality over the unit sphere,
  \begin{equation}
    \label{eq:low-degree-subgaussian}
    \forall u\in \bbS^{d-1}.~
    \E_{D(x)} \iprod{x-\mu,u}^{2k'}
    \le \Bigparen{C\cdot k' \E_{D(x)} \iprod{x-\mu,u}^2 }^{k'}
  \end{equation}
\end{definition}

\footnotetext{
  In the special case of proving that $p(u)\ge 0$ holds for every vector $u\in\bbS^{d-1}$, a degree-$\ell$ sum-of-squares proof consists of a polynomial $q(u)$ with $\deg q\le \ell-2$ and polynomials $r_1(u),\ldots,r_t(u)$ with $\deg r_\tau \le \ell/2$ such that $p=q\cdot (\norm{u}^2-1) + r_1^2+\cdots +r_t^2$.
  See \cref{sec:preliminaries} for a more general definition.
}

In \cref{sec:subgaussianity}, we show that a wide range of distributions are certifiably subgaussian and that many operations on distributions preserve this property.
In particular, (affine transformations of) products of scalar-valued subgaussian distributions and mixtures thereof satisfy this property.
In a subsequent work, Kothari and Steinhardt \cite{KothariSteinhardt17} show that certifiable subgaussianity holds for distributions that satisfy a Poincar\'e inequality (which includes all strongly log-concave distributions). 

Since any valid polynomial inequality over the sphere has a sum-of-squares proof if we allow the degree to be large enough and the inequality has positive slack, we say that $D$ is \emph{$(k,\infty)$-certifiably subgaussian} with parameter $C>0$ if the inequalities \cref{eq:low-degree-subgaussian} hold for all $k'\le k/2$.
In most cases we consider, a sum-of-squares proof of degree $\ell=k$ is enough.
In this case, we just say that $D$ is $k$-certifiably subgaussian.
Observe that $k$-certifiable subgaussianity only restricts moments up to degree $k$ of the underlying distribution.
In this sense, it is a much weaker assumption than the usual notion of subgaussianity, which imposes Gaussian-like upper bounds on all moments.

\paragraph{Sum-of-squares and quantifier alternation}

Taking together the above discussion of robust identifiability proofs and certifiable subgaussianity, our approach to estimate the low-degree moments of a certifiable subgaussian distribution is the following:
given an $\e$-corrupted sample $X=\set{x_1,\ldots,x_n}\subseteq \R^d$ from $D$, we want to find a (multi-)set of vectors $X'=\set{x'_1,\ldots,x'_n}\subseteq \R^d$ such that the uniform distributions over $X$ and $X'$ are $\e$-close in statistical distance and the uniform distribution over $X'$ is certifiably subgaussian.

It is straightforward to formulate these conditions as a system $\cE$ of polynomial equations over the reals such that $x_1',\ldots,x'_n$ are (some of the) variables (see \cref{sec:moment-estim-algor}).
Our outlier-robust moment estimation proceeds by solving a standard sum-of-squares relaxation of this system $\cE$.
The solution to this relaxation can be viewed as a \emph{pseudo-distribution} $D'$ over vectors $x'_1,\ldots,x'_n$ that satisfies the system of equations $\cE$.
(This pseudo-distribution behaves in many ways like a classical probability distribution over vectors $x'1_,\ldots,x'_n$ that satisfy the equations $\cE$.
See \cref{sec:preliminaries} for a definition.)
Our moment estimation algorithm simply outputs the expected empirical moments $\tilde\E_{D'(x'_1,\ldots,x'_n)}\tfrac 1n \sum_{i=1}^n (x'_i)^{\otimes r}$ with respect to the pseudo-distribution $D'$.

We remark that previous work on computationally-efficiently outlier-robust estimation also used convex optimization techniques albeit in different ways.
For example, Diakonikolas et al. solve an implicitly-defined convex optimization problem using a customized separation oracle \cite{DBLP:conf/focs/DiakonikolasKK016}.
(Their optimization problem is implicit in the sense that it is defined in terms of the uncorrupted sample which we do not observe.)

An unusual feature of the aforementioned system of equations $\cE$ is that it also includes variables for the sum-of-squares proof of the inequality \cref{eq:subgaussian-vector} because we want to restrict the search to those sets $X'$ such that the uniform distribution $X'$ is certifiably subgaussian.
It is interesting to note that in this way we can use sum-of-squares as an approach to solve $\exists\, \forall$-problems as opposed to just the usual $\exists$-problems.
(The current problem is an $\exists\, \forall$-problem in the sense that we want to find $X'$ such that for all vectors $u$ the inequality \cref{eq:subgaussian-vector} holds for the uniform distribution over $X'$.)

We remark that the idea of using sum-of-squares to solve problems with quantifier alternation also plays a role in control theory (where the goal is find a dynamical system together with an associated Lyapunov functions, which can be viewed as sum-of-squares proof of the fact that the dynamical system behaves nicely in an appropriate sense).
However, to the best of our knowledge, this work is the first that uses this idea for the design of computationally-efficient algorithms with provable guarantees.
We remark that in a concurrent and independent work, Hopkins and Li use similar ideas to learn mixtures of well-separated spherical Gaussians \cite{HopkinsLi17}.
In a subsequent paper, Kothari and Steinhardt use those ideas for clustering \cite{KothariSteinhardt17}.

\subsection{Results}

Without any assumptions about the underlying distribution, the best known efficient algorithms for robust mean estimation incur an estimation error that depends on the spectral norm of the covariance matrix $\Sigma$ of the underlying distribution and is proportional to $\sqrt \e$ (where $\e>0$ is the fraction of outliers) \cite{DBLP:journals/corr/DiakonikolasKK017,DBLP:journals/corr/SteinhardtCV17}.
Concretely, given an $\e$-corrupted sample of sufficiently larger polynomial size from a distribution $D$, they compute an estimate $\hat \mu$ for the mean $\mu$ of $D$ such that with high probability $\norm{\hat \mu-\mu}\le O(\sqrt \e)\cdot \norm{\Sigma}^{1/2}$.
Furthermore, this bound is optimal for general distributions in the sense that up to constant factors no better bound is possible information-theoretically in terms of $\e$ and the spectral norm of $\Sigma$.

In the following theorem, we show that better bounds for the mean estimation error are possible for large classes of distributions.
Concretely, we assume (certifiable) bounds on higher-order moments of the distribution (degree 4 and higher).
These higher-order moment assumptions allow us to improve the estimation error as a function of $\e$ (instead of a $\sqrt \e$ bound as for the unconditional mean estimation before we obtain an $\e^{1-1/k}$ if we assume a bound on the degree-$k$ moments).
Furthermore, we also obtain multiplicative approximations for the covariance matrix (in the L\"owner order sense) regardless of the spectral norm of the covariance.
(Note that our notion of certifiable subgaussianity does not restrict the covariance matrix in any way.)

\begin{theorem}[Robust mean and covariance estimation under certifiable subgaussianity]
  \label{thm:mean-covariance-main}
  For every $C>0$ and even $k\in \N$, there exists a polynomial-time algorithm that given a (corrupted) sample $S\subseteq \R^d$ outputs a mean-estimate $\hat \mu\in \R^d$ and a covariance-estimate $\hat \Sigma\in \R^{d\times d}$ with the following guarantee:
  there exists $n_0\le (C+ d)^{O(k)}$
  such that if $S$ is an $\e$-corrupted sample with size $\card{S}\ge n_0$ of a $k$-certifiably $C$-subgaussian distribution $D$ over $\R^d$ with mean $\mu\in \R^d$ and covariance $\Sigma\in \R^{d\times d}$, then with high probability
  \begin{gather}
    \norm{\mu - \hat \mu} \le O(Ck)^{1/2}\cdot \e^{1-1/k} \cdot \norm{\Sigma}^{1/2}\\
    \norm{\Sigma^{-1/2}(\mu - \hat\mu)} \le O(Ck)^{1/2}\cdot \e^{1-1/k}\\
    (1-\delta)\Sigma \preceq \hat \Sigma \preceq (1+\delta)\Sigma \quad \text{for }\delta\le O(C k)\cdot \e^{1-2/k}\mper
  \end{gather}
  For the last two bounds, we assume in addition $Ck\cdot \e^{1-2/k}\le \Omega(1)$.\footnote{This notation means that we require $Ck\cdot \e^{1-2/k}\le c_0$ for some absolute constant $c_0>0$ (that could in principle be extracted from the proof).}
\end{theorem}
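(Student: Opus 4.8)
The plan is to instantiate the identifiability-to-algorithm paradigm described above, combining the robust identifiability statement and its sum-of-squares proof from \cref{sec:robust-identifiability} with the sum-of-squares relaxation of the system $\cE$ from \cref{sec:moment-estim-algor}. Let $X^0$ be a (hypothetical) uncorrupted sample of size $n\ge n_0$ from $D$, and let $D'$ be the pseudo-distribution over $x'_1,\dots,x'_n$ (together with the auxiliary variables encoding a sum-of-squares proof of subgaussianity) produced by the relaxation of $\cE$ on input $S$. I would first record two setup facts. \emph{Concentration:} with high probability the uniform distribution $w$ over $X^0$ is $k$-certifiably $O(C)$-subgaussian --- this is the step that forces $n_0\le (C+d)^{O(k)}$, via a spectral (rather than union-bound) estimate on the degree-$\le k$ empirical moment matrices, as developed in \cref{sec:subgaussianity} --- and its mean $\mu_w$ and covariance $\Sigma_w$ are close enough to $\mu$ and $\Sigma$ (in $\ell_2$ and \Lowner order, respectively) to be absorbed into the final error bounds. \emph{Relaxation:} the axioms of $\cE$ force the uniform distribution $w'$ over $X'$ to be $k$-certifiably $O(C)$-subgaussian and force $x'_i=x_i$ for at least $(1-\e)n$ indices; since $X^0$ also agrees with $S$ on $\ge(1-\e)n$ indices, $w$ and $w'$ share a common component of mass $1-\eta$ for some $\eta\le 2\e$, \ie $w=(1-\eta)q+\eta a$ and $w'=(1-\eta)q+\eta b$ for distributions $q,a,b$.

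For the mean I would start from the identity $(1-\eta)\langle\mu_w-\mu_{w'},u\rangle=\eta\,\E_a\langle x-\mu_w,u\rangle-\eta\,\E_b\langle x-\mu_{w'},u\rangle$, valid for every $u$. Each error term is bounded by combining the sum-of-squares \Holder inequality with the certifiable-subgaussianity inequality \cref{eq:low-degree-subgaussian} at level $k'=k/2$: from the domination $\eta a\le w$ one gets $\eta\bigabs{\E_a\langle x-\mu_w,u\rangle}\le \eta^{1-1/k}\cdot O(\sqrt{Ck})\cdot\norm{\Sigma_w^{1/2}u}$, and symmetrically for the $b$-term with $w'$ in place of $w$. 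Taking $u$ along $\mu_w-\mu_{w'}$ and bounding $\norm{\Sigma_w^{1/2}u},\norm{\Sigma_{w'}^{1/2}u}\le (1+o(1))\norm{\Sigma}^{1/2}$ via the covariance bound below yields $\norm{\mu_w-\mu_{w'}}\le O(\sqrt{Ck})\,\e^{1-1/k}\,\norm{\Sigma}^{1/2}$; substituting $u=\Sigma^{-1/2}\bar u$ instead and keeping $\norm{\Sigma_w^{1/2}u}\approx\norm{\bar u}$ yields the scale-free bound $\norm{\Sigma^{-1/2}(\mu_w-\mu_{w'})}\le O(\sqrt{Ck})\,\e^{1-1/k}$. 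Every step of this derivation is a degree-$O(k)$ sum-of-squares proof in the variables of $\cE$ (the \Holder step and \cref{eq:low-degree-subgaussian} are sum-of-squares by hypothesis, and the common-mass decomposition is encoded through the $\zo$-valued selector variables relating $X'$ to $S$), so it lifts to the pseudo-distribution $D'$: with $\hat\mu=\pE_{D'}\mu_{w'}$ and $u$ the data-dependent unit vector along $\hat\mu-\mu$, we get $\norm{\hat\mu-\mu}=\pE_{D'}\langle\mu_{w'}-\mu,u\rangle\le \pE_{D'}\langle\mu_{w'}-\mu_w,u\rangle+\langle\mu_w-\mu,u\rangle$, from which both claimed mean bounds follow.

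For the covariance I would run the same decomposition on quadratic forms: $\E_{w'}\langle x-\mu_{w'},u\rangle^2-\E_w\langle x-\mu_w,u\rangle^2$ equals an $\eta$-weighted difference of $\E_a$ and $\E_b$ of $\langle x-\mu_w,u\rangle^2$, plus re-centering cross terms of the form $\langle\mu_w-\mu_{w'},u\rangle\cdot\E_a\langle x-\mu_w,u\rangle$ (and the analogue for $b,w'$), which are lower order by the mean bound --- this is where the hypothesis $Ck\cdot\e^{1-2/k}\le\Omega(1)$ enters, keeping the cross terms a small fraction of $\langle\Sigma_w u,u\rangle$. The $\eta$-weighted terms are controlled by the sum-of-squares \Holder inequality at exponent $k/2$ together with \cref{eq:low-degree-subgaussian}: $\eta\,\E_a\langle x-\mu_w,u\rangle^2\le \eta^{1-2/k}\cdot O(Ck)\cdot\langle\Sigma_w u,u\rangle$. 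Summing up gives $(1-\delta)\Sigma_w\preceq\Sigma_{w'}\preceq(1+\delta)\Sigma_w$ with $\delta=O(Ck)\,\e^{1-2/k}$; composing with $\Sigma_w=(1\pm o(1))\Sigma$ and lifting to $D'$ exactly as for the mean (now also using $\pE_{D'}\tfrac1n\sum_i\dyad{(x'_i-\hat\mu)}\succeq 0$ and that the sandwich is a family of degree-$O(k)$ sum-of-squares-provable scalar inequalities indexed by $u$) yields the stated sandwich for $\hat\Sigma=\pE_{D'}\tfrac1n\sum_i\dyad{(x'_i-\hat\mu)}$.

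The main obstacle I expect is the circular coupling between the two estimates: the mean bound is only as strong as the control of $\Sigma_{w'}$ (which is what converts $\norm{\Sigma_w^{1/2}u}$ into $\norm{\Sigma}^{1/2}$ or $\norm{\bar u}$), while the covariance bound relies on the mean bound to dispose of the re-centering cross terms. I would resolve this by proving both bounds simultaneously in a single argument, using the slack hypothesis $Ck\cdot\e^{1-2/k}\le\Omega(1)$ to force $\delta\le\tfrac12$ so the coupling closes with only constant-factor losses. The other delicate point --- and the reason the paradigm applies at all --- is checking that the entire identifiability argument is captured by sum-of-squares proofs of degree $O(k)$ in the variables of $\cE$: this requires the sum-of-squares \Holder inequality, the sum-of-squares encoding of the domination constraint $\eta a\le w$ through the $\zo$-valued selectors relating $X'$ to $S$, and the ability to specialize the proof to a data-dependent direction $u$; none of these is deep individually, but together they are exactly what makes the pseudo-expectations $\hat\mu,\hat\Sigma$ provably close to $\mu,\Sigma$.
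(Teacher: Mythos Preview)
Your proposal is correct and follows essentially the same route as the paper: the paper's proof combines the sampling lemma (\cref{lem:sampling-preserves-subgaussianity}) with the sum-of-squares identifiability statements (\cref{lem:mean-estimation-sos}, \cref{cor:multiplicative-covariance-identifiability-sos}, \cref{cor:mean-identifiability-no-covariance-bound-sos}) and then takes pseudo-expectations, where your ``common-mass decomposition encoded through selector variables'' is exactly the paper's $z_i=r_i w_i$ trick and your ``circular coupling resolved under the slack hypothesis'' is exactly \cref{cor:multiplicative-covariance-identifiability-sos} feeding into \cref{cor:mean-identifiability-no-covariance-bound-sos}. The only step you leave implicit is the degree reduction from the $k$-th-power SoS inequality to the linear one before taking pseudo-expectations---the paper does this via \cref{lem:sos-fact-2} ($\{f^k\le 1\}\sststile{k}{}\{f\le 1\}$) and then linearity, rather than by Cauchy--Schwarz for pseudo-distributions; either works, but it is worth naming explicitly since it is the hinge between the SoS identifiability proof and the actual estimate $\hat\mu=\pE_{D'}\mu'$.
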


Note that the second guarantee for the mean estimation error $\mu-\hat \mu$ is stronger because $\norm{\mu-\hat \mu}\le \norm{\Sigma}^{1/2} \cdot \norm{\Sigma^{-1/2} (\mu-\hat \mu)}$.
We remark that $\norm{\Sigma^{-1/2}(\mu-\hat\mu)}$ is the Mahalanobis distance between $\hat \mu$ and $D$.

In general, our results provide information theoretically tight way to utilize higher moment information and give optimal dependence of the error on the fraction of outliers in the input sample for every $(k,\ell)$-certifiably $O(1)$-subgaussian distribution. This, in particular, improves on the mean estimation algorithm of \cite{DBLP:conf/focs/LaiRV16} by improving on their error bound of $O(\epsilon^{1/2})\|\Sigma\|^{1/2} \sqrt{\log{(d)}}$ to $O(\epsilon^{3/4}\|\Sigma\|^{1/2}$ under their assumption of bounded 4th moments (whenever ``certified'' by SoS).

\paragraph{Frobenius vs spectral norm for covariance-matrix estimation}

Previous work for robust covariance estimation \cite{DBLP:conf/focs/LaiRV16,DBLP:journals/corr/DiakonikolasKK017a} work with Frobenius norms for measuring the estimation error $\Sigma-\hat \Sigma$ and obtain in this way bounds that can be stronger than ours.
However, it turns out that assuming only $k$-certifiable subgaussianity makes it information-theoretically impossible to obtain dimension-free bounds in Frobenius norm and that we have to work with spectral norms instead. In this sense, the assumptions we make about distributions are substantially weaker compared to previous works.

Concretely, \cite{DBLP:conf/focs/LaiRV16} show a bound of $\normf{\Sigma-\hat\Sigma}\le \tilde O(\e^{1/2})\norm{\Sigma}$ assuming a $4$-th moment bound and that the distribution is an affine transformation of a product distribution.
\cite{DBLP:journals/corr/DiakonikolasKK017a} show a bound $\normf{\Sigma-\hat \Sigma}\le O(\e)\norm{\Sigma}$ assuming the distribution is Gaussian.\footnote{
  In the Gaussian case, \cite{DBLP:journals/corr/DiakonikolasKK017a} establish the stronger bound $\normf{\Sigma^{-1/2}(\Sigma-\hat \Sigma)\Sigma^{-1/2}}\le O(\e)$.
  This norm can be viewed as the Frobenius norm in a transformed space.
  Our bounds are for the spectral norm in the same transformed space.}

However, even for simple $k$-certifiably subgaussian distributions with parameter $C\le O(1)$, the information-theoretically optimal error in terms of Frobenius norm is $\normf{\Sigma-\hat\Sigma}\le O(\sqrt{d}\cdot \e^{1-2/k})\cdot \norm{\Sigma}$. Concretely, consider the mixture of $\cN(0,I)$ and $\cN(0, \epsilon^{-2/k}I)$ with weights $1-\epsilon$ and $\epsilon$, respectively. 
 Then, it's easy to confirm that both the mixture and the standard gaussian $\cN(0,I)$ are $k$-certifiably $2$-subgaussian (using Lemma \ref{lem:subgaussianity-of-gaussian} and Lemma \ref{lem:certifiably-subgaussian-mixtures}) and at most $\epsilon$ far in total variation distance. Thus, given only $k$-certifiable $2$-subgaussianity, it is information theoretically impossible to decide which of the above two distributions generated a given $\epsilon$-corrupted sample. Finally, the difference of their covariance equals $\epsilon (\epsilon^{-2/k}-1) I$ which has Frobenius norm $\Omega(\epsilon^{1-2/k}) \sqrt{d}$ for any $\epsilon < 1$.
 For this case, our algorithms guarantee the significantly stronger\footnotemark{} spectral-norm bound $\norm{\Sigma-\hat\Sigma}\le O(\e^{1-2/k})\cdot \norm{\Sigma}$.
\footnotetext{
  Similarly to \cite{DBLP:journals/corr/DiakonikolasKK017a}, we also work with norms in a transformed space (in which the distribution is isotropic) and obtain the stronger bound $\norm{\Sigma^{-1/2}(\Sigma-\hat\Sigma)\Sigma^{-1/2}}\le O(\e^{1-2/k})$.
}

\paragraph{Multiplicative vs. additive estimation error}
Another benefit of our covariance estimation algorithm is that provide a multiplicative approximation guarantee, that is, the quadratic form of the estimated covariance at any vector $u$ is within $(1 \pm \delta)$ of the quadratic form of the true covariance.
This strong guarantee comes in handy, for example, in \emph{whitening} or computing an isotropic transformation of the data---a widely used primitive in algorithm design.
Indeed, this ability to use the estimated covariance to whiten the data is crucial in our outlier-robust algorithm for independent component analysis (see \cref{sec:outl-robust-indep}).
The Frobenius norm error guarantees, in general, do not imply good multiplicative approximations and thus cannot be used for this application.

We note that in the special case of gaussian distributions, the results of Diakonikolas et al. \cite{DBLP:conf/focs/DiakonikolasKK016} allow recovering the mean and covariance in fixed polynomial time with better dependence of the error on the fraction of outliers that grows as $\tilde{O}(\epsilon) \|\Sigma\|^{1/2}$.
Our results when applied for the special case of gaussian mean and covariance estimation will require a running time of $d^{O(\sqrt{\log{(1/\epsilon)}})}$ to achieve a similar error guarantee.
Their algorithm for gaussian covariance estimation also provides multiplicative error guarantees.

\Dnote{}

\Pnote{}

\paragraph{Robust Estimation of Higher Moments}

Our techniques for robustly estimating mean and covariance of a distribution extend in a direct way to robustly estimating higher-order moment tensors.
In order to measure the estimation error, we use a variant of the injective tensor norm (with respect to a transformed space where the distribution is isotropic), which generalizes the norm we use for the estimation error of the covariance matrix.
This error bound allows us to estimate for every direction $u\in \R^d$, the low-degree moments of the distribution in direction $u$ with small error compared to the second moment in direction $u$.

The approaches in previous works fact inherent obstacles in generalizing to the problem of estimating the higher moments with multiplicative (i.e. in every direction $u$) error guarantees. This type of error is in fact crucial in applications for learning latent variable models
such as mixtures of Gaussians and independent component analysis. 

In fact, our guarantees are in some technical way stronger, which is crucial for our applications of higher-order moment estimates.
Unlike spectral norms, injective norms are NP-hard to compute (even approximately, under standard complexity assumptions).
For this reason, it is not clear how to make use of an injective-norm guarantee when processing moment-estimates further.
Fortunately, it turns out that our algorithm not only guarantees an injective-norm bound for the error but also a good certificate for this bound, in form of a low-degree sum-of-squares proof.
It turns out that this kind of certificate is precisely what we need for our applications---in particular, recent tensor decomposition algorithms based on sum-of-squares \cite{DBLP:conf/focs/MaSS16} can tolerate errors with small injective norm if that is certified by a low-degree sum-of-squares proof.

\begin{theorem}[Robust Higher Moment Estimation]
 \label[theorem]{thm:higher-moment-main}
 For every $C>0$ and even $k\in \N$, there exists a polynomial-time algorithm that given a (corrupted) sample $S\subseteq \R^d$ outputs a moment-estimates $\hat M_2\in \R^{d^2},\ldots,\hat M_k\in\R^{d^{k}}$ with the following guarantee:
  there exists $n_0\le (C+ d)^{O(k)}$   such that if $S$ is an $\e$-corrupted sample with size $\card{S}\ge n_0$ of a $k$-certifiably $C$-subgaussian distribution $D$ over $\R^d$ with moment-tensors $M_2\in\R^{d^2},\ldots,M_k\in\R^{d^k}$ such that $C k \cdot \e^{1-2/k}\le \Omega(1)$, then with high probability for every $r\le k/2$,
  \begin{equation}
    \label{eq:multiplicative-higher-moment}
    \forall u\in \R^d.~\iprod{M_{r} - \hat{M}_{r},u^{\otimes r}}^2
    \le \delta_r \cdot\iprod{M_{2},u^{\otimes 2}}^{r}
    \quad \text{ for } \delta_r\le O\Paren{ Ck}^{r/2}\cdot \e^{1-\frac{r}{k}}
  \end{equation}
  Furthermore, there exist degree-$k$ sum-of-squares proofs of the above polynomial inequalities in $u$.
\end{theorem}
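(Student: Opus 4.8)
The plan is to follow the identifiability-via-sum-of-squares paradigm of the introduction. The first step is to reduce \cref{thm:higher-moment-main} to a deterministic moment-comparison statement that admits a low-degree sum-of-squares proof. By the sampling guarantees underlying our other estimators (valid once $\card S\ge n_0=(C+d)^{O(k)}$), an uncorrupted sample $X_0$ of $D$ is, with high probability, itself $k$-certifiably $O(C)$-subgaussian and has empirical central moment tensors $M_r(X_0)=\tfrac1n\sum_i(x_i-\mu_{X_0})^{\otimes r}$ that are negligibly close to the population moments, \eg $\iprod{M_r(X_0)-M_r,u^{\otimes r}}^2\le \e^{10}\iprod{M_2,u^{\otimes 2}}^r$ for all $u$, with a degree-$k$ sum-of-squares proof; so it suffices to compare the algorithm's output against $M_r(X_0)$. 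Recall from the description of the algorithm that we solve a sum-of-squares relaxation of the system $\cE$ whose indeterminates are the vectors $x_1',\dots,x_n'$ together with the coefficients of a degree-$k$ sum-of-squares certificate of subgaussianity of the uniform distribution on $X'=\set{x_1',\dots,x_n'}$, and whose axioms assert that $X'$ agrees with the (corrupted) input on a $(1-\e)$-fraction of coordinates and that this certificate is valid; the algorithm then outputs $\hat M_r=\pE_{D'}\,\tfrac1n\sum_i (x_i'-\hat\mu)^{\otimes r}$, where $D'$ is the pseudo-distribution returned by the relaxation and $\hat\mu$ is the mean estimate of \cref{thm:mean-covariance-main}. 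Feasibility is witnessed by $X'=X_0$, so $D'$ is a genuine pseudo-distribution of degree $\ge k$ consistent with $\cE$.

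The heart of the argument is an identifiability lemma with a sum-of-squares proof: if $X,X'$ are multisets of size $n$ that agree on an index set $I$ with $\card I\ge(1-2\e)n$, both $k$-certifiably $C$-subgaussian and with comparable covariances, then for every $r\le k/2$ there is a degree-$O(k)$ sum-of-squares proof in the vector indeterminate $u$ of
\begin{equation*}
  \iprod{M_r(X)-M_r(X'),\,u^{\otimes r}}^2 \;\le\; O(Ck)^{r}\cdot \e^{2-2r/k}\cdot \iprod{M_2(X),\,u^{\otimes 2}}^{r}.
\end{equation*}
(The restriction $r\le k/2$ is exactly what keeps this degree-$2r$ inequality within reach of a degree-$k$ certificate.) To prove it I would (a) pass to the isotropic frame via the linear substitution $u\mapsto M_2(X)^{-1/2}u$ --- legitimate because $M_2(X)$ is invertible and sum-of-squares proofs are closed under linear substitution --- so that $\iprod{M_2(X),u^{\otimes 2}}=\norm{u}^2$ and, by comparability of covariances, $\iprod{M_2(X'),u^{\otimes 2}}\preceq 2\norm{u}^2$ as a (trivially sum-of-squares) quadratic inequality; (b) use that the moments of $X$ and $X'$ cancel on $I$, so that with $B=[n]\setminus I$ the sum-of-squares identity $(a-b)^2\preceq 2a^2+2b^2$ gives
\begin{equation*}
  \iprod{M_r(X)-M_r(X'),u^{\otimes r}}^2 \;\preceq\; 2\Bigparen{\tfrac1n\sum_{i\in B}\iprod{x_i-\mu_X,u}^{r}}^{2} \;+\; 2\Bigparen{\tfrac1n\sum_{i\in B}\iprod{x_i'-\mu_{X'},u}^{r}}^{2};
\end{equation*}
and (c) bound the two terms identically. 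For the first, the sum-of-squares form of the \Holder inequality with exponents $(k/r,\,k/(k-r))$ for the uniform distribution on $B$, followed by discarding the indices in $I$ (legal because $k$ is even), dominates it by $(2\e)^{2-2r/k}\bigparen{\E_{x\sim X}\iprod{x-\mu_X,u}^{k}}^{2r/k}$; then the degree-$k$ subgaussianity certificate of \cref{def:certifiable-subgaussianity} (with $k'=k/2$), namely $\E_{x\sim X}\iprod{x-\mu_X,u}^{k}\preceq(Ck/2)^{k/2}\norm{u}^{k}$, turns this into $(Ck/2)^{r}\norm{u}^{2r}$, so the first term is $\preceq O(Ck)^r\e^{2-2r/k}\norm{u}^{2r}$. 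The second term is handled in the same way, now invoking the subgaussianity certificate for $X'$ and $\iprod{M_2(X'),u^{\otimes 2}}\preceq 2\norm{u}^2$. Adding the two proves the lemma; using the hypothesis $Ck\,\e^{1-2/k}=O(1)$, the coefficient $O(Ck)^r\e^{2-2r/k}$ is then written in the form $\delta_r$ of \eqref{eq:multiplicative-higher-moment}. Every step above is a sum-of-squares manipulation of degree at most $k$ in $u$ (the \Holder step and the subgaussianity certificate each have degree $k$, and the substitution is linear), so the composite certificate has degree $O(k)$ --- with a little care, degree $k$ --- which is the ``furthermore'' claim.

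The last step lifts the lemma from honest multisets to the pseudo-distribution $D'$. Its sum-of-squares proof is a formal sum-of-squares inequality in the indeterminates $x_1',\dots,x_n'$ and the certificate coefficients, derived using only the axioms of $\cE$ and the fixed data (the corrupted input multiset and its own subgaussianity certificate, which comes from $X_0$ via the sampling step); hence $\pE_{D'}$ preserves it. Combining this with convexity of $t\mapsto t^2$ under pseudo-expectation and with the sampling bound $M_r(X_0)\approx M_r$, one obtains $\iprod{M_r-\hat M_r,u^{\otimes r}}^2\le \delta_r\,\iprod{M_2,u^{\otimes 2}}^r$ with an explicit degree-$k$ sum-of-squares certificate, which is \eqref{eq:multiplicative-higher-moment}. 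The step I expect to be the main obstacle is keeping every link of this chain an honest low-degree sum-of-squares inequality. In particular, the \Holder step produces the fractional power $\bigparen{\E_{x\sim X}\iprod{x-\mu_X,u}^{k}}^{2r/k}$, which is not a polynomial in $u$ by itself; the derivation has to be arranged so that this expression is never isolated but is immediately dominated --- via the subgaussianity certificate raised to the matching power --- by the polynomial $(Ck/2)^r\norm{u}^{2r}$, since one cannot raise a general sum-of-squares inequality to a fractional power and must instead only add and multiply sum-of-squares-nonnegative quantities. A second delicate point, flagged in the introduction, is that the subgaussianity certificate for $X'$ is itself a tuple of indeterminates of $\cE$ (the $\exists\,\forall$ feature), so ``using'' it inside the argument really means invoking that $\cE$ forces it to be a valid certificate and checking that this composes correctly with $\pE_{D'}$.
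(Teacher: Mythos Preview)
Your outline matches the paper's approach almost exactly: pass to the empirical distribution on the uncorrupted sample (which is $k$-certifiably $O(C)$-subgaussian once $\card S\ge (C+d)^{O(k)}$), prove a sum-of-squares identifiability lemma comparing the moments of $X$ and the variable multiset $X'$ over the constraint system $\cA_{Y,\e}\cup\cB_{C,k,\ell}$, and then take pseudo-expectations. The paper's \cref{lem:higher-moment-estimation-sos} is precisely your ``heart of the argument'' lemma, and the lift from the pseudo-distribution to an honest sum-of-squares proof in $u$ is carried out by a general transfer principle (\cref{prop:SoS-proof-for-unbounded-variable}): if $\cA\sststile{2t}{x,u}p(x,u)\ge 0$, then for any degree-$2t$ pseudo-distribution $D$ satisfying $\cA$ one has $\sststile{2t}{u}\pE_D p(x,u)\ge 0$. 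That proposition is exactly what you need for your ``second delicate point'' about composing the $X'$-subgaussianity certificate with $\pE_{D'}$.

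The one place where the paper resolves the obstacle you flagged differently is the \Holder step. You try to bound $\iprod{M_r-M'_r,u^{\otimes r}}^2$ directly and then face the fractional power $\bigparen{\E_{x\sim X}\iprod{x-\mu_X,u}^k}^{2r/k}$; you are right that one cannot take a fractional power of a sum-of-squares inequality, and the maneuver you sketch (``never isolate it, immediately dominate it'') does not avoid that problem, because the chain still passes through a non-polynomial quantity. The paper instead raises to the $(k/r)$-th power \emph{before} applying \Holder (this is why \cref{lem:higher-moment-estimation-sos} assumes $r\mid k$): it proves
\[
\cA_{Y,\e}\cup\cB_{C,k,\ell}\ \sststile{\ell}{u,w,X'}\ \iprod{M_r-M'_r,u^{\otimes r}}^{k/r}\ \le\ \delta\,\iprod{u,(M_2+M'_2)u}^{k/2},
\]
which is polynomial in $u$ throughout, and only afterwards descends to the degree-$2r$ inequality in \cref{eq:multiplicative-higher-moment} via the sum-of-squares root-extraction fact $\{f^{k}\le 1\}\sststile{k}{}\{f\le 1\}$ (\cref{lem:sos-fact-2}) combined with the covariance comparison to replace $M_2+M'_2$ by $M_2$. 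So your plan goes through once you swap the order of ``square'' and ``\Holder'': first take the $(k/r)$-th power, then apply sum-of-squares \Holder and the subgaussianity certificate at degree $k$, and finally use \cref{lem:sos-fact-2} to drop back to the degree-$2r$ bound.
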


\paragraph{Information-theoretic optimality}
We show in \cref{sec:lower-bounds} that the error guarantees in our robust moment-estimation algorithms are tight in their dependence on both $k$ and $\epsilon$.
For example, we show that there are two $k$-certifiably $O(1)$-subgaussian distributions with statistical distance $\e$ but means that are $\Omega(\sqrt{k} \epsilon^{1-1/k})$ apart.
A similar statement holds for higher-order moments.
The distributions are just mixtures of two one-dimensional Gaussians.

\paragraph{Application: independent component analysis}
As an immediate application of our robust moment estimation algorithm, we get an algorithm for Outlier Robust Independent Component Analysis. Independent component analysis (also known as blind source separation) is a fundamental problem in signal processing, machine learning and theoretical computer science with applications to diverse areas including neuroscience. Lathauwer et. al. \cite{MR2473563-DeLathauwer07}, following up on a long line of work gave algorithms for ICA based on 4th order tensor decomposition. A noise-tolerant version of this algorithm was developed in \cite{DBLP:conf/focs/MaSS16}. There is also a line of work in theoretical computer science on designing efficient algorithms for ICA \cite{DBLP:journals/corr/GoyalVX13,DBLP:conf/colt/VempalaX15}. 

In the ICA problem, we are given a non-singular\footnote{We could also consider the case that $A$ is rectangular and its columns are linearly independent.
Essentially the same algorithm and analysis would go through in this case.
We focus on the quadratic case for notational simplicity.} mixing matrix $A \in \R^{d \times d}$ with condition number $\kappa$ and a product distribution on $\R^d$. The observations come from the model $\{Ax\}$ that is, the observed samples are linear transformations (using the mixing matrix) of independent draws of the product random variable $x$. The goal is to recover columns of $A$ up to small relative error in Euclidean norm (up to signs and permutations) from samples. It turns out that information theoretic recovery of $A$ is possible whenever at most one source is non-gaussian. A widely used convention in this regard is the 4th moment assumption: for each $i$, $\E[ x_i^4] \neq 3\E[x_i^2]$. It turns out that we can assume $\E[x_i^2] = 1$ without loss of generality so this condition reduces to asserting $\E[x_i^4] \neq 3$. 

Outlier robust version of ICA was considered as an application of the outlier-robust mean and covariance estimation problems in \cite{DBLP:conf/focs/LaiRV16}. They sketched an algorithm with the guarantee that the relative error in the
columns of $A$ is at most $\epsilon \sqrt{\log{(d)}} \poly(\kappa)$.

In particular, this guarantee is meaningful only if the fraction of outliers $\epsilon \ll \frac{1}{\sqrt{\log{(d)}} \poly(\kappa)}$. Here, we improve upon their result by giving an outlier-robust ICA algorithm that recovers columns of $A$ up to an error that is \emph{independent} of both dimension $d$ and the condition number $\kappa$ of the mixing matrix $A$. 

Our algorithm directly follows by applying 4th order tensor decomposition. However, a crucial step in the algorithm involves ``whitening'' the 4th moments by using an estimate of the covariance matrix. Here, the multiplicative guarantees obtained in estimating the covariance matrix are crucial - estimates with respect to the Frobenius norm error do not give such whitening transformation in general. This whitening step essentially allows us to pretend that the mixing matrix $A$ is well-conditioned leading to no dependence on the condition number in the error.

\begin{theorem}[Robust independent component analysis]
  \label{thm:outlier-robust-ica}
  For every $C\ge 1$ and even $k\in \N$, there exists a polynomial-time algorithm that given a (corrupted) sample $S\subseteq \R^d$ outputs component estimates $\hat a_1,\ldots,\hat a_d\in \R^d$ with the following guarantees:
  Suppose $A \in \R^{d \times d}$ is a non-singular matrix with condition number $\kappa$ and columns $a_1,\ldots,a_d\in \R^d$.
  Suppose  $\mathbf x$ is a centered random vector with $d$ independent coordinates such that every coordinate $i\in [d]$ satisfies $\E[\mathbf x_i^2] = 1$, $\E[\mathbf x_i^4] -3 = \gamma \neq 0$, and  $\E[\mathbf x_i^{k}]^{1/k} \leq \sqrt{Ck}$.
  Then, if $S$ is an $\e$-corrupted sample of size $\card{S}\ge n_0$ from the distribution $\set{A \mathbf x}$, where $n_0\le (C+\kappa +d)^{O(k)}$, the component estimates satisfy with high probability
  \begin{equation}
    \max_{\pi \in S_d}\min_{i\in [d]} \iprod{A^{-1}\hat a_i, A^{-1} a_{\pi(i)}}^2 \ge 1-\delta
    \quad \text{ for } \delta < (1+\tfrac 1{\abs{\gamma}}) \cdot O(C^2 k^2) \cdot \epsilon^{1-4/k}
    \mper
  \end{equation}
\end{theorem} 

The quantity $\iprod{A^{-1}\hat a_i, A^{-1} a_{\pi(i)}}$ is closely related to the Mahalanobis distance between $\hat a_i$ and $a_{\pi(i)}$ with respect to the distribution $\set{A \mathbf x}$

\paragraph{Application: learning mixtures of Gaussians}

As yet another immediate application of our robust moment estimation algorithm, we get an outlier-robust algorithm for learning mixtures of spherical Gaussians.
Our algorithm works under the assumption that the means are linearly independent (and that the size of the sample grows with their condition number).
In return, our algorithm does not require the means of the Gaussians to be well-separated.
Our algorithm can be viewed as an outlier-robust version of tensor-decomposition based algorithms for mixtures of Gaussians \cite{MR3385380-Hsu13,DBLP:conf/stoc/BhaskaraCMV14}.

\begin{theorem}[Robust estimation of mixtures of spherical Gaussians (See Theorem \ref{thm:mixture-of-gaussian-technical} for detailed error bounds)]

Let $D$ be mixtures of $\cN(\mu_i,I)$ for $i \leq q$ with uniform\footnote{While our algorithm generalizes naturally to arbitrary mixture weights, we restrict to this situation for simplicity} mixture weights. Assume that $\mu_i$s are linearly independent and, further, assume that $\kappa$, the smallest non-zero eigenvalue of $\frac{1}{q}\sum_i \mu_i \mu_i^{\top}$ is $\Omega(1)$.

Given an $\epsilon$-corrupted sample of size $n \geq n_0= \Omega((d\log{(d)})^{k/2}/\epsilon^2)$, for every $k \geq 4$, there's a $\poly(n) d^{O(k)}$ time algorithm that recovers $\hat{\mu}_1, \hat{\mu}_2,\ldots \hat{\mu}_q$ so that there's a permutation $\pi:[q] \rightarrow [q]$ satisfying \[
\max_i \| (\frac{1}{q}\sum_i \mu_i \mu_i^{\top})^{-1/2}(\hat{\mu}_i - \mu_{\pi(i)})\| \leq O(qk)\epsilon^{1/3-1/k}.
\]

  \Dnote{}
  \Pnote{}
\end{theorem}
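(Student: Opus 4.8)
The plan is to reduce robust estimation of the means to robust estimation of low-degree moment tensors (via \cref{thm:mean-covariance-main} and \cref{thm:higher-moment-main}) followed by a robust tensor-decomposition step; the detailed error bounds would be exactly those recorded in \cref{thm:mixture-of-gaussian-technical}. First I would check that \cref{thm:higher-moment-main} even applies, i.e.\ that the uniform mixture $D=\tfrac1q\sum_i\cN(\mu_i,I)$ is $k$-certifiably $O(1)$-subgaussian: each component $\cN(\mu_i,I)$ is $k$-certifiably $2$-subgaussian by \cref{lem:subgaussianity-of-gaussian}, and by \cref{lem:certifiably-subgaussian-mixtures} a mixture of $k$-certifiably $C$-subgaussian distributions is again $k$-certifiably $O(C)$-subgaussian, the point being that subgaussianity is measured relative to the mixture's own covariance, which already absorbs the spread of the component means (so no separation hypothesis is needed). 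Granting this, \cref{thm:mean-covariance-main,thm:higher-moment-main} produce, from an $\e$-corrupted sample of size $n\ge n_0=\Omega((d\log d)^{k/2}/\e^2)$ (large enough that the empirical moments of a clean sample concentrate to the required accuracy) and in time $\poly(n)\cdot d^{O(k)}$, a mean estimate $\hat\mu$, a second-moment estimate $\hat M_2$ with a multiplicative (spectral-in-transformed-space) error guarantee, and a third-moment estimate $\hat M_3$ carrying a degree-$k$ sum-of-squares certificate of the injective-norm bound $\iprod{M_3-\hat M_3,u^{\otimes 3}}^2\le\delta_3\,\iprod{M_2,u^{\otimes 2}}^3$ with $\delta_3\le O(k)^{3/2}\e^{1-3/k}$.

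Next I would pass from these raw moments to the structured tensor whose decomposition reveals the means. For a uniform mixture of spherical Gaussians one has the exact identities $M_2=I+\Sigma_\mu$, and $M_3$ equals $T$ plus a correction term (coming from the Gaussian components) that depends only on $\bar\mu$ and the identity, where $\bar\mu=\tfrac1q\sum_i\mu_i$, $\Sigma_\mu=\tfrac1q\sum_i\mu_i\mu_i^{\top}$, and $T=\tfrac1q\sum_i\mu_i^{\otimes3}$. I would therefore set $\hat\Sigma_\mu=\hat M_2-I$ and obtain $\hat T$ by subtracting from $\hat M_3$ the correction computed from $\hat\mu$; the error of $\hat\Sigma_\mu$ then inherits the multiplicative guarantee, and the error of $\hat T$ inherits the SoS-certified injective-norm guarantee (the extra contribution from $\hat\mu$ is only $O(\e^{1-1/k})$, which is absorbed into $\delta_3$). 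Since $\Sigma=I+\Sigma_\mu$ has exactly $q$ eigenvalues at least $1+\kappa=1+\Omega(1)$ and all others equal to $1$, the multiplicative covariance estimate lets me read off the number of components $q$ and the range $U$ of $\Sigma_\mu$.

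Then I would whiten. Using $\hat\Sigma_\mu$ I compute an approximate inverse square root $W$ of $\Sigma_\mu$ supported on $U$; here the multiplicative (rather than Frobenius) guarantee is essential, and together with $\kappa=\Omega(1)$ it makes $W$ an $O(1)$-accurate whitening map, so that $\nu_i:=W\mu_i$ satisfy $\tfrac1q\sum_i\nu_i\nu_i^{\top}\approx I_q$, i.e.\ the $\nu_i$ are near-orthogonal of norm $\approx\sqrt q$, and $\hat T_W:=W^{\otimes3}\hat T\approx\tfrac1q\sum_i\nu_i^{\otimes3}$ is an approximately orthogonally-decomposable third-order tensor. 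The key structural point is that the degree-$k$ SoS certificate for the injective-norm error of $\hat T$ pushes forward along the linear substitution $u\mapsto W^{\top}u$ to a degree-$k$ SoS certificate for the injective-norm error of $\hat T_W$ — and this is exactly the form of input required by the sum-of-squares tensor-decomposition algorithm of \cite{DBLP:conf/focs/MaSS16}, which I would invoke to recover $\hat\nu_1,\dots,\hat\nu_q$. After matching by a permutation $\pi$ and rescaling, this yields $\norm{\hat\nu_i-\nu_{\pi(i)}}\le O(q)\cdot\delta_3^{1/3}\le O(qk)\cdot\e^{1/3-1/k}$, the $1/3$ power being the characteristic robustness loss of third-order tensor decomposition. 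Finally, setting $\hat\mu_i:=W^{+}\hat\nu_i$ and noting that $\norm{\Sigma_\mu^{-1/2}(\hat\mu_i-\mu_{\pi(i)})}=\norm{W(\hat\mu_i-\mu_{\pi(i)})}=\norm{\hat\nu_i-\nu_{\pi(i)}}$ up to the $O(1)$ whitening error gives the stated bound.

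The hard part will be the error bookkeeping in the last two steps: I must carry the injective-norm error \emph{together with its low-degree SoS certificate} through the non-orthogonal whitening map (so that \cite{DBLP:conf/focs/MaSS16} still applies), verify that the $O(1)$ multiplicative covariance guarantee — with $\kappa=\Omega(1)$ — really keeps every whitening-induced distortion free of the dimension $d$ and of the condition number of $\Sigma_\mu$, and track the dependence on $q$ and $k$ carefully, combining the cube-root loss of tensor decomposition with $\delta_3=O(k)^{3/2}\e^{1-3/k}$ and the $\norm{\nu_i}\approx\sqrt q$ normalization to arrive at $O(qk)\e^{1/3-1/k}$. Establishing certifiable $O(1)$-subgaussianity uniformly in the (possibly large) number of components is a secondary and more routine point.
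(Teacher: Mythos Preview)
Your overall strategy---robustly estimate low-degree moments via \cref{thm:mean-covariance-main,thm:higher-moment-main}, form the corrected and whitened third-moment tensor, and feed it to the sum-of-squares tensor decomposition of \cite{DBLP:conf/focs/MaSS16}---is exactly the paper's approach (see \cref{alg:outlier-robust-mog} and \cref{lem:tensor-is-useful-mog}). The whitening step and the push-forward of the SoS certificate under the linear substitution $u\mapsto W^{\top}u$ are also handled the same way.

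There is, however, a genuine gap in your first step. You claim the uniform mixture $D=\tfrac1q\sum_i\cN(\mu_i,I)$ is $k$-certifiably $O(1)$-subgaussian, appealing to \cref{lem:certifiably-subgaussian-mixtures}. That lemma does not say this: it treats a mixture of \emph{two} mean-zero distributions whose covariances are comparable and whose minority weight is at most $\Delta^{-k/2}$; it does not cover mixtures of components with different means, and it certainly does not give an $O(1)$ constant uniformly in $q$. In fact the $O(1)$ claim is false: already in one dimension, a $\tfrac1q$-weighted mixture with $q-1$ components at $0$ and one component at $M$ has (for large $M$) $k$-th central moment of order $M^k/q$ and variance of order $M^2/q$, so the subgaussianity ratio scales like $q^{k/2-1}$. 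The correct statement, and the one the paper uses, is \cref{lem:mixtures-of-gaussians-certifiable}: the mixture is $(k,k)$-certifiably $C$-subgaussian with $C=O(q)$.

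This matters for the bookkeeping. With $C=O(q)$ the moment-estimation errors in \cref{thm:higher-moment-main} become $\delta_r\le O(qk)^{r/2}\e^{1-r/k}$, not $O(k)^{r/2}\e^{1-r/k}$; in particular $\delta_3^{1/3}=O(\sqrt{qk})\,\e^{1/3-1/k}$, which is precisely the dominant term in \cref{thm:mixture-of-gaussian-technical}. So the $q$-dependence in the final bound comes primarily from the subgaussianity parameter of the mixture, not (as you suggest) solely from the $\norm{\nu_i}\approx\sqrt q$ normalization in the tensor-decomposition step. Once you replace the incorrect $O(1)$ by $O(q)$ and redo the arithmetic, your argument aligns with the paper's; but as written, your ``secondary and more routine point'' is the one place where the plan actually fails.
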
 

Diakonikolas et. al. \cite{DBLP:conf/focs/DiakonikolasKK016} gave an outlier-robust algorithm that learns mixtures of $q$ gaussians with error $\approx q\sqrt{\epsilon}$ in each of the recovered means.
Their algorithm is polynomial in the dimension but has an exponential dependence on number of components $q$ in the running time.
Under the additional assumption that the means are linearly independent, our algorithm (say for $k=8$) recovers similar error guarantees as theirs but runs in time polynomial in both $q$ and $d$.
The key difference is the power of our algorithm to recover a multiplicative approximation to the 4th moment tensor which allows us to apply blackbox tensor decomposition based methods and run in fixed polynomial time \cite{DBLP:conf/innovations/HsuK13}.

\section{Robust identifiability of low-degree moments}
\label{sec:robust-identifiability}

In this section, we show that low-degree moments of certifiably subgaussian distributions are identifiable in the presence of outliers.
As we will formalize in \cref{sec:moment-estim-algor}, these proofs of identifiability are captured by the sum-of-squares proof system at low-degree.
This fact is the basis of our polynomial-time algorithms for robust moment estimation (also \cref{sec:moment-estim-algor}).

We recall the basic strategy for the identifiability proofs as described in the introduction:
Let $D_0$ be a certifiably subgaussian distribution whose moments we aim to estimate.
If we take a large enough sample $X\subseteq \R^d$ from $D_0$, then with high probability, the empirical distribution $D$ (uniform over the sample) has approximately the same low-degree moments as the (population) $D_0$.
In this case, the low-degree moments of $D$ are also certifiably subgaussian\footnote{
  Roughly speaking, certifiable subgaussianity is inherited by the empirical distribution from the population distribution because it is a property of the low-degree moments of distributions.
  We prove this claim in \cref{sec:subgaussianity}.
} (for essentially the same parameters $C$ as $D_0$).
Our input consists of an $\e$-corruption $Y\subseteq \R^d$ of the sample $X$.
In order to show that the low-degree moments of $D_0$ (or equivalently $D$) are identifiable from $Y$, we analyze the following (a-priori inefficient) estimation procedure:
\begin{enumerate}
\item
  find a certifiably subgaussian distribution $D'$ that has statistical distance at most $\e$ from the uniform distribution over $Y$,
\item
  output the low-degree moments of $D'$.
\end{enumerate}

We can typically find a distribution $D'$ as above, because with high-probability $D$ satisfies the conditions.
What remains to prove is that no matter what distribution $D'$ satisfying those conditions we choose, our moment estimates have low error.
To this end, we show several concrete and quantiative instances of the following general mathematical statement:
\begin{quote}
  \itshape
  If two distributions $D$ and $D'$, whose low-degree moments are (certifiably) subgaussian, have small statistical distance, then their low-degree moments are close.
\end{quote}

Our first bound of this kind, controls the distance between the means of $D$ and $D'$ in terms of their covariances $\Sigma$ and $\Sigma'$.
(Later bounds will also relate $\Sigma$ and $\Sigma'$.)

\begin{lemma}[Robust identifiability of mean]
  \label[lemma]{lem:mean-identifiability}
  Let $k\in \N$ be even.
  Let $D,D'$ be two distributions over $\R^d$ with means $\mu,\mu'\in \R^d$ and covariances $\Sigma,\Sigma'\in \R^{d\times d}$.
  Suppose  $\normtv{D-D'}\le \e<0.9$ and that $D$ and $D'$ are $(k,\ell)$-certifiably subgaussian with parameter $C>0$.
  Then, 
  \begin{equation}
    \label{eq:mean-identifiability}
    \forall u\in \R^d.~\abs{\iprod{u,\mu - \mu'}}
    \le \delta
    \cdot \iprod{u, (\Sigma+\Sigma') u}^{1/2}\quad \text{ where }  \delta= O\Paren{\sqrt{C k\,}\cdot \e^{1-1/k}}\mper
  \end{equation}
  In particular, $\norm{\mu -\mu'}\le \delta \norm{\Sigma+\Sigma'}^{1/2}$ and $\norm{(\Sigma+\Sigma')^{-1/2}(\mu-\mu')}\le \delta$.
  (Here, $\R^d$ is endowed with the standard Euclidean norm and $\R^{d\times d}$ with the corresponding operator norm.)
\end{lemma}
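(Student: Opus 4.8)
The plan is to fix an arbitrary test direction $u\in\R^d$ and bound the scalar quantity $\iprod{u,\mu-\mu'}$; the ``for all $u$'' statement then comes for free. The underlying picture is that $D$ and $D'$ agree except on an $\e$-fraction of probability mass, so the mean can shift along $u$ by only as much as that little bit of mass allows, and a degree-$k$ subgaussian moment bound is exactly what limits how far an $\e$-fraction of mass can drag the mean. Concretely, set $\e_0:=\normtv{D-D'}\le\e$; by the maximal coupling of total variation there are distributions $D_0,D_1,D_1'$ on $\R^d$ with $D=(1-\e_0)D_0+\e_0 D_1$ and $D'=(1-\e_0)D_0+\e_0 D_1'$. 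Writing $\mu_0=\E_{D_0}x$, $\mu_1=\E_{D_1}x$, $\mu_1'=\E_{D_1'}x$, one has $\mu_1-\mu=(1-\e_0)(\mu_1-\mu_0)$ and its analogue for $D'$, hence the non-circular identity $\mu-\mu'=\e_0(\mu_1-\mu_1')=\tfrac{\e_0}{1-\e_0}\bigl((\mu_1-\mu)-(\mu_1'-\mu')\bigr)$. So it suffices to bound $\abs{\iprod{u,\mu_1-\mu}}$ and, symmetrically, $\abs{\iprod{u,\mu_1'-\mu'}}$.

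\emph{The key step.} Since $\e_0 D_1\le D$ as measures, let $g=\mathrm{d}(\e_0 D_1)/\mathrm{d}D\in[0,1]$, so $\E_D g=\e_0$ and $\e_0\iprod{u,\mu_1-\mu}=\E_D\!\bigl[g(x)\iprod{u,x-\mu}\bigr]$. Applying \Holder's inequality with exponents $\tfrac{k}{k-1}$ and $k$, and using the pointwise bound $g^{k/(k-1)}\le g$, gives $\bigl|\E_D[g\iprod{u,x-\mu}]\bigr|\le(\E_D g)^{1-1/k}\bigl(\E_D\iprod{u,x-\mu}^k\bigr)^{1/k}=\e_0^{1-1/k}\bigl(\E_D\iprod{u,x-\mu}^k\bigr)^{1/k}$ (here $k$ is even). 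The $(k,\ell)$-certifiable subgaussianity of $D$---of which only the validity of the moment inequality of \cref{def:certifiable-subgaussianity} with $k'=k/2$ is needed, not its sum-of-squares certificate---bounds $\bigl(\E_D\iprod{u,x-\mu}^k\bigr)^{1/k}\le\sqrt{Ck/2}\cdot\iprod{u,\Sigma u}^{1/2}$. Hence $\abs{\iprod{u,\mu_1-\mu}}\le\e_0^{-1/k}\sqrt{Ck/2}\cdot\iprod{u,\Sigma u}^{1/2}$, and likewise with $\Sigma'$ for $D'$.

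\emph{Assembly.} Substituting into the identity above and using $\e_0\le\e<0.9$ (so $\tfrac{\e_0^{1-1/k}}{1-\e_0}<10\,\e^{1-1/k}$) together with $\iprod{u,\Sigma u}^{1/2}+\iprod{u,\Sigma'u}^{1/2}\le\sqrt2\,\iprod{u,(\Sigma+\Sigma')u}^{1/2}$ yields $\abs{\iprod{u,\mu-\mu'}}\le 10\sqrt{Ck}\,\e^{1-1/k}\iprod{u,(\Sigma+\Sigma')u}^{1/2}$, that is, \cref{eq:mean-identifiability} with $\delta=O(\sqrt{Ck}\,\e^{1-1/k})$. The two ``in particular'' bounds follow by specializing $u$: taking the supremum over unit vectors gives $\norm{\mu-\mu'}\le\delta\norm{\Sigma+\Sigma'}^{1/2}$, and taking $u=(\Sigma+\Sigma')^{-1/2}(\mu-\mu')$ (on the range of $\Sigma+\Sigma'$, via a pseudoinverse if it is singular) and canceling one factor of $\norm{(\Sigma+\Sigma')^{-1/2}(\mu-\mu')}$ gives the Mahalanobis bound.

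The single delicate point is getting the exponent $\e^{1-1/k}$ rather than the trivial $\e$: this is precisely what the \Holder step buys, by exploiting $g\le 1$ so that $\E_D g^{k/(k-1)}\le\E_D g=\e_0$, and it is the only place the degree-$k$ moment hypothesis enters. Everything else is bookkeeping, and notably no sum-of-squares machinery appears here---that is needed only when this same argument is later carried out inside the proof system (as in \cref{sec:moment-estim-algor}).
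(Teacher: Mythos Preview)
Your proof is correct and follows essentially the same approach as the paper: couple $D$ and $D'$ optimally, apply \Holder's inequality to split off an $\e^{1-1/k}$ factor, and control the remaining $k$-th moment via subgaussianity. The one organizational difference is that the paper bounds $\bigl(\E\iprod{u,x-x'}^k\bigr)^{1/k}$ for the coupled pair $(x,x')$ via the $L_k$ triangle inequality, which reintroduces $\abs{\iprod{u,\mu-\mu'}}$ on the right and yields a self-referential inequality that is then solved using $\e<0.9$; you instead work with $\iprod{u,\mu_1-\mu}$ and $\iprod{u,\mu_1'-\mu'}$ separately (each already centered at the relevant mean), so subgaussianity applies directly and no self-referential step is needed---the condition $\e<0.9$ enters only through the harmless factor $1/(1-\e_0)$. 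This is a mild streamlining, not a different idea.
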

\begin{proof}
  Consider a coupling between the distributions $D(x)$ and $D'(x')$ such that $\Pr\set{x=x'}\ge 1-\e$.
  By \Holder's inequality, we have for every vector $u\in \R^d$,
  \begin{align}
     \abs{\iprod{u, \mu-\mu'}}= \abs{\E \iprod{u, x- x'} \Ind_{x\neq x'}}
    \le \Paren{\E \iprod{u, x- x'}^k}^{1/k}
    \underbrace{\Paren{\E \Ind_{x\neq x'}^{k/(k-1)}}^{1-1/k}}_{\le \e^{1-1/k}}
  \end{align}
  At the same time,
  \begin{align}
    \Paren{\E \iprod{u,x-x'}^{k}}^{1/k}
    & = \Paren{\E \iprod{u,x-\mu-(x'-\mu') + (\mu-\mu')}^{k}}^{1/k}\\
    & \le
      \begin{aligned}[t]
        & \Paren{\E \iprod{u,x-\mu}^{k}}^{1/k}
        +\Paren{\E \iprod{u,x'-\mu'}^{k}}^{1/k}
        +\abs{\iprod{u,\mu-\mu'}} \\
        & \text{(using triangle inequality for $L_k$-norm)}
      \end{aligned}\\
    & \le
      \begin{aligned}[t]
        & \sqrt{Ck\,} \cdot \Paren{ \bigparen{\E \iprod{u,x-\mu}^2}^{1/2}
          + \bigparen{\E \iprod{u,x'-\mu'}^2}^{1/2}}
        + \abs{\iprod{u,\mu-\mu'}}\\
        & \text{(using certifiable subgaussianity)} 
      \end{aligned}\\
    & \le O(\sqrt{C k\,})\cdot
      \iprod{u,(\Sigma+\Sigma')u}^{1/2}
      + \abs{\iprod{u,\mu-\mu'}}
      \mper
  \end{align}
  Combining the two bounds yields 
  \begin{displaymath}
    \abs{\iprod{u,\mu-\mu'}} \le \e^{1-1/k} \cdot \
    \Paren{ O(\sqrt{C k\,}) \cdot \iprod{u,(\Sigma+\Sigma')u}^{1/2}
      + \abs{\iprod{u,\mu-\mu'}}}\mper
  \end{displaymath}
  Since we assume that $\e<0.9$, we can conclude that $\abs{\iprod{u,\mu-\mu'}}\le \delta\cdot \iprod{u,(\Sigma+\Sigma')u}^{1/2}$ for $\delta\le O\Paren{\e^{1-1/k}\cdot \sqrt{C k\,}}$ as desired.
\end{proof}

Next we show that if two certifiably subgaussian distributions have small statistical distance, then their (raw) second moments are close in the \Lowner order sense.

\begin{lemma}[Robust identifiability of second moment]
  \label[lemma]{lem:multiplicative-second-moment}
  Let $D,D'$ be two distributions over $\R^d$ with second-moment matrices $M,M'\in \R^{d\times d}$.
  Suppose $\normtv{D-D'}\le \e<0.9$ and that $D$ and $D'$ are $(k,\ell)$-certifiably subgaussian with parameters $C>0$.
  Then,
  \begin{equation}
    \label{eq:multiplicative-second-moment}
    -\delta \cdot (M+M')\preceq M - M' \preceq \delta \cdot (M+M')
    \quad \text{for } \delta\le O\Paren{ Ck \cdot \e^{1-2/k}}
  \end{equation}
In particular, $\bigl(1-O(\delta)\bigr) M' \preceq M \preceq \bigl(1+O(\delta)\bigr) M'$ if $\delta<0.9$.
\end{lemma}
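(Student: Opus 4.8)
The plan is to prove the stronger pointwise statement that $\abs{\iprod{u,(M-M')u}}\le\delta\cdot\iprod{u,(M+M')u}$ for every $u\in\R^d$, which is equivalent to the two claimed \Lowner-order inequalities. As in the proof of \cref{lem:mean-identifiability}, since $\normtv{D-D'}\le\e$ we may fix a coupling of $D(x)$ and $D'(x')$ with $\Pr\set{x\ne x'}\le\e$. Write $a=\iprod{x,u}$ and $b=\iprod{x',u}$, so that $\iprod{u,Mu}=\E a^2$ and $\iprod{u,M'u}=\E b^2$; since $a=b$ whenever $x=x'$,
\[
  \iprod{u,(M-M')u} = \E\bigl[(a^2-b^2)\,\Ind_{x\ne x'}\bigr]\mper
\]
Using the elementary inequality $\abs{a^2-b^2}\le a^2+b^2$, it therefore suffices to bound $\E[a^2\,\Ind_{x\ne x'}]$ by $O(Ck)\,\e^{1-2/k}\cdot\E a^2$ and, symmetrically, $\E[b^2\,\Ind_{x\ne x'}]$ by $O(Ck)\,\e^{1-2/k}\cdot\E b^2$.

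For the first of these I would apply \Holder's inequality with exponents $k/2$ and $k/(k-2)$ to get $\E[a^2\,\Ind_{x\ne x'}]\le(\E a^k)^{2/k}\cdot(\Pr\set{x\ne x'})^{1-2/k}\le(\E a^k)^{2/k}\,\e^{1-2/k}$, and then control the raw $L_k$-norm of $\iprod{x,u}$ by its $L_2$-norm. This is where certifiable subgaussianity enters. Split $a=\iprod{x-\mu,u}+\iprod{\mu,u}$; Minkowski's inequality gives $(\E a^k)^{1/k}\le(\E\iprod{x-\mu,u}^k)^{1/k}+\abs{\iprod{\mu,u}}$, and applying \cref{def:certifiable-subgaussianity} with $k'=k/2$ (an integer, as $k$ is even) bounds the first term by $\sqrt{Ck/2}\cdot\iprod{u,\Sigma u}^{1/2}$. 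Since $\iprod{u,\Sigma u}\le\iprod{u,Mu}=\E a^2$ and $\iprod{\mu,u}^2\le\E a^2$ (Cauchy--Schwarz), and $C\ge1$, this yields $(\E a^k)^{2/k}\le O(Ck)\cdot\iprod{u,Mu}$; the same argument with $\Sigma',M',\mu'$ gives $(\E b^k)^{2/k}\le O(Ck)\cdot\iprod{u,M'u}$. Combining everything,
\[
  \abs{\iprod{u,(M-M')u}}\le O(Ck)\,\e^{1-2/k}\cdot\bigl(\iprod{u,Mu}+\iprod{u,M'u}\bigr)=O(Ck)\,\e^{1-2/k}\cdot\iprod{u,(M+M')u}\mper
\]
This is the asserted inequality with $\delta=O(Ck\cdot\e^{1-2/k})$, and since $u$ was arbitrary it is equivalent to $-\delta(M+M')\preceq M-M'\preceq\delta(M+M')$. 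For the final sentence, $M-M'\preceq\delta(M+M')$ rearranges to $(1-\delta)M\preceq(1+\delta)M'$ and the reverse inequality to $(1-\delta)M'\preceq(1+\delta)M$, so when $\delta<0.9$ we obtain $(1-O(\delta))M'\preceq M\preceq(1+O(\delta))M'$.

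I expect the only genuine subtlety to be the reduction from the \emph{raw} second moments $M,M'$ appearing in the statement to the \emph{centered} moments that certifiable subgaussianity controls --- handled above by the Minkowski split together with the two trivial dominations $\iprod{u,\Sigma u}\le\iprod{u,Mu}$ and $\iprod{\mu,u}^2\le\iprod{u,Mu}$. Everything else follows the coupling-plus-\Holder template of \cref{lem:mean-identifiability}; note that using $\abs{a^2-b^2}\le a^2+b^2$ rather than factoring $a^2-b^2=(a-b)(a+b)$ prevents a $\mu-\mu'$ term from reappearing on the right-hand side, so neither $\e<0.9$ nor an appeal to \cref{lem:mean-identifiability} is needed for the main inequality.
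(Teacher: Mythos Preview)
Your proof is correct and follows essentially the same approach as the paper: coupling, \Holder with exponents $k/2$ and $k/(k-2)$, and then the shifted-subgaussianity bound $(\E\iprod{u,x}^k)^{2/k}\le O(Ck)\cdot\E\iprod{u,x}^2$. The only cosmetic differences are that you apply the elementary pointwise bound $|a^2-b^2|\le a^2+b^2$ before \Holder (whereas the paper applies \Holder first and then the $L_{k/2}$ triangle inequality to $a^2-b^2$), and that you carry out the mean-shift argument inline via Minkowski rather than invoking the paper's separate lemma on shifts of subgaussian distributions---which proves exactly the inequality you derive.
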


\begin{proof}
  Consider a coupling between the distributions $D(x)$ and $D'(x')$ such that $\Pr\set{x=x'}\ge 1-\e$.
  Then, for every vector $u\in \R^d$,
  \begin{align*}
    \abs{\iprod{u, (M-M') u}}
    & = \abs{\E \iprod{u,x}^2 -\iprod{u,x'}^2}\\
    & \le \e^{1-2/k} \cdot \Paren{\E \Paren{\iprod{u,x}^2-\iprod{u,x'}^2}^{k/2}}^{2/k}
    \quad \text{(using \Holder inequality)}\\
    & \le \e^{1-2/k} \cdot \Paren{
      \Paren{\E\iprod{u,x}^{k}}^{2/k}
      + \Paren{\E\iprod{u,x'}^{k}}^{2/k}}
      \quad\text{(by $L_{k/2}$ triangle inequ.)}\\
    & \le \e^{1-2/k} \cdot O(C k) \cdot \Paren{
      \E\iprod{u,x}^{2}
      + \E\iprod{u,x'}^{2}}
    \quad\text{(by subgaussianity)}\\
    & = O( C k \cdot \e^{1-2/k}) \cdot \iprod{u,(M+M')u} \mcom
  \end{align*}
  which implies the desired bound $-\delta \cdot (M+M')\preceq M - M' \preceq \delta \cdot (M+M')$ for $\delta\le O\Paren{ Ck \cdot \e^{1-1/k}}$.
  For the third inequality, we use \cref{lem:shifts-of-subgaussian-are-subgaussian} (moment bounds for shifts of subgaussian distributions).

  For $\delta\le 1$, we can now rearrange this bound to get the following relationship between $M$ and $M'$,
  \begin{equation}
    \tfrac{1-\delta}{1+\delta}\cdot M' \preceq M  \preceq \tfrac{1+\delta}{1-\delta}\cdot M'\mcom
  \end{equation}
  which implies the desired bound $\bigl(1-O(\delta)\bigr) M' \preceq M \preceq \bigl(1+O(\delta)\bigr) M'$ for $\delta<0.9$.
\end{proof}

Next, we combine the previous bounds for the first and second moments in order to establish the identifiability of the covariance matrix.

\begin{corollary}[Covariance identifiability]
  \label[corollary]{cor:multiplicative-covariance-identifiability}
  Under the same conditions as the previous \cref{lem:multiplicative-second-moment},
  the covariance matrices $\Sigma$ and $\Sigma'$ of the distributions $D$ and $D'$ satisfy,
  \begin{equation}
    \label{eq:multiplicative-covariance}
    -\delta \cdot (\Sigma+\Sigma')\preceq \Sigma - \Sigma' \preceq \delta \cdot (\Sigma+\Sigma')
    \quad \text{for } \delta\le O( Ck \cdot \e^{1-2/k})\mper
  \end{equation}
\end{corollary}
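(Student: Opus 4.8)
The plan is to derive the multiplicative bound on the covariance matrices $\Sigma, \Sigma'$ by combining the second-moment bound of \cref{lem:multiplicative-second-moment} with the mean bound of \cref{lem:mean-identifiability}, using the identity $\Sigma = M - \mu\mu^{\top}$ (and likewise $\Sigma' = M' - \mu'(\mu')^{\top}$). Fix an arbitrary test vector $u \in \R^d$. Then
\[
\iprod{u, (\Sigma - \Sigma') u} = \iprod{u, (M - M') u} - \iprod{u,\mu}^2 + \iprod{u,\mu'}^2,
\]
so it suffices to bound $\iprod{u, (M-M')u}$ and $\iprod{u,\mu}^2 - \iprod{u,\mu'}^2$ each by $O(\delta) \cdot \iprod{u, (\Sigma + \Sigma') u}$. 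The first term is handled directly by \cref{lem:multiplicative-second-moment}, which gives $\abs{\iprod{u,(M-M')u}} \le O(Ck\,\e^{1-2/k}) \cdot \iprod{u,(M+M')u}$; I then need to pass from $M + M'$ back to $\Sigma + \Sigma'$, which I discuss below.

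For the second (mean) term, write $\iprod{u,\mu}^2 - \iprod{u,\mu'}^2 = \iprod{u,\mu-\mu'}\cdot\iprod{u,\mu+\mu'}$. By \cref{lem:mean-identifiability}, $\abs{\iprod{u,\mu-\mu'}} \le O(\sqrt{Ck}\,\e^{1-1/k}) \cdot \iprod{u,(\Sigma+\Sigma')u}^{1/2}$. For $\abs{\iprod{u,\mu+\mu'}}$ I would apply the triangle inequality plus the elementary fact that for any distribution $\iprod{u,\mu}^2 \le \iprod{u, M u}$, giving $\abs{\iprod{u,\mu+\mu'}} \le \iprod{u,Mu}^{1/2} + \iprod{u,M'u}^{1/2} \le O(1)\cdot\iprod{u,(M+M')u}^{1/2}$. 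Multiplying the two mean estimates yields a bound of the form $O(\sqrt{Ck}\,\e^{1-1/k}) \cdot \iprod{u,(\Sigma+\Sigma')u}^{1/2} \cdot \iprod{u,(M+M')u}^{1/2}$, and since $\e^{1-1/k} \le \e^{1-2/k}$ and $\sqrt{Ck} \le Ck$, this is at most $O(Ck\,\e^{1-2/k}) \cdot \iprod{u,(\Sigma+\Sigma')u}^{1/2}\iprod{u,(M+M')u}^{1/2}$.

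The main obstacle I anticipate is reconciling the two natural normalizations, $\Sigma + \Sigma'$ versus $M + M'$, because $M \succeq \Sigma$ but not conversely — a distribution with a large mean has a much bigger second moment than covariance. The clean fix is to absorb the means first: apply \cref{lem:mean-identifiability} to conclude that $\mu$ and $\mu'$ are so close (relative to $\Sigma + \Sigma'$) that, without loss of generality, one may reduce to the centered case by translating both distributions by (say) $\mu$, which changes neither the covariances nor the total variation distance and only perturbs the subgaussianity parameter by a constant via \cref{lem:shifts-of-subgaussian-are-subgaussian}. After centering, $\mu' - \mu$ has norm $O(\sqrt{Ck}\,\e^{1-1/k})\norm{\Sigma+\Sigma'}^{1/2}$, so $M = \Sigma$, $M' = \Sigma' + (\mu'-\mu)(\mu'-\mu)^{\top} = \Sigma' + O(Ck\,\e^{2-2/k})\cdot(\text{something}\preceq \Sigma+\Sigma')$; hence $M + M' \preceq (1 + O(Ck\,\e^{2-2/k}))(\Sigma + \Sigma') \preceq O(1)\cdot(\Sigma+\Sigma')$ under the standing assumption $Ck\,\e^{1-2/k} \le \Omega(1)$. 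Feeding this back into the two term-bounds above gives $\abs{\iprod{u,(\Sigma-\Sigma')u}} \le O(Ck\,\e^{1-2/k})\cdot\iprod{u,(\Sigma+\Sigma')u}$ for all $u$, which is exactly \cref{eq:multiplicative-covariance}. (Everything here is a scalar polynomial inequality in $u$ obtained from \Holder and triangle inequalities, so it will in fact be captured by a low-degree sum-of-squares proof, as needed later in \cref{sec:moment-estim-algor}.)
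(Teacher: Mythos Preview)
Your proposal is correct and lands on essentially the same argument as the paper: shift both distributions by $\mu$, apply \cref{lem:multiplicative-second-moment} to the shifted pair (whose second moments are $\Sigma$ and $\Sigma'+(\mu-\mu')(\mu-\mu')^\top$), and then absorb the rank-one term $(\mu-\mu')(\mu-\mu')^\top$ into $\Sigma+\Sigma'$ via \cref{lem:mean-identifiability}. The paper simply goes straight to the centering trick rather than first attempting the $\iprod{u,\mu}^2-\iprod{u,\mu'}^2=\iprod{u,\mu-\mu'}\iprod{u,\mu+\mu'}$ decomposition, but your ``clean fix'' paragraph is exactly its proof.

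Two small remarks. First, certifiable subgaussianity in this paper is defined in terms of \emph{centered} moments, so it is exactly translation-invariant and you do not need to invoke \cref{lem:shifts-of-subgaussian-are-subgaussian} or speak of perturbing the parameter. Second, the extra hypothesis $Ck\,\e^{1-2/k}\le \Omega(1)$ you invoke is not part of the corollary's assumptions, but it is harmless: when $Ck\,\e^{1-2/k}\ge 1$ the conclusion $-\delta(\Sigma+\Sigma')\preceq \Sigma-\Sigma'\preceq \delta(\Sigma+\Sigma')$ holds trivially with $\delta=1$, so one may assume it without loss of generality (the paper leaves this implicit as well).
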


As before, in the case that $\delta\le 0.9$, the above bounds imply a strong multiplicative approximation of the covariance so that $(1-\delta')\cdot \Sigma' \preceq \Sigma \preceq (1+\delta')\Sigma'$ for some $\delta'\le O(\delta)$.

\begin{proof}[Proof of \cref{cor:multiplicative-covariance-identifiability}]
  Let $D_\mu,D'_\mu$ be the distributions $D,D'$ shifted by $\mu$ so that $D_\mu$ has expectation $0$ and $D'_\mu$ has expectation $\mu'-\mu$.
  Note that $\normtv{D_\mu-D'_{\mu}}=\normtv{D-D'}\le \e$.
  By \cref{lem:multiplicative-second-moment} (second-moment identifiability applied to $D_\mu,D_\mu'$), every vector $u\in \R^d$ satisfies,
  \begin{equation}
    \abs{\iprod{u,(\Sigma-\Sigma'-\dyad{(\mu-\mu')})u}}
    \le \delta_1 \cdot \iprod{u,(\Sigma+\Sigma'+\dyad{(\mu-\mu')})u}\mper
  \end{equation}
  Here, $\delta_1\le O(Ck\cdot \e^{1-1/k})$.
  By \cref{lem:mean-identifiability} (mean identifiability),
  $\iprod{u,\dyad{(\mu-\mu')}u}\le \delta_2 \cdot \iprod{u,(\Sigma+\Sigma')u}$, where $\delta_2\le O(Ck\cdot \e^{1-1/k})$.
  Combining the inequalities gives the desired bound.
\end{proof}

Taking together \cref{cor:multiplicative-covariance-identifiability} and \cref{lem:mean-identifiability}, we get a mean estimation error that depends only on the covariance of $D$ (as opposed to the covariances of both $D$ and $D'$).

\begin{corollary}[Stronger mean identifiability]
  \label[corollary]{cor:mean-identifiability-no-covariance-bound}
  Under the same conditions as the previous \cref{lem:multiplicative-second-moment} and the additional condition $C k \cdot \e^{1-2k}\le \Omega(1)$, the means $\mu$ and $\mu'$  of the distributions $D$ and $D'$ satisfy,
  \begin{equation}
    \label{eq:mean-identifiability-final}
    \forall u\in \R^d.~\abs{\iprod{u,\mu - \mu'}}
    \le \delta
    \cdot \iprod{u, \Sigma u}^{1/2}\quad \text{ where } \delta= O\Paren{\sqrt{C k\,}\cdot \e^{1-1/k}}\mcom
  \end{equation}
  where $\Sigma\in\R^{d\times d}$ is the covariance of $D$.
  In particular, $\norm{\mu -\mu'}\le \delta \norm{\Sigma}^{1/2}$ and $\norm{\Sigma^{-1/2}(\mu-\mu')}\le \delta$.
  (Here, $\R^d$ is endowed with the standard Euclidean norm and $\R^{d\times d}$ with the corresponding operator norm.)
\end{corollary}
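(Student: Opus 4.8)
The plan is to combine the two identifiability bounds already established in this section: the mean bound from \cref{lem:mean-identifiability}, which controls $\abs{\iprod{u,\mu-\mu'}}$ in terms of the \emph{sum} $\Sigma+\Sigma'$ of the two covariances, and the covariance bound from \cref{cor:multiplicative-covariance-identifiability}, which under the extra smallness hypothesis on $\e$ lets us replace $\Sigma'$ by $\Sigma$ up to an absolute constant factor. The point of the additional condition $Ck\cdot\e^{1-2/k}\le\Omega(1)$ is precisely to make the covariance comparison meaningful (multiplicative error below $1$), so that $\Sigma+\Sigma'$ and $\Sigma$ are comparable in the \Lowner order.

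First I would invoke \cref{lem:mean-identifiability}: since $D,D'$ are $(k,\ell)$-certifiably subgaussian with parameter $C$ and $\normtv{D-D'}\le\e<0.9$, every $u\in\R^d$ satisfies
\[
  \abs{\iprod{u,\mu-\mu'}}\le \delta_0\cdot\iprod{u,(\Sigma+\Sigma')u}^{1/2},\qquad \delta_0=O\Paren{\sqrt{Ck\,}\cdot\e^{1-1/k}}\mper
\]
Next I would apply \cref{cor:multiplicative-covariance-identifiability}, which under the hypothesis $Ck\cdot\e^{1-2/k}\le\Omega(1)$ gives $(1-\delta_1)\,\Sigma'\preceq\Sigma\preceq(1+\delta_1)\,\Sigma'$ with $\delta_1\le O(Ck\cdot\e^{1-2/k})$; taking the hidden constant in $\Omega(1)$ small enough forces $\delta_1\le\tfrac12$, whence $\Sigma'\preceq 2\Sigma$ and therefore $\Sigma+\Sigma'\preceq 3\Sigma$ in the \Lowner order. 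This yields $\iprod{u,(\Sigma+\Sigma')u}^{1/2}\le\sqrt3\cdot\iprod{u,\Sigma u}^{1/2}$ for every $u$.

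Substituting the last inequality into the displayed mean bound gives $\abs{\iprod{u,\mu-\mu'}}\le\sqrt3\,\delta_0\cdot\iprod{u,\Sigma u}^{1/2}$, i.e.\ exactly \cref{eq:mean-identifiability-final} with $\delta=O\Paren{\sqrt{Ck\,}\cdot\e^{1-1/k}}$. The two ``in particular'' consequences then follow by the routine specialization already used in \cref{lem:mean-identifiability}: taking the supremum over unit vectors $u$ gives $\norm{\mu-\mu'}\le\delta\norm{\Sigma}^{1/2}$, and substituting $u=\Sigma^{-1/2}v$ (on the range of $\Sigma$, which contains $\mu-\mu'$) gives $\norm{\Sigma^{-1/2}(\mu-\mu')}\le\delta$. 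I do not expect any real obstacle here beyond tracking absolute constants; the only substantive ingredient is that the extra hypothesis $Ck\cdot\e^{1-2/k}\le\Omega(1)$ is what licenses the passage from $\Sigma+\Sigma'$ to $\Sigma$ via \cref{cor:multiplicative-covariance-identifiability}.
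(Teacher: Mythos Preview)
Your proposal is correct and is exactly the argument the paper intends: the text immediately preceding the corollary says only that it follows by ``taking together \cref{cor:multiplicative-covariance-identifiability} and \cref{lem:mean-identifiability},'' and no further proof is written out. Your write-up spells out precisely that combination, including the role of the extra hypothesis $Ck\cdot\e^{1-2/k}\le\Omega(1)$ in making the covariance comparison $\Sigma'\preceq O(1)\,\Sigma$ valid.
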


Next we bound the distances of higher order moments for certifiably subgaussian distributions that are close in statistical distance.
Compared to first and second order moments, distances for higher order moments are less standard.
Our choice for the distance is informed by the applications of independent component analysis and learning mixtures of Gaussians.
We view the order-$2r$ moments of $D$ and $D'$ as two polynomials $p$ and $p'$ of degree $2r$ and we bound their difference $p(u)-p'(u)$ relative to the variance of the distribution in direction $u$.

\begin{lemma}[Robust identifiability of higher moments]
  \label[lemma]{lem:multiplicative-higher-moment}
  Let $r\in \N$ and let $D,D'$ be two distributions over $\R^d$ with second moments $M_2,M'_2\in \R^{d^2}$  and $r$-th moments $M_{r},M'_{r}\in \R^{d^{r}}$.
  Suppose $\normtv{D-D'}\le \e<0.9$ and that $D$ and $D'$ are $(k,\ell)$-certifiably subgaussian with parameters $C>0$ and $2r \le k$.
  Then, for every $u \in \R^d$,
  \begin{equation}
    \label{eq:multiplicative-higher-moment-identifiability}
    \abs{\iprod{M_{r} - M_{r}',u^{\otimes r}}} 
    \le \delta \cdot \iprod{M_{2}+M_2',u^{\otimes 2}}^{r/2}
    \quad \text{for } \delta\le O(Ck)^{r/2} \cdot \e^{1-\frac{r}{k}}
  \end{equation}
\end{lemma}

\begin{proof}
Consider a coupling between the distributions $D(x)$ and $D'(x')$ such that $\Pr\set{x=x'}\ge 1-\e$.
  Then, for every vector $u\in \R^d$,
  \begin{align*}
    \abs{\iprod{u, (M_{r}-M_{r}') u}}
    & = \abs{\E \iprod{u,x}^{r} -\iprod{u,x'}^{r}}\\
    & \le \e^{1-r/k} \cdot \Paren{\E \Paren{\iprod{u,x}^{r}-\iprod{u,x'}^{r}}^{k/r}}^{r/k}
    \quad \text{(using \Holder's inequality)}\\
    & \le \e^{1-r/k} \cdot \Paren{
      \Paren{\E\iprod{u,x}^{k}}^{r/k}
      + \Paren{\E\iprod{u,x'}^{k}}^{r/k}}
      \quad\text{(by $L_{k/r}$ triangle inequ.)}\\
    & \le \e^{1-r/k} \cdot O(C^r k^r) \cdot \Paren{
      \Paren{\E\iprod{u,x}^{2}}^{r/2}
      + \Paren{\E\iprod{u,x'}^{2}}^{r/2}}
    \quad\text{(by subgaussianity)}
  \end{align*}
  which implies the desired bound for $\delta\le \delta\le O(Ck)^{r/2} \cdot \e^{1-\frac{r}{k}}$.
  For the third inequality, we rely on \cref{lem:subgaussian-shift-moment-bound} (moment bounds for shifts of subgaussian distributions).
 \end{proof}

 \Dnote{}

\section{Preliminaries}
\label{sec:preliminaries}

In this section, we define pseudo-distributions and sum-of-squares proofs.
See the lecture notes \cite{BarakS16} for more details and the appendix in \cite{DBLP:journals/corr/MaSS16} for proofs of the propositions appearing here.

Let $x = (x_1, x_2, \ldots, x_n)$ be a tuple of $n$ indeterminates and let $\R[x]$ be the set of polynomials with real coefficients and indeterminates $x_1,\ldots,x_n$.
We say that a polynomial $p\in \R[x]$ is a \emph{sum-of-squares (sos)} if there are polynomials $q_1,\ldots,q_r$ such that $p=q_1^2 + \cdots + q_r^2$.

\subsection{Pseudo-distributions}

Pseudo-distributions are generalizations of probability distributions.
We can represent a discrete (i.e., finitely supported) probability distribution over $\R^n$ by its probability mass function $D\from \R^n \to \R$ such that $D \geq 0$ and $\sum_{x \in \mathrm{supp}(D)} D(x) = 1$.
Similarly, we can describe a pseudo-distribution by its mass function.
Here, we relax the constraint $D\ge 0$ and only require that $D$ passes certain low-degree non-negativity tests.

Concretely, a \emph{level-$\ell$ pseudo-distribution} is a finitely-supported function $D:\R^n \rightarrow \R$ such that $\sum_{x} D(x) = 1$ and $\sum_{x} D(x) f(x)^2 \geq 0$ for every polynomial $f$ of degree at most $\ell/2$.
(Here, the summations are over the support of $D$.)
A straightforward polynomial-interpolation argument shows that every level-$\infty$-pseudo distribution satisfies $D\ge 0$ and is thus an actual probability distribution.
We define the \emph{pseudo-expectation} of a function $f$ on $\R^d$ with respect to a pseudo-distribution $D$, denoted $\pE_{D(x)} f(x)$, as
\begin{equation}
  \pE_{D(x)} f(x) = \sum_{x} D(x) f(x) \,\mper
\end{equation}
The degree-$\ell$ moment tensor of a pseudo-distribution $D$ is the tensor $\E_{D(x)} (1,x_1, x_2,\ldots, x_n)^{\otimes \ell}$.
In particular, the moment tensor has an entry corresponding to the pseudo-expectation of all monomials of degree at most $\ell$ in $x$.
The set of all degree-$\ell$ moment tensors of probability distribution is a convex set.
Similarly, the set of all degree-$\ell$ moment tensors of degree $d$ pseudo-distributions is also convex.
Key to the algorithmic utility of pseudo-distributions is the fact that while there can be no efficient separation oracle for the convex set of all degree-$\ell$ moment tensors of an actual probability distribution, there's a separation oracle running in time $n^{O(\ell)}$ for the convex set of the degree-$\ell$ moment tensors of all level-$\ell$ pseudodistributions.

\begin{fact}[\cite{MR939596-Shor87,parrilo2000structured,MR1748764-Nesterov00,MR1846160-Lasserre01}]
  \label[fact]{fact:sos-separation-efficient}
  For any $n,\ell \in \N$, the following set has a $n^{O(\ell)}$-time weak separation oracle (in the sense of \cite{MR625550-Grotschel81}):
  \begin{equation}
    \Set{ \pE_{D(x)} (1,x_1, x_2, \ldots, x_n)^{\otimes d} \mid \text{ degree-d pseudo-distribution $D$ over $\R^n$}}\,\mper
  \end{equation}
\end{fact}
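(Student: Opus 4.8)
The plan is to recognize the set in question as a bounded-dimensional spectrahedron and run the ellipsoid method on it; this is the standard argument underlying the cited references, so I will only sketch it. The first step is to pass from moment \emph{tensors} to moment \emph{matrices}. A level-$\ell$ pseudo-distribution $D$ over $\R^n$ is determined by its degree-$\ell$ moment tensor, equivalently by the vector $v \in \R^N$, $N = \binom{n+\ell}{\ell} = n^{O(\ell)}$, whose coordinates are the numbers $v_\alpha = \pE_{D(x)} x^\alpha$ over all monomials $x^\alpha$ of degree $\abs{\alpha} \le \ell$. (If one insists on working inside the full ambient tensor space $(\R^{n+1})^{\otimes \ell}$, the image lies in an easily described linear subspace on which the redundant entries agree; membership in this subspace is checked directly and yields trivial separating hyperplanes, so we henceforth work with the reduced vector $v$.) The defining conditions on $D$ are: the affine normalization $v_0 = 1$ (i.e.\ $\pE 1 = 1$), and $\pE_{D(x)} f(x)^2 \ge 0$ for every $f \in \R[x]$ with $\deg f \le \ell/2$. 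This last (infinite) family of scalar inequalities is equivalent to a single PSD constraint: let $\mathcal M(v)$ be the symmetric matrix whose rows and columns are indexed by monomials $x^\beta$ of degree $\le \ell/2$ and whose $(\beta,\gamma)$-entry is $v_{\beta+\gamma}$; then for any $f$ with coefficient vector $\hat f$ we have $\pE_{D(x)} f(x)^2 = \hat f^{\top}\mathcal M(v)\,\hat f$, so nonnegativity on all such squares is exactly $\mathcal M(v) \sge 0$. Note $\mathcal M(v)$ is $n^{O(\ell)} \times n^{O(\ell)}$ and every one of its entries is literally a coordinate of $v$, hence a linear function of $v$. Therefore the set in the statement equals $\{\, v : v_0 = 1,\ \mathcal M(v) \sge 0 \,\}$ — the nontrivial inclusion uses that every linear functional on the finite-dimensional space $\R[x]_{\le \ell}$ is a finite combination of point-evaluations, which converts a pseudo-expectation back into a finitely supported mass function — so it is a convex set cut out by one affine equation and one PSD constraint of size $n^{O(\ell)}$.

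Given this description, the weak separation oracle is immediate. On input a candidate $v \in \R^N$: first check the affine constraint $v_0 = 1$, returning a separating hyperplane if it is violated; then form $\mathcal M(v)$ and compute, to sufficient accuracy, its smallest eigenvalue $\lmin$ together with a corresponding unit eigenvector $w$, which takes time $\poly(N) = n^{O(\ell)}$. If $\lmin$ exceeds a prescribed negative tolerance, declare $v$ (approximately) feasible. Otherwise $\iprod{\mathcal M(v), w w^{\top}} = \lmin < 0$, whereas every feasible $v'$ satisfies $\iprod{\mathcal M(v'), w w^{\top}} \ge 0$ because $\mathcal M(v') \sge 0$; since $v' \mapsto \iprod{\mathcal M(v'), w w^{\top}}$ is a linear functional of $v'$ (the entries of $\mathcal M$ being coordinates of $v'$), this is a valid separating hyperplane for the oracle to output. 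This is precisely a weak separation oracle in the sense of \cite{MR625550-Grotschel81}, running in time $n^{O(\ell)}$, which is what we wanted.

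The only delicate point — and the reason the statement speaks of a \emph{weak} oracle — is that eigenvalues cannot be computed exactly and the spectrahedron $\{\, v : v_0 = 1,\ \mathcal M(v) \sge 0 \,\}$ is in general neither full-dimensional nor bounded; consequently the accept/reject threshold, the norm of the returned hyperplane, and the confinement to a sufficiently large ball all have to be set up with the standard tolerances of approximate convex optimization. I expect this bookkeeping to be the only thing requiring care, and it is carried out exactly as in \cite{MR939596-Shor87,parrilo2000structured,MR1748764-Nesterov00,MR1846160-Lasserre01}; no new ideas are needed.
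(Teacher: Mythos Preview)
The paper does not give its own proof of this fact; it is stated as a known result with citations to the standard references. Your sketch is correct and is precisely the standard argument those references contain: identify the set with the spectrahedron $\{v : v_0 = 1,\ \mathcal M(v)\sge 0\}$ via the moment matrix and separate using an approximate eigenvector of negative eigenvalue.
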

This fact, together with the equivalence of weak separation and optimization \cite{MR625550-Grotschel81} allows us to efficiently optimize over pseudo-distributions (approximately)---this algorithm is referred to as the sum-of-squares algorithm.

The \emph{level-$\ell$ sum-of-squares algorithm} optimizes over the space of all level-$\ell$ pseudo-distributions that satisfy a given set of polynomial constraints---we formally define this next.

\begin{definition}[Constrained pseudo-distributions]
  Let $D$ be a level-$\ell$ pseudo-distribution over $\R^n$.
  Let $\cA = \{f_1\ge 0, f_2\ge 0, \ldots, f_m\ge 0\}$ be a system of $m$ polynomial inequality constraints.
  We say that \emph{$D$ satisfies the system of constraints $\cA$ at degree $r$}, denoted $D \sdtstile{r}{} \cA$, if for every $S\subseteq[m]$ and every sum-of-squares polynomial $h$ with $\deg h + \sum_{i\in S} \max\set{\deg f_i,r}$,
  \begin{displaymath}
    \pE_{D} h \cdot \prod _{i\in S}f_i  \ge 0\,.
  \end{displaymath}
  We write $D \sdtstile{}{} \cA$ (without specifying the degree) if $D \sdtstile{0}{} \cA$ holds.
  Furthermore, we say that $D\sdtstile{r}{}\cA$ holds \emph{approximately} if the above inequalities are satisfied up to an error of $2^{-n^\ell}\cdot \norm{h}\cdot\prod_{i\in S}\norm{f_i}$, where $\norm{\cdot}$ denotes the Euclidean norm\footnote{The choice of norm is not important here because the factor $2^{-n^\ell}$ swamps the effects of choosing another norm.} of the cofficients of a polynomial in the monomial basis.
\end{definition}

We remark that if $D$ is an actual (discrete) probability distribution, then we have  $D\sdtstile{}{}\cA$ if and only if $D$ is supported on solutions to the constraints $\cA$.

We say that a system $\cA$ of polynomial constraints is \emph{explicitly bounded} if it contains a constraint of the form $\{ \|x\|^2 \leq M\}$.
The following fact is a consequence of \cref{fact:sos-separation-efficient} and \cite{MR625550-Grotschel81},

\begin{fact}[Efficient Optimization over Pseudo-distributions]
There exists an $(n+ m)^{O(\ell)} $-time algorithm that, given any explicitly bounded and satisfiable system\footnote{Here, we assume that the bitcomplexity of the constraints in $\cA$ is $(n+m)^{O(1)}$.} $\cA$ of $m$ polynomial constraints in $n$ variables, outputs a level-$\ell$ pseudo-distribution that satisfies $\cA$ approximately. 
\end{fact}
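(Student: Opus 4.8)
The plan is to realize the set of level-$\ell$ pseudo-distributions satisfying $\cA$ as a bounded spectrahedron equipped with an efficient weak separation oracle, and then to apply the ellipsoid-method equivalence between weak separation and weak optimization \cite{MR625550-Grotschel81}. First I would pass to the level-$\ell$ moment map: for the purposes of the constraints in $D \sdtstile{}{} \cA$, a level-$\ell$ pseudo-distribution $D$ is captured by its moment tensor $v = \pE_{D(x)}(1,x_1,\ldots,x_n)^{\otimes \ell}$, a point in $\R^N$ with $N = \binom{n+\ell}{\ell} = n^{O(\ell)}$. The feasible set is then cut out by two kinds of conditions that are convex in $v$: (i) $v$ is the moment tensor of some level-$\ell$ pseudo-distribution, and (ii) for every subset $S \subseteq [m]$ with $\sum_{i\in S}\deg f_i \le \ell$, the localizing moment matrix $M_{\prod_{i\in S} f_i}(v)$ --- whose entries are linear functions of $v$ --- is positive semidefinite. (Condition (ii) repackages all the pseudo-expectation inequalities $\pE_D h\cdot\prod_{i\in S}f_i\ge 0$ over sum-of-squares $h$ as a single PSD constraint per $S$.) Because $\{\|x\|^2 \le M\}\in\cA$, one derives $\pE_D\|x\|^{2j}\le M^j$ (multiply the constraint $M-\|x\|^2\ge 0$ by the sum-of-squares $\sum_{0\le i<j}M^{j-1-i}\|x\|^{2i}$) and then bounds each low-degree monomial via pseudo-expectation Cauchy--Schwarz, so the feasible set lies in a ball of radius $R = (nM)^{O(\ell)}$; it is convex, and nonempty by the satisfiability hypothesis.

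Next I would assemble the weak separation oracle for this set. Condition (i) is handled verbatim by \cref{fact:sos-separation-efficient}, which supplies an $n^{O(\ell)}$-time weak separation oracle for the moment tensors of level-$\ell$ pseudo-distributions. For condition (ii) the key observation is that the degree budget forces $\abs S\le \ell$ (after discarding any constant $f_i$, which is either vacuous or makes $\cA$ infeasible), so only $m^{O(\ell)}$ subsets $S$ are relevant; for each one I would form the $n^{O(\ell)}\times n^{O(\ell)}$ matrix $M_{\prod_{i\in S}f_i}(v)$, compute its least eigenvalue, and, if it is negative, output the corresponding eigenvector (pulled back to a linear functional on $\R^N$) as a separating hyperplane. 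Each such check costs $n^{O(\ell)}$, so the oracle runs in time $m^{O(\ell)}\cdot n^{O(\ell)} = (n+m)^{O(\ell)}$.

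Finally I would feed this oracle, the circumscribing radius $R$, and an accuracy parameter $\eps = 2^{-n^\ell}$ into the machinery of \cite{MR625550-Grotschel81}: for a circumscribed convex body with a weak separation oracle, the ellipsoid method runs in time polynomial in the oracle time, in $N$, and in $\log(R/\eps)$, and returns a point within distance $\eps$ of the body (the satisfiability assumption rules out the alternative of certifying that the body is $\eps$-thin). That point, read back as a level-$\ell$ pseudo-distribution, satisfies $\cA$ approximately in the sense defined above, and the overall running time is $(n+m)^{O(\ell)}\cdot \poly(N,\log(R/\eps)) = (n+m)^{O(\ell)}$.

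The step I expect to require the most care --- and the reason both hypotheses appear in the statement --- is reconciling the ellipsoid method with the geometry of the feasible set: it is neither full-dimensional (the ``$1$''-coordinate of every moment tensor is pinned) nor a priori bounded, so one cannot run a naive ellipsoid argument. Explicit boundedness is precisely what provides the circumscribing ball $B(0,R)$ that the weak-separation-to-weak-optimization reduction needs, and satisfiability is what guarantees the algorithm halts with a genuine (approximate) solution rather than an infeasibility certificate. A secondary subtlety is that the oracle of \cref{fact:sos-separation-efficient} only locates points near, not exactly on, the set of genuine pseudo-distribution moment tensors; this is exactly the slack absorbed by the word ``approximately'' in the statement, and taking $\eps$ exponentially small makes it harmless for all downstream uses.
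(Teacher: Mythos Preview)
Your proposal is correct and follows exactly the approach the paper indicates: the paper does not give a detailed proof of this fact but simply states it as a consequence of \cref{fact:sos-separation-efficient} together with the weak-separation/weak-optimization equivalence of \cite{MR625550-Grotschel81}. Your sketch is a faithful (and considerably more detailed) unpacking of that one-line justification.
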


\subsection{Sum-of-squares proofs}

Let $f_1, f_2, \ldots, f_r$ and $g$ be multivariate polynomials in $x$.
A \emph{sum-of-squares proof} that the constraints $\{f_1 \geq 0, \ldots, f_m \geq 0\}$ imply the constraint $\{g \geq 0\}$ consists of  polynomials $(p_S)_{S \subseteq [m]}$ such that
\begin{equation}
g = \sum_{S \subseteq [m]} p_S \cdot \Pi_{i \in S} f_i
\mper
\end{equation}
We say that this proof has \emph{degree $\ell$} if for every set $S \subseteq [m]$, the polynomial $p_S \Pi_{i \in S} f_i$ has degree at most $\ell$.
If there is a degree $\ell$ SoS proof that $\{f_i \geq 0 \mid i \leq r\}$ implies $\{g \geq 0\}$, we write:
\begin{equation}
  \{f_i \geq 0 \mid i \leq r\} \sststile{\ell}{}\{g \geq 0\}
  \mper
\end{equation}

Sum-of-squares proofs satisfy the following inference rules.
For all polynomials $f,g\colon\R^n \to \R$ and for all functions $F\colon \R^n \to \R^m$, $G\colon \R^n \to \R^k$, $H\colon \R^{p} \to \R^n$ such that each of the coordinates of the outputs are polynomials of the inputs, we have:

\begin{align}
&\frac{\cA \sststile{\ell}{} \{f \geq 0, g \geq 0 \} } {\cA \sststile{\ell}{} \{f + g \geq 0\}}, \frac{\cA \sststile{\ell}{} \{f \geq 0\}, \cA \sststile{\ell'}{} \{g \geq 0\}} {\cA \sststile{\ell+\ell'}{} \{f \cdot g \geq 0\}} \tag{addition and multiplication}\\
&\frac{\cA \sststile{\ell}{} \cB, \cB \sststile{\ell'}{} C}{\cA \sststile{\ell \cdot \ell'}{} C}  \tag{transitivity}\\
&\frac{\{F \geq 0\} \sststile{\ell}{} \{G \geq 0\}}{\{F(H) \geq 0\} \sststile{\ell \cdot \deg(H)} {} \{G(H) \geq 0\}} \tag{substitution}\mper
\end{align}

Low-degree sum-of-squares proofs are sound and complete if we take low-level pseudo-distributions as models.

Concretely, sum-of-squares proofs allow us to deduce properties of pseudo-distributions that satisfy some constraints.

\begin{fact}[Soundness]
  \label{fact:sos-soundness}
  If $D \sdtstile{r}{} \cA$ for a level-$\ell$ pseudo-distribution $D$ and there exists a sum-of-squares proof $\cA \sststile{r'}{} \cB$, then $D \sdtstile{r\cdot r'+r'}{} \cB$.
\end{fact}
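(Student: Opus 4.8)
The plan is to unwind both definitions and then combine them in a single substitution step, so that the argument reduces to careful degree bookkeeping on top of linearity and non-negativity of $\pE_D$. Write $\cA = \set{f_1 \ge 0,\ldots,f_m \ge 0}$ and recall that a degree-$r'$ sum-of-squares proof $\cA \sststile{r'}{} \cB$ provides, for each constraint $\set{g \ge 0}$ of $\cB$, sum-of-squares polynomials $(p^g_S)_{S\subseteq[m]}$ with $g = \sum_{S\subseteq[m]} p^g_S \prod_{i\in S} f_i$ and $\deg\bigl(p^g_S \prod_{i\in S} f_i\bigr) \le r'$ for every $S$ (in particular $\deg g \le r'$). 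To establish $D \sdtstile{rr'+r'}{}\cB$ I must check, for every subset $T$ of the constraints of $\cB$ and every sum-of-squares $h$ with $\deg h + \sum_{g\in T}\max\set{\deg g, rr'+r'} \le \ell$, that $\pE_D\, h\prod_{g\in T} g \ge 0$. The case $T=\emptyset$ is immediate since $D$ is a level-$\ell$ pseudo-distribution, so assume $T\ne\emptyset$.

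For $T \ne \emptyset$ I would substitute every sum-of-squares proof into the product $\prod_{g\in T} g$ and expand. Each resulting monomial in the $f_i$'s has the form $\prod_i f_i^{c_i}$; splitting $f_i^{c_i} = \bigl(f_i^{\lfloor c_i/2\rfloor}\bigr)^2 \cdot f_i^{c_i \bmod 2}$ and merging the square factors, the accompanying products of the $p^g_S$, and $h$ into a single sum-of-squares polynomial, I obtain a representation $h\prod_{g\in T} g = \sum_\alpha \tilde h_\alpha \prod_{i\in S_\alpha} f_i$ with each $\tilde h_\alpha$ sum-of-squares and $S_\alpha \subseteq [m]$. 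By linearity of $\pE_D$ it then suffices to show $\pE_D\, \tilde h_\alpha \prod_{i\in S_\alpha} f_i \ge 0$ for every $\alpha$, and this follows from the hypothesis $D\sdtstile{r}{}\cA$ as soon as $\deg \tilde h_\alpha + \sum_{i\in S_\alpha}\max\set{\deg f_i, r} \le \ell$. Conceptually this is just the statement that the sum-of-squares inference rules (addition, multiplication, substitution) each preserve non-negativity of pseudo-expectations while multiplying degrees, but doing the substitution by hand makes the bookkeeping transparent.

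The hard part will be exactly this degree accounting, and it is where the constants $rr'+r'$ enter. The key estimate: each factor $g \in T$ contributes unweighted degree $\deg p^g_S + \sum_{i\in S}\deg f_i \le r'$, so across $\abs{T} = t$ factors the product of the $p^g_S$'s together with all the $f_i$-powers has total unweighted degree at most $tr'$; and, assuming as we may that every $\deg f_i \ge 1$ and $r \ge 1$, we have $\max\set{\deg f_i, r} \le r\deg f_i$, so passing to the weighted count costs a factor of at most $r$. Hence $\deg\tilde h_\alpha - \deg h + \sum_{i\in S_\alpha}\max\set{\deg f_i, r} \le r\cdot tr' = trr'$, while $\deg h \le \ell - t(rr'+r')$ because $\max\set{\deg g, rr'+r'} = rr'+r'$; adding these yields $\deg\tilde h_\alpha + \sum_{i\in S_\alpha}\max\set{\deg f_i,r} \le \ell - tr' \le \ell$, as required (the degenerate cases $r = 0$ or $\deg f_i = 0$ are handled directly, without the inequality $\max\set{\deg f_i,r}\le r\deg f_i$, and only make the bound slacker). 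Summing the non-negative terms $\pE_D\, \tilde h_\alpha \prod_{i\in S_\alpha} f_i$ over $\alpha$ then gives $\pE_D\, h\prod_{g\in T} g \ge 0$, completing the proof.
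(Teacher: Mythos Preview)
The paper does not actually prove this fact; it is stated as a standard result with pointers to external references (the lecture notes \cite{BarakS16} and the appendix of \cite{DBLP:journals/corr/MaSS16}), so there is no in-paper proof to compare against.

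Your argument is correct and self-contained. Substituting each $g$'s certificate and then splitting the resulting $f_i$-powers into a square part (absorbed into $\tilde h_\alpha$) and a parity part (giving the set $S_\alpha$) is exactly the right manoeuvre to land back in the hypothesis $D\sdtstile{r}{}\cA$, which only allows \emph{subsets} of $[m]$. Your degree bookkeeping is also sound: writing $A=\deg\tilde h_\alpha-\deg h$ and $B=\sum_{i\in S_\alpha}\deg f_i$, you have $A+B\le tr'$, and since $\max\{\deg f_i,r\}\le r\deg f_i$ and $A\le rA$ when $r\ge 1$ and $\deg f_i\ge 1$, the weighted quantity $A+\sum_{i\in S_\alpha}\max\{\deg f_i,r\}\le r(A+B)\le trr'$; adding $\deg h\le \ell-t(rr'+r')$ gives $\le \ell-tr'\le\ell$ with room to spare. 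The degenerate cases $r=0$ or $\deg f_i=0$ indeed only loosen the bound, as you note. One cosmetic remark: the paper's definition of an sos proof writes that the $p_S$ are ``polynomials'' without spelling out ``sum-of-squares polynomials'', but that is clearly intended and your proof uses it correctly.
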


If the pseudo-distribution $D$ satisfies $\cA$ only approximately, soundness continues to hold if we require an upper bound on the bit-complexity of the sum-of-squares $\cA \sststile{r'}{} B$  (number of bits required to write down the proof).

In our applications, the bit complexity of all sum of squares proofs will be $n^{O(\ell)}$ (assuming that all numbers in the input have bit complexity $n^{O(1)}$).
This bound suffices in order to argue about pseudo-distributions that satisfy polynomial constraints approximately.

The following fact shows that every property of low-level pseudo-distributions can be derived by low-degree sum-of-squares proofs.

\begin{fact}[Completeness]
  \label{fact:sos-completeness}
  Suppose $d \geq r' \geq r$ and $\cA$ is a collection of polynomial constraints with degree at most $r$, and $\cA \vdash \{ \sum_{i = 1}^n x_i^2 \leq B\}$ for some finite $B$.

  Let $\{g \geq 0 \}$ be a polynomial constraint.
  If every degree-$d$ pseudo-distribution that satisfies $D \sdtstile{r}{} \cA$ also satisfies $D \sdtstile{r'}{} \{g \geq 0 \}$, then for every $\epsilon > 0$, there is a sum-of-squares proof $\cA \sststile{d}{} \{g \geq - \epsilon \}$.
\end{fact}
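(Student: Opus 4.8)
The plan is to prove this by convex duality (it is a truncated Positivstellensatz): a separating‑hyperplane argument in the finite‑dimensional space $\R[x]_{\le d}$ of polynomials of degree at most $d$. Let $\Lambda$ denote the cone of all polynomials of the form $\sum_{S\subseteq[m]}\sigma_S\prod_{i\in S}f_i$, where each $\sigma_S$ is a sum of squares and each summand has degree at most $d$; unwinding the definition of a degree‑$d$ sum‑of‑squares proof, $\cA\sststile{d}{}\{g\ge-\epsilon\}$ holds precisely when $g+\epsilon\in\Lambda$. So it suffices to prove the contrapositive: assuming $g+\epsilon\notin\Lambda$, construct a degree‑$d$ pseudo‑distribution $\pE$ with $\pE\sdtstile{r}{}\cA$ but $\pE g<0$, which contradicts the hypothesis, since that hypothesis (applied with the trivial sum‑of‑squares multiplier $h=1$, using $\deg g\le d$ and $r'\le d$) forces $\pE g\ge0$ for every degree‑$d$ pseudo‑distribution with $\pE\sdtstile{r}{}\cA$.

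Assume for the moment the slightly stronger $g+\epsilon\notin\overline\Lambda$ (closure); the gap with $g+\epsilon\notin\Lambda$ is addressed below. By the separating‑hyperplane theorem for closed convex cones there is a linear functional $L$ on $\R[x]_{\le d}$ with $L(p)\ge0$ for all $p\in\Lambda$ and $L(g+\epsilon)<0$. Taking the summands with $S=\emptyset$ shows $L(q^2)\ge0$ whenever $2\deg q\le d$, so $L$ is positive on squares; taking general $S$ shows $L$ passes every constraint test of $\cA$ in the admissible degree range. It remains to see $L(1)>0$, and this is where explicit boundedness enters: since the hypothesis provides a degree‑$\le d$ sum‑of‑squares proof of $\sum_i x_i^2\le B$, the polynomial $B-\sum_i x_i^2$ and its products with low‑degree squares lie in $\Lambda$, so $L(1)=0$ would force $L(x_i^2)=0$ for all $i$ and then, iterating pseudo‑Cauchy--Schwarz for the bilinear form $(p,q)\mapsto L(pq)$ against these localizing inequalities, $L\equiv0$ on all of $\R[x]_{\le d}$, contradicting $L(g+\epsilon)<0$. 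Hence $L(1)>0$; set $\pE:=L/L(1)$, which (after the standard realization of a positive‑on‑squares moment functional by a finitely‑supported level‑$d$ pseudo‑distribution) is a degree‑$d$ pseudo‑distribution with $\pE\sdtstile{r}{}\cA$. The hypothesis then yields $\pE\sdtstile{r'}{}\{g\ge0\}$, hence $\pE g\ge0$, whereas $\pE g=L(g)/L(1)<0$ — the desired contradiction.

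The remaining point — and the one I expect to be the main obstacle — is that the truncated cone $\Lambda$ is closed, so that "$g+\epsilon\notin\Lambda$" really gives "$g+\epsilon\notin\overline\Lambda$"; this is precisely why the additive $\epsilon$ in the conclusion cannot be dropped in general. Closedness again rests on explicit boundedness: one shows that any $p\in\Lambda$ admits a representation $p=\sum_S\sigma_S\prod_{i\in S}f_i$ in which the sum‑of‑squares multipliers $\sigma_S$ have coefficient norms bounded in terms of $\|p\|$, $B$, and the fixed data, by adding and subtracting suitable sum‑of‑squares multiples of $B-\sum_i x_i^2$ to rebalance an arbitrary representation (using that a sum of squares of degree at most $d$ bounded on a full‑dimensional ball has bounded coefficients); closedness then follows by a routine compactness argument. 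Equivalently, one can run the whole proof as strong semidefinite‑programming duality: the primal seeks a degree‑$d$ pseudo‑distribution $\pE$ with $\pE\sdtstile{r}{}\cA$ minimizing $\pE g$, the dual seeks the largest $\lambda$ with $\cA\sststile{d}{}\{g\ge\lambda\}$, and explicit boundedness makes the primal feasible region compact, closing the duality gap up to the additive $\epsilon$.
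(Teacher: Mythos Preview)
The paper does not prove this statement: it is stated as a \emph{Fact} in the preliminaries section, with a pointer to the lecture notes \cite{BarakS16} and the appendix of \cite{DBLP:journals/corr/MaSS16} for proofs. So there is no paper proof to compare against.

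Your argument is the standard one --- convex duality / separating hyperplane in the finite-dimensional space $\R[x]_{\le d}$, with closedness of the truncated quadratic module following from the Archimedean (explicit boundedness) assumption --- and it is correct in outline. A couple of remarks. First, the paper's definition of a level-$d$ pseudo-distribution is as a finitely-supported signed measure, not as an abstract linear functional; your parenthetical ``standard realization of a positive-on-squares moment functional by a finitely-supported level-$d$ pseudo-distribution'' is doing real work here and should be stated as a lemma (any linear functional on $\R[x]_{\le d}$ positive on low-degree squares arises from such a finitely-supported $D$; this is a finite-dimensional Tchakaloff/Carath\'eodory-type statement). Second, your closedness sketch is fine at the level of intuition but is the technically delicate part; if you want a self-contained proof you should either carry it out carefully or cite a reference (e.g., the compactness argument in Lasserre or Laurent's survey) rather than leave it as ``routine.'' With those two points fleshed out, your proof is complete and matches what the cited references do.
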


\section{Moment estimation algorithm}
\label{sec:moment-estim-algor}

Our robust moment estimation algorithm is a direct translation of the robust identifiability results from Section \cref{sec:robust-identifiability}. 

Let $Y=\set{y_1,\ldots,y_n}\subseteq \R^d$ be an $\e$-corrupted sample of a distribution $D$ over $\R^d$.
We consider the following system $\cA\seteq\cA_{Y,\e}$ of quadratic equations in scalar-valued variables $w_1,\ldots,w_n$ and vector-valued variables $x'_1,\ldots,x'_n$,
\begin{equation}
  \cA_{Y,\e}\colon
  \left \{
    \begin{aligned}
      &&
      \textstyle\sum_{i=1}^n w_i
      &= (1-\e) \cdot n\\
      &\forall i\in [n].
      & w_i^2
      & =w_i \\
      &\forall i\in [n].
      & w_i \cdot (y_i - x'_i)
      & = 0
    \end{aligned}
  \right \}
\end{equation}

The following fact shows that the solutions to $\cA_{Y,\e}$ correspond to all $\e$-corruptions of the (already corrupted) sample $Y$.
In particular, if $Y=\set{y_1,\ldots,y_n}$ is an $\e$-corruption of a sample $X=\set{x_1,\ldots,x_n}$ of $D$, then one solution to $\cA$ consists of the uncorrupted sample $X$ so that $x'_1=x_1,\ldots,x'_n=x_n$.

\begin{fact}
  An assignment to the variables $x'_1,\ldots,x'_n$ can be extended to a solution for $\cA_{Y,\e}$ if and only if $x'_i=y_i$ holds for all but at most an $\e$-fraction of the indices $i\in [n]$.  
\end{fact}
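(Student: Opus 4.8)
The plan is to prove both directions directly via an explicit construction of the auxiliary weight variables $w_1,\dots,w_n$. The key observation is that the three families of constraints in $\cA_{Y,\e}$ decouple cleanly: the equations $w_i^2 = w_i$ force each $w_i\in\{0,1\}$; the equations $w_i\cdot(y_i - x'_i)=0$ then say that $w_i=1$ implies $x'_i=y_i$; and the equation $\sum_i w_i = (1-\e)n$ says that exactly $(1-\e)n$ of the $w_i$ equal $1$, equivalently that exactly $\e n$ of them equal $0$.

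For the ``only if'' direction, I would start from an arbitrary solution $(w,x')$ of $\cA_{Y,\e}$, deduce $w_i\in\{0,1\}$ from $w_i^2=w_i$, and let $Z=\{i : w_i=0\}$, so that $|Z| = n - \sum_i w_i = \e n$. For every $i\notin Z$ we have $w_i=1$, and the third constraint gives $y_i - x'_i = 0$. Hence $\{i : x'_i\neq y_i\}\subseteq Z$, which has $\e n$ elements, so $x'_i=y_i$ for all but at most an $\e$-fraction of the indices.

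For the ``if'' direction, suppose $x'_i=y_i$ for all $i$ outside a set $B$ with $|B|\le \e n$, and put $G=[n]\setminus B$, so $|G|\ge (1-\e)n$. I would choose any subset $S\subseteq G$ with $|S|=(1-\e)n$ (possible since $|G|\ge (1-\e)n$), set $w_i=1$ for $i\in S$ and $w_i=0$ otherwise, and verify that this extends $x'$ to a solution of $\cA_{Y,\e}$: indeed $\sum_i w_i = |S| = (1-\e)n$; $w_i^2=w_i$ since $w_i\in\{0,1\}$; and $w_i\cdot(y_i-x'_i)=0$ because for each $i$ either $w_i=0$, or $i\in S\subseteq G$ and then $x'_i=y_i$.

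There is essentially no hard step here; the argument is purely combinatorial once the constraints are unpacked. The only subtlety I would flag is an integrality point: the constraint $\sum_i w_i = (1-\e)n$ can be satisfied by $0/1$-valued $w_i$ only if $(1-\e)n$ (equivalently $\e n$) is a nonnegative integer. I would therefore adopt the standing convention that $\e n\in\N$ (and read ``$\e$-fraction of the indices'' as ``at most $\e n$ indices'' accordingly), under which both implications above go through verbatim.
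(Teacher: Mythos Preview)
Your proposal is correct and matches the paper's own approach, which only sketches the ``if'' direction in one line by choosing $w_i=1$ on $(1-\e)n$ indices where $x'_i=y_i$ and $w_i=0$ elsewhere; you simply make both directions explicit and add the integrality caveat about $\e n$.
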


Note that in order to extend such an assignment, we can choose $w_i=1$ for $(1-\e)n$ indices $i\in [n]$ such that $x_i=y_i$ and $w_i=0$ for the remaining indices.

Furthermore, we consider the following system of polynomial equations $\cB\seteq\cB_{C,k}$ in vector-valued variables $x'_1,\ldots,x'_n$, vector-valued variables $p_1,\ldots,p_{k/2}$, and matrix-valued variables $Q_1,\ldots,Q_{k/2}$.
The idea is that $p_1,\ldots,p_{k/2}$ and $Q_1,\ldots,Q_{k/2}$ are degree-$\ell$ sum-of-squares proofs of the inequalities \cref{eq:low-degree-subgaussian} ($C$-subgaussianity of moments up to degree $k$) for the uniform distribution over $x'_1,\ldots,x'_n$.
\begin{equation}
  \label{eq:sos-subgaussian-constraints}
  \cB_{C,k,\ell}\colon
  \left\{
  \begin{aligned}
    \forall k'\in \N_{\le k/2}.~
    \tfrac 1 n \sum_{i=1}^n \iprod{x'_i,u}^{2k'}
    = \Bigparen{Ck'\cdot\tfrac 1 n \sum_{i=1}^n \iprod{x'_i,u}^{2}}^{k'}
    & + \iprod{p_{k'},(1,u)^{\otimes \ell}}\cdot (1-\norm{u}^2)\\
    & - \norm{Q_{k'} \cdot (1,u)^{\otimes \ell/2}}^2
  \end{aligned}
  \right\}
\end{equation}
We emphasize that in the above system of equations the vector-valued variable $u=(u_1,\ldots,u_d)$ is not part of the solution (i.e., we are not seeking an assignment for it).
Instead, each equation in $\cB_{C,k,\ell}$ is about two polynomials with indeterminates $u_1,\ldots,u_d$, whose coefficients are polynomials in the variables $x'_1,\ldots,x'_n$, $p_1,\ldots,p_{k/2}$, and $Q_1,\ldots,Q_{k/2}$.
In particular, we could eliminate the variable $u$ from the equations in $\cB_{C,k,\ell}$ by explicitly writing down the expressions for the coefficients of the polynomials in $u$.
We refrain from doing so because these expressions would be rather complicated and would obfuscate the meaning of the equations.

The following fact shows that the solutions to the system $\cB_{C,k}$ correspond to certifiably subgaussian distributions.

\begin{fact}
  An assignment to the vector-valued variables $x'_1,\ldots,x'_n$ can be extended to a solution for $\cB_{C,k,\ell}$ if and only if the uniform distribution over $x'_1,\ldots,x'_n$ is $(k,\ell)$-certifiably subgaussian with parameter $C$.
\end{fact}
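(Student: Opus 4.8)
The plan is to read $\cB_{C,k,\ell}$ for what it is: a verbatim transcription into polynomial equations of the statement that, for every $k'\le k/2$, the defining inequality \cref{eq:low-degree-subgaussian} of \cref{def:certifiable-subgaussianity}, instantiated for the uniform distribution $D$ over $x_1',\dots,x_n'$, admits a degree-$\ell$ sum-of-squares proof, with the vector $p_{k'}$ and the matrix $Q_{k'}$ serving as the data of that proof. Granting this reading, both directions of the Fact are pure definition-unwinding; the only genuine work is the bookkeeping that identifies ``a real vector $p_{k'}$ together with a real matrix $Q_{k'}$'' with ``a Positivstellensatz certificate of degree $\ell$''. (We read certifiable subgaussianity here with the moments that literally appear in \cref{eq:sos-subgaussian-constraints}, that is, the raw moments $\iprod{x'_i,u}^{2k'}$; this agrees with \cref{eq:low-degree-subgaussian} once the $x'_i$ are taken to be centered, which is the regime in which the system is used.)

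For the ``only if'' direction, suppose the given assignment to $x_1',\dots,x_n'$ extends to a solution of $\cB_{C,k,\ell}$ with witnesses $p_1,\dots,p_{k/2}$ and $Q_1,\dots,Q_{k/2}$. Fix $k'\le k/2$ and move every term of the corresponding equation of \cref{eq:sos-subgaussian-constraints} to one side, obtaining the polynomial identity in the indeterminates $u$
\begin{equation*}
  \Bigparen{Ck'\cdot\tfrac1n\sum_{i=1}^n\iprod{x'_i,u}^2}^{k'} - \tfrac1n\sum_{i=1}^n\iprod{x'_i,u}^{2k'}
  = \iprod{p_{k'},(1,u)^{\otimes\ell}}\cdot(\norm{u}^2-1) + \snorm{Q_{k'}\cdot(1,u)^{\otimes\ell/2}} .
\end{equation*}
The right-hand side has precisely the form $q(u)\cdot(\norm{u}^2-1)+\sum_\tau r_\tau(u)^2$, where $q(u)=\iprod{p_{k'},(1,u)^{\otimes\ell}}$ and the $r_\tau(u)$ are the coordinates of the vector $Q_{k'}\cdot(1,u)^{\otimes\ell/2}$, each of degree at most $\ell/2$; that is, it is exactly a degree-$\ell$ sum-of-squares proof that the left-hand polynomial is nonnegative on $\bbS^{d-1}$ (the top-degree parts of $q$ being forced to cancel by the identity, so that the multiplier of $\norm{u}^2-1$ effectively has degree at most $\ell-2$). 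Since this holds for every $k'\le k/2$, the distribution $D$ is $(k,\ell)$-certifiably $C$-subgaussian.

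For the ``if'' direction, assume $D$ is $(k,\ell)$-certifiably $C$-subgaussian and fix $k'\le k/2$. By \cref{def:certifiable-subgaussianity} there exist a polynomial $q_{k'}$ of degree at most $\ell-2$ and polynomials $r_{k',1},\dots,r_{k',t}$ of degree at most $\ell/2$ with
\begin{equation*}
  \Bigparen{Ck'\cdot\tfrac1n\sum_{i=1}^n\iprod{x'_i,u}^2}^{k'} - \tfrac1n\sum_{i=1}^n\iprod{x'_i,u}^{2k'}
  = q_{k'}(u)\cdot(\norm{u}^2-1) + \sum_{\tau=1}^t r_{k',\tau}(u)^2 .
\end{equation*}
Let $p_{k'}$ be the coefficient vector of $q_{k'}$ written against the monomial tuple $(1,u)^{\otimes\ell}$ (zero-padded, which is legitimate since $\deg q_{k'}\le\ell$), and let $Q_{k'}$ be the matrix whose rows are the coefficient vectors of $r_{k',1},\dots,r_{k',t}$ against $(1,u)^{\otimes\ell/2}$. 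Then $\iprod{p_{k'},(1,u)^{\otimes\ell}}=q_{k'}(u)$ and $\snorm{Q_{k'}(1,u)^{\otimes\ell/2}}=\sum_{\tau}r_{k',\tau}(u)^2$, so this assignment satisfies the $k'$-th equation of \cref{eq:sos-subgaussian-constraints}; doing this for all $k'\le k/2$ extends $x_1',\dots,x_n'$ to a full solution of $\cB_{C,k,\ell}$.

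The one point that needs care is that $Q_{k'}$ is a matrix variable of a fixed shape, whereas the number $t$ of squares produced by an arbitrary sum-of-squares certificate is a priori unbounded. I would normalize this by passing to the Gram form: write $\sum_\tau r_{k',\tau}(u)^2=\iprod{v(u),\,G\,v(u)}$ with $G\succeq 0$ and $v(u)$ the vector of all monomials of degree at most $\ell/2$, then diagonalize $G=\sum_j\lambda_j\dyad{w_j}$ with $\lambda_j\ge 0$ and set $r_\tau(u)=\sqrt{\lambda_\tau}\,\iprod{w_\tau,v(u)}$. This makes $t$ at most the number of monomials of degree $\le\ell/2$, so $Q_{k'}$ may be taken to be a square matrix of exactly that size; the analogous and equally routine remark for $p_{k'}$ is that every polynomial of degree at most $\ell$ is a linear functional of $(1,u)^{\otimes\ell}$. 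With these normalizations the correspondence between solutions of $\cB_{C,k,\ell}$ extending a given $x_1',\dots,x_n'$ and degree-$\ell$ certificates of subgaussianity for the uniform distribution over $x_1',\dots,x_n'$ is exact, and the Fact follows.
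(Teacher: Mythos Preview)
Your proof is correct. The paper states this Fact without proof, treating it as immediate from the definitions; your write-up is exactly the intended definition-unwinding, and you handle the two small bookkeeping issues (bounding the number of squares via the Gram form so that $Q_{k'}$ has fixed shape, and the degree of the multiplier $q_{k'}$) cleanly. Your parenthetical about raw versus centered moments is also a fair observation: the system $\cB_{C,k,\ell}$ is written with raw moments $\iprod{x'_i,u}$ while \cref{def:certifiable-subgaussianity} uses centered ones, and the two agree in the intended setting where the $x'_i$ (or the underlying distribution) are centered.
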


The following algorithm is the key ingredient of our efficient outlier-robust estimators.

\begin{mdframed}
  \begin{algorithm}[Algorithm for moment estimation via sum-of-squares]
    \label[algorithm]{alg:moment-estimation-program}\mbox{}
    \begin{description}
    \item[Given:]
      $\e$-corrupted sample $Y=\set{y_1,\ldots,y_n}\subseteq \R^d$ of a $(k,\ell)$-certifiably $0.9C$-subgaussian distribution $D_0$ over $\R^d$
    \item[Estimate:]
      moments $M_1,\ldots,M_k$ with $M_r=\E_{D_0(x)}x^{\otimes r}$
    \item[Operation:]\mbox{}
      \begin{enumerate}
      \item 
        find a level-$\ell$ pseudo-distribution $D'$ that satisfies $\cA_{Y,\e}$ and $\cB_{C,k,\ell}$
      \item
        output moment-estimates $\hat M_1,\ldots,\hat M_k$ with $\hat M_r = \pE_{D'(x')} \Brac{\tfrac 1 n \sum_{i=1}^n (x'_i)^{\otimes r}}$
      \end{enumerate}
    \end{description}    
  \end{algorithm}
\end{mdframed}

\Dnote{}

The following lemma is the sum-of-squares analog of \cref{lem:mean-identifiability}.%

\begin{lemma}
  \label[lemma]{lem:mean-estimation-sos}
  Let $D$ be the uniform distribution over $X \subseteq \R^d$ with mean $\mu\in \R^d$ and covariance $\Sigma \in \R^{d\times d}$.
  Suppose $D$ is $(k,\ell)$-certifiably subgaussian with parameter $C>0$ and that $Ck \cdot \e^{1-2/k}\le \Omega(1)$.\footnote{
    This notation means that we require $C k \e^{1-2/k} \le \gamma$ for some absolute constant $\gamma>0$, e.g., $\gamma=10^{-10}$.}
  Let $Y$ be an $\epsilon$-corruption of $X$ with $\e<0.01$. 
  Let $\mu',\Sigma'$ be the polynomials $\mu'(X')=\tfrac 1n \sum_{i=1}^n x'_i$ and $\Sigma'(X')=\tfrac 1n \sum_{i=1}^n \dyad{(x'_i)} - \dyad{(\mu')}$ in the vector-valued variables $x'_1,\ldots,x'_n$.
  Then, for every vector $u\in \R^d$,
  \begin{equation}
    \cA_{Y,\e} \cup \cB_{C,k,\ell}
    \sststile{\ell}{X'} \Set{ \iprod{u,\mu-\mu'(X')}^{k} \le \delta^k \iprod{u,(\Sigma + \Sigma'(X')) u}^{k/2}}\,,
  \end{equation}
  where $\delta= O\Paren{\sqrt{C k\,}\cdot \e^{1-1/k}}$.
  Furthermore, the bit complexity of the sum-of-squares proof is polynomial.
\end{lemma}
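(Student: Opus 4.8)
The plan is to mimic the proof of \cref{lem:mean-identifiability} step by step, replacing each analytic inequality with its sum-of-squares counterpart and each ``coupling'' argument with an algebraic manipulation of the indicator variables $w_i$. Concretely, working inside the proof system $\cA_{Y,\e}\cup\cB_{C,k,\ell}$ with indeterminates $X'=(x_1',\ldots,x_n')$ (and the auxiliary proof-variables $w_i,p_{k'},Q_{k'}$), I would first observe that the constraints $w_i^2=w_i$ and $w_i(y_i-x_i')=0$, combined with the analogous relation for the uncorrupted sample $X$ (which the algorithm doesn't see but which \emph{we} may use in the analysis since $Y$ is an $\e$-corruption of $X$), give a sos-identity expressing $\mu-\mu'(X')=\tfrac1n\sum_i (x_i - x_i')$ as $\tfrac 1n \sum_i (1-z_i)(x_i-x_i')$ for a $0/1$-valued combination $z_i$ supported on at most $2\e n$ indices. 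The key point is that $\tfrac1n\sum_i(1-z_i)\le 2\e$ as an sos-fact, so this is the algebraic stand-in for $\E \Ind_{x\neq x'}\le \e$.

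Next I would invoke the sum-of-squares version of \Holder's inequality (in the form: for $0/1$ weights $b_i$ with $\tfrac1n\sum b_i\le \e'$, one has $\sststile{k}{}\ (\tfrac1n\sum_i b_i t_i)^k \le (\e')^{k-1}\cdot \tfrac1n\sum_i b_i t_i^k$ for any indeterminates $t_i$, and more generally the power-mean/$L_k$-triangle inequalities have bounded-degree sos proofs — these are standard and appear in \cite{BarakS16,DBLP:journals/corr/MaSS16}). Applying this with $t_i = \iprod{u, x_i - x_i'}$ yields $\iprod{u,\mu-\mu'(X')}^k \le (2\e)^{k-1}\cdot \tfrac1n\sum_i \iprod{u,x_i-x_i'}^k$ as a degree-$O(k)$ sos consequence. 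Then the $L_k$-triangle inequality (again sos, degree $O(k)$) bounds $\tfrac1n\sum_i\iprod{u,x_i-x_i'}^k$ in terms of $\bigl[(\tfrac1n\sum_i\iprod{u,x_i-\mu}^{2})^{1/2}+(\tfrac1n\sum_i\iprod{u,x_i'-\mu'(X')}^2)^{1/2}+\abs{\iprod{u,\mu-\mu'(X')}}\bigr]^k$, and here the certifiable-subgaussianity constraints — those are exactly what $\cB_{C,k,\ell}$ encodes via the $p_{k'},Q_{k'}$ — let us replace each $L_k$ moment by $\sqrt{Ck}$ times the corresponding $L_2$ moment. That is, $\cB_{C,k,\ell}\sststile{\ell}{} \tfrac1n\sum_i\iprod{x_i',u}^{2k'}\le (Ck'\cdot \tfrac1n\sum_i\iprod{x_i',u}^2)^{k'}$ essentially by construction (one must pass from raw moments to centered moments, which is the sos-shift lemma \cref{lem:shifts-of-subgaussian-are-subgaussian}/\cref{lem:subgaussian-shift-moment-bound}, and from $x_i'$ to $x_i$ using that $D$ is certifiably subgaussian). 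Recognizing $\tfrac1n\sum_i\iprod{u,x_i-\mu}^2=\iprod{u,\Sigma u}$ and likewise for the primed quantities, one assembles $\iprod{u,\mu-\mu'}^k\le (2\e)^{k-1}\cdot \bigl(O(\sqrt{Ck})\iprod{u,(\Sigma+\Sigma')u}^{1/2}+\abs{\iprod{u,\mu-\mu'}}\bigr)^k$, and finally absorbs the $\abs{\iprod{u,\mu-\mu'}}$ term on the right into the left using $\e<0.01$ (so $(2\e)^{(k-1)/k}<1/2$ say) — this last absorption step requires a short sos argument (expanding the binomial and bounding cross terms, or using convexity of $t\mapsto t^k$), giving the claimed $\delta = O(\sqrt{Ck}\cdot\e^{1-1/k})$.

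The main obstacle I anticipate is bookkeeping rather than conceptual: (i) making the ``coupling'' rigorous at the sos level — one cannot literally couple $D$ and $D'$, so one has to argue purely in terms of the $w_i$ (from $\cA_{Y,\e}$) together with the externally-known fact that $Y$ and $X$ agree on a $(1-\e)$-fraction, and carefully track that the relevant weighted sum of disagreement indicators is at most $2\e n$ with a bounded-degree sos certificate; and (ii) verifying that the $L_k$-triangle inequality and \Holder steps, which are clean analytically, genuinely have sos proofs of degree only $\ell=k$ (not higher) when applied to the linear forms $\iprod{u,x_i-x_i'}$ — here one leans on the fact that for a \emph{single} variable $t$ (or for sums $\sum a_i t_i$ treated via the power-mean inequality) these are low-degree, and on the substitution rule to push them through $t_i\mapsto\iprod{u,x_i-x_i'}$, which multiplies the degree by $\deg\iprod{u,x_i-x_i'}=1$ in $X'$ and keeps degree $k$ in $u$. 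The final absorption step and the conversion between raw and centered moments are where sign/degree errors are easiest to make, so I would handle those explicitly. The polynomial bit-complexity claim follows because every intermediate sos proof invoked has explicitly bounded coefficients (the constants $C,k$ and the entries of the sample, all of polynomial bit-length), and there are only $O(k)$ multiplication/substitution steps.
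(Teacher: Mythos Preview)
Your proposal is correct and follows essentially the same approach as the paper: introduce auxiliary indicators $z_i=r_i w_i$ (with $r_i$ encoding the externally-known agreement between $X$ and $Y$) to get $\tfrac1n\sum_i(1-z_i)\le 2\e$ as a degree-$2$ sos fact, apply the sos \Holder inequality to bound $\iprod{u,\mu-\mu'}^k$, split $x_i-x_i'=(x_i-\mu)-(x_i'-\mu')+(\mu-\mu')$, invoke certifiable subgaussianity on both pieces, and absorb the residual $\iprod{u,\mu-\mu'}^k$ term using $\e<0.01$. The one place to tighten your sketch is the ``$L_k$ triangle inequality'' step: as written you pass through expressions like $(\tfrac1n\sum_i\iprod{u,x_i-\mu}^2)^{1/2}$ and $\abs{\iprod{u,\mu-\mu'}}$, which are not polynomials and so cannot appear in an sos derivation; the paper sidesteps this by staying at the $k$-th-power level throughout, using $\sststile{k}{}\,(a+b+c)^k\le 3^k(a^k+b^k+c^k)$ (\cref{lem:binomial}) in place of the analytic triangle inequality, and then applying subgaussianity directly as $\tfrac1n\sum_i\iprod{u,x_i-\mu}^k\le (Ck)^{k/2}\iprod{u,\Sigma u}^{k/2}$.
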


\begin{proof}
  Choose $r_1,\ldots,r_n\in \set{0,1}$ such that $\sum_i r_i=(1-\e)n$ and $r_i(x_i - y_i) = 0$ for all $i\in [n]$.
  (Such a choice exists because $Y$ is an $\epsilon$-corruption of $X$.)

  Letting $z_i=r_i\cdot w_i$, we claim that 
  \begin{displaymath}
    \cA_{Y,\e}\sststile{2}{w}
    \Set{\sum_{i=1}^n z_i \ge (1-2 \epsilon)n}
    \,.
  \end{displaymath}
  Indeed, since $\set{w_i^2=w_i} \sststile{2}{w_i} \set{(1-z_i)\le (1-w_i)+(1-r_i)}$, we have $\cA_{Y,\e}\sststile{2}{} \set{\sum_{i=1}^n (1-z_i)\le 2\e n}$

  By the sum-of-squares version of \Holder's inequality, we have for every vector $u\in \R^d$, using that $\cA_{Y,\e}\sststile{k}{} \set{(1-z_i)^{k-1} = (1-z_i)}$.
  \begin{align}
    \cA_{Y,\e} \cup \cB_{C,k,\ell} \sststile{\ell}{w,x'}
    \begin{aligned}[t]
      \Bigl \{
      \iprod{u, \mu-\mu'}^k
      &= \Bigparen{\tfrac{1}{n} \sum_i \iprod{u, x_i- x_i'} (1-z_i)}^k\\
      & \le \Bigparen{\tfrac{1}{n} \sum_i \iprod{u, x_i- x_i'}^k} \Bigparen{\tfrac{1}{n} \sum_i (1-z_i)^{k}}^{k-1}\\
      &\leq \Bigparen{\tfrac{1}{n} \sum_i \iprod{u, x_i- x_i'}^k} (2\epsilon)^{k-1}
      \Bigr \}
      \,.
    \end{aligned}
  \end{align}
  For brevity, we will use $\E_{i\in [n]}$ to denote the average over indices $i\in[n]$.
  At the same time, using the fact that $\sststile{k}{a,b,c} \Set{(a+b+c)^k \leq 3^k(a^k + b^k + c^k)}$ (similar to \cref{lem:binomial}),
  \begin{align}
    \cA_{Y,\e} \cup \cB_{C,k,\ell} \sststile{\ell}{w,x'}
    \begin{aligned}[t]
      \Bigl \{
    &\Bigparen{\E_{i\in[n]}  \iprod{u,x_i-x_i'}^{k}} \\
    & = \Bigparen{\E_{i\in[n]}  \iprod{u,x_i-\mu-(x_i'-\mu') + (\mu-\mu')}^{k}}\\
    & \le 3^k\Bigparen{\tfrac{1}{n} \sum_i \iprod{u,x_i-\mu}^{k}}
        +3^k \Bigparen{\tfrac{1}{n} \sum_i \iprod{u,x_i'-\mu'}^{k}}
        +3^k \iprod{u,\mu-\mu'}^k\\
    & \le
      \begin{aligned}[t]
        & 3^k \bigparen{Ck\E_{i\in[n]} \iprod{u,x_i-\mu}^2}^{k/2}
          + 3^k\bigparen{Ck\E_{i\in[n]} \iprod{u,x'_i-\mu'}^2}^{k/2}
          + 3^k \iprod{u,\mu-\mu'}^k\\
        & \text{(using certifiable subgaussianity)} 
      \end{aligned}\\
    & \le O(C k)^{k/2}\cdot
      \iprod{u,(\Sigma+\Sigma')u}^{k/2}
      + 3^k\iprod{u,\mu-\mu'}^{k}
      \mper
      \Bigr\}
    \end{aligned}
  \end{align}
  Combining the two bounds yields 
  \begin{displaymath}
    \iprod{u,\mu-\mu'}^k \le (2\e)^{k-1} \cdot \
    \Paren{ O(C k)^{k/2} \cdot \iprod{u,(\Sigma+\Sigma')u}^{k/2}
      + 3^k\iprod{u,\mu-\mu'}^k}\mper
  \end{displaymath}
  Since we assume that $\e<0.01$, we can conclude that 
  \begin{equation}
  \label{eq:part-1-mean-estimation}
  \cA_{Y,\e} \cup \cB_{C,k,\ell} \sststile{\ell}{w,x'} \iprod{u,\mu-\mu'}^k\le \delta_1^k \cdot \iprod{u,(\Sigma+\Sigma')u}^{k/2},
  \end{equation} 
  for $\delta_1\le O\Paren{\e^{1-1/k}\cdot \sqrt{C k\,}}$ as desired.

\end{proof}

The following lemma is the sum-of-squares analog of \cref{lem:multiplicative-second-moment}.

\begin{lemma}[Robust identifiability of second moment]
  \label[lemma]{lem:multiplicative-second-moment-sos}
  Let $D$ be the uniform distribution over $X \subseteq \R^d$ with second moment $M \in \R^{d\times d}$.
  Suppose $D$ is $(k,\ell)$-certifiably subgaussian with parameter $C>0$ with $k$ divisible by $4$ and that $Ck \cdot \e^{1-2/k}\le \Omega(1)$.
  Let $Y$ be an $\epsilon$-corruption of $X$. Let $M'$ be the quadratic polynomial $\frac{1}{n} \sum_i\dyad{x'_i}$ in vector-valued variables $x'_1, x'_2, \ldots, x'_n.$ Then, for every $u \in \R^d$,
  \begin{equation}
    \label{eq:multiplicative-second-moment-sos}
    \cA_{Y,\e} \cup \cB_{C,k,\ell}
    \sststile{\ell}{X'} \Set{ \iprod{u,(M-M')u}^{k/2} \leq \delta^{k/2} \iprod{u,M+M')u}^{k/2}  }
    \quad \text{for } \delta\le O\Paren{ Ck \cdot \e^{1-2/k}}
  \end{equation}
  Furthermore, if $\delta\le 0.0001$,
  \[
    \cA_{Y,\e} \cup \cB_{C,k,\ell}
    \sststile{\ell}{X'}
    \Set{
      1-\delta'  \le \tfrac 1 {\iprod{u,Mu}} \iprod{u,M'u} \le  1+\delta'
    } \quad \text{for }\delta'=100\delta\,.
  \]
  Furthermore, the bit complexity of the sum-of-squares proof is polynomial.
\end{lemma}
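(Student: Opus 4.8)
The plan is to prove the degree-$k/2$ (i.e.\ $k/2$-th power) bound as the sum-of-squares lift of \cref{lem:multiplicative-second-moment}, following verbatim the template already used in \cref{lem:mean-estimation-sos}, and then to extract the multiplicative ``Furthermore'' clause by a short explicit root-extraction inside the proof system. Since both sides of \cref{eq:multiplicative-second-moment-sos} are homogeneous of degree $k$ in $u$, it suffices to prove the bound for $u\in\bbS^{d-1}$ (and afterwards multiply the proof through by the positive constant $\norm{u}^k$); restricting to the sphere is what makes $\cB_{C,k,\ell}$ directly usable, since there the term $\iprod{p_{k'},(1,u)^{\otimes\ell}}(1-\norm{u}^2)$ vanishes and $-\norm{Q_{k'}(1,u)^{\otimes\ell/2}}^2\le0$, so $\cB_{C,k,\ell}\sststile{\ell}{X'}\Set{\tfrac1n\sum_i\iprod{u,x_i'}^{2k'}\le\bigparen{Ck'\tfrac1n\sum_i\iprod{u,x_i'}^2}^{k'}}$ for each $k'\le k/2$. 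As in \cref{lem:mean-estimation-sos}, I would fix $r_1,\dots,r_n\in\set{0,1}$ with $\sum_i r_i=(1-\e)n$ and $r_i(x_i-y_i)=0$, set $z_i=r_iw_i$, and note $\cA_{Y,\e}\sststile{O(1)}{w}\set{z_i^2=z_i,\; z_i\iprod{u,x_i}^2=z_i\iprod{u,x_i'}^2,\; \tfrac1n\sum_i(1-z_i)\le 2\e}$.

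Because $M=\tfrac1n\sum_i\dyad{x_i}$ is a fixed matrix and $M'=\tfrac1n\sum_i\dyad{x_i'}$, the system $\cA_{Y,\e}$ proves $\iprod{u,(M-M')u}=\tfrac1n\sum_i\bigparen{\iprod{u,x_i}^2-\iprod{u,x_i'}^2}(1-z_i)$. I would raise this to the power $k/2$ — which is \emph{even} since $4\mid k$, so that $\bigparen{\iprod{u,x_i}^2-\iprod{u,x_i'}^2}^{k/2}$ is a square — then apply the sum-of-squares \Holder inequality as in \cref{lem:mean-estimation-sos} (with exponent $k/2$ in place of $k$, using $(1-z_i)^{k/2}=(1-z_i)$ and $\tfrac1n\sum_i(1-z_i)\le2\e$), then the sum-of-squares binomial inequality $(a-b)^{k/2}\le 2^{k/2}(a^{k/2}+b^{k/2})$ (valid since $k/2$ is even; cf.~\cref{lem:binomial}) with $a=\iprod{u,x_i}^2$, $b=\iprod{u,x_i'}^2$, and finally certifiable subgaussianity — the displayed consequence of $\cB_{C,k,\ell}$ for the $x_i'$-moments and the $(k,\ell)$-certifiable subgaussianity of the uniform distribution over $X$ for the $x_i$-moments, together with the shift bounds of \cref{lem:shifts-of-subgaussian-are-subgaussian}, exactly as in \cref{lem:multiplicative-second-moment}. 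These steps chain to $\iprod{u,(M-M')u}^{k/2}\le(2\e)^{k/2-1}\cdot O(Ck)^{k/2}\bigparen{\iprod{u,Mu}^{k/2}+\iprod{u,M'u}^{k/2}}$, which is $\le\bigparen{O(Ck)\,\e^{1-2/k}}^{k/2}\iprod{u,(M+M')u}^{k/2}$ using $(2\e)^{k/2-1}O(Ck)^{k/2}=(O(Ck)\e^{1-2/k})^{k/2}$ and the superadditivity $a^{k/2}+b^{k/2}\le(a+b)^{k/2}$ for $a,b\ge0$ (a sum-of-squares fact, since $\iprod{u,Mu}\ge0$ is a constant and $\iprod{u,M'u}=\tfrac1n\sum_i\iprod{u,x_i'}^2$ is a sum of squares). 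This is the asserted bound with $\delta=O(Ck\cdot\e^{1-2/k})$.

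For the multiplicative refinement, set $q=k/2$, $g=\iprod{u,M'u}=\tfrac1n\sum_i\iprod{u,x_i'}^2$ (so $g\ge0$ has an explicit sum-of-squares certificate), and $m=\iprod{u,Mu}$, a fixed nonnegative scalar; one may assume $m>0$ since otherwise $\iprod{u,x_i}=0$ for all $i$ and the claim is vacuous at that $u$. Dividing the $q$-th power bound by the positive constant $m^q$ gives $(1-g/m)^q\le\delta^q(1+g/m)^q$. Writing $1+g/m=2-(1-g/m)$ and applying $(a+b)^q\le 2^{q-1}(a^q+b^q)$ — a sum-of-squares identity for even $q$, here univariate in $1-g/m$ — with $a=2$, $b=-(1-g/m)$ gives $(1+g/m)^q\le 2^{q-1}\bigparen{2^q+(1-g/m)^q}\le 2^{q-1}\bigparen{2^q+\delta^q(1+g/m)^q}$; since $\delta\le0.0001$ forces $2^{q-1}\delta^q\le\tfrac12$, rearranging gives $(1+g/m)^q\le 4^q$, hence $(1-g/m)^q\le\delta^q(1+g/m)^q\le(4\delta)^q$. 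Finally I would invoke the one-variable sum-of-squares fact that, for even $q$, $\Set{s^q\le1}\sststile{q}{s}\Set{-1\le s\le1}$ — which holds because $\tfrac{q-1}{q}+\tfrac1q s^q-s$ and $\tfrac{q-1}{q}+\tfrac1q s^q+s$ are nonnegative univariate polynomials, hence sums of squares — applied with $s=(1-g/m)/(4\delta)$: substitution yields $1-4\delta\le g/m\le1+4\delta$, i.e.\ $1-\delta'\le\tfrac1{\iprod{u,Mu}}\iprod{u,M'u}\le1+\delta'$ with $\delta'=4\delta\le100\delta$. Polynomial bit complexity holds throughout since every multiplier is a product of binomial coefficients and fixed powers of $1/m$, $2\e$, and $\delta$.

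The degree-$k/2$ bound is close to a line-by-line transcription of \cref{lem:multiplicative-second-moment} through sum-of-squares \Holder, the binomial inequality, and the $\cB_{C,k,\ell}$ constraints, so the one genuinely delicate point — and the main obstacle — is the ``Furthermore'': converting a $k/2$-th power inequality into a \emph{linear} two-sided bound inside a proof system that cannot take roots of arbitrary valid inequalities. The argument above works only because one can (i) divide by the genuine positive scalar $m=\iprod{u,Mu}$, (ii) re-expand $1+g/m$ around the constant $2$ to bound $(1+g/m)^q$ by a constant and thereby reduce to $(1-g/m)^q\le(4\delta)^q$, and (iii) extract the root of this last inequality using that the univariate polynomial $\tfrac{q-1}{q}+\tfrac1q s^q-s$ is a sum of squares. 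Two bookkeeping points also deserve care — that $z_i=r_iw_i$ is idempotent with $z_i\iprod{u,x_i}^2=z_i\iprod{u,x_i'}^2$ over $\cA_{Y,\e}$, and that restricting to $u\in\bbS^{d-1}$ (needed to kill the $(1-\norm{u}^2)$ term in $\cB_{C,k,\ell}$) costs nothing by homogeneity — though neither is hard.
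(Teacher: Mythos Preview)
Your proof is correct and follows essentially the same approach as the paper. For the $k/2$-th power bound you transcribe \cref{lem:multiplicative-second-moment} through the sum-of-squares toolkit exactly as the paper indicates (``analogous to the proof of \cref{lem:mean-estimation-sos}''); for the ``Furthermore'' clause you rederive inline what the paper packages as \cref{lem:sos-fact-1} (binomial expansion of $(f+1)^q$ around the constant $2$, rearranging to $(f-1)^q\le(O(\delta))^q$, then AM-GM-based root extraction via \cref{lem:sos-fact-2}), with the same underlying identities and slightly sharper constants.
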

\begin{proof}
  By an argument analogous to the one in the proof of \cref{lem:mean-estimation-sos}, we can obtain:
  \begin{equation}
    \label{eq:part-1-second-moment-estimation}
    \cA_{Y,\e} \cup \cB_{C,k,\ell}
    \sststile{\ell}{w,x'}
    \Set{
      \iprod{u,(M-M')u}^{k/2}\le \delta^{k/2} \cdot \iprod{u,(M+M'))u}^{k/2}
    }\,,
  \end{equation} 
  for $\delta\le O\Paren{\e^{1-2/k}\cdot C k}$. 
  \cref{lem:sos-fact-1} together with \eqref{eq:part-1-second-moment-estimation} implies for some $\delta'\le 100\delta$,
  \begin{displaymath}
    \cA_{Y,\e} \cup \cB_{C,k,\ell}
    \sststile{\ell}{X'}
    \Set{
      1-\delta' \le \tfrac 1 { \iprod{u,Mu}}\iprod{u,M'(X') u} \leq 1+\delta'
    }\,.
    \qedhere
  \end{displaymath}
\end{proof}

The following lemma is the sum-of-squares analog of \cref{cor:multiplicative-covariance-identifiability}.

\begin{corollary}[Covariance estimation]
  \label[corollary]{cor:multiplicative-covariance-identifiability-sos}
  Let $D$ be the uniform distribution over $X \subseteq \R^d$ with covariance $\Sigma \in \R^{d\times d}$.
  Suppose $D$ is $(k,\ell)$-certifiably subgaussian with parameter $C>0$ with $k$ divisible by $4$ and that $Ck \cdot \e^{1-2/k}\le \Omega(1)$.
  Let $Y$ be an $\epsilon$-corruption of $X$. 
  Let $\Sigma' = \frac{1}{n} \sum_{i = 1}^n \dyad{(x'_i - \mu')}$ be the formal covariance of the uniform distribution over $x'_1,\ldots,x'_n$, where $\mu'=\frac{1}{n} \sum_i x'_i$.

  Then, for every $u \in \R^d$, 
  \begin{equation}
    \label{eq:multiplicative-covariance-sos}
    \cA_{Y,\e} \cup \cB_{C,k,\ell}
    \sststile{\ell}{X'}
    1-\delta   \leq \tfrac 1 {\iprod{u,\Sigma u}}\iprod{u,\Sigma'(X')u} \leq 1+ \delta
    \quad \text{for some } \delta\le O(Ck) \cdot \e^{1-2/k}\mper
  \end{equation}
  Furthermore, the bit complexity of the sum-of-squares proof is polynomial.
\end{corollary}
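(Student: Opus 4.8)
The plan is to mimic exactly the proof of \cref{cor:multiplicative-covariance-identifiability} (covariance identifiability) but carried out inside the sum-of-squares proof system, using the SoS analogues \cref{lem:mean-estimation-sos} and \cref{lem:multiplicative-second-moment-sos} in place of \cref{lem:mean-identifiability} and \cref{lem:multiplicative-second-moment}. First I would shift the data: let $x'_i$ denote the variables of $\cA_{Y,\e}\cup\cB_{C,k,\ell}$, write $\mu'(X')=\frac1n\sum_i x'_i$, and consider the centered second moment of the $x'_i$, i.e. $\Sigma'(X')=\frac1n\sum_i\dyad{x'_i}-\dyad{\mu'}$. The key algebraic identity is $\Sigma-\Sigma'=\bigl(M-M'\bigr)-\bigl(\dyad{\mu}-\dyad{\mu'}\bigr)$ where $M,M'$ are the \emph{raw} second moments (here $M=\Sigma+\dyad{\mu}$ is a constant matrix since $D$ is fixed, and $M'(X')=\frac1n\sum_i\dyad{x'_i}$). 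Applying \cref{lem:multiplicative-second-moment-sos} to the distributions shifted by $\mu$ — equivalently, instantiating it with the translated data — gives, as a degree-$\ell$ SoS proof in $X'$, the bound
\begin{equation*}
\iprod{u,(\Sigma-\Sigma'-(\dyad{(\mu-\mu')}))u}^{k/2}\le \delta_1^{k/2}\iprod{u,(\Sigma+\Sigma'+\dyad{(\mu-\mu')})u}^{k/2}
\end{equation*}
with $\delta_1\le O(Ck\cdot\e^{1-2/k})$; note that here the centering of $M'$ by $\dyad{\mu'}$ is exactly what converts the raw-second-moment statement into a statement about $\Sigma,\Sigma'$ together with the rank-one correction $\dyad{(\mu-\mu')}$.

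Next I would invoke \cref{lem:mean-estimation-sos} to control the rank-one correction term: that lemma gives $\cA_{Y,\e}\cup\cB_{C,k,\ell}\sststile{\ell}{X'}\iprod{u,\mu-\mu'(X')}^k\le\delta_2^k\iprod{u,(\Sigma+\Sigma'(X'))u}^{k/2}$ with $\delta_2=O(\sqrt{Ck}\,\e^{1-1/k})$, and since $\iprod{u,\dyad{(\mu-\mu')}u}=\iprod{u,\mu-\mu'}^2$ this yields, after taking $(k/2)$-th powers and using the SoS rule that $a^k\le c$ implies $a^2\le c^{2/k}$ on nonnegative quantities (an application of the substitution/multiplication rules, as used implicitly elsewhere in the paper), a degree-$\ell$ SoS proof of $\iprod{u,\dyad{(\mu-\mu')}u}^{k/2}\le\delta_2^{k}\iprod{u,(\Sigma+\Sigma')u}^{k/2}$, hence in particular (by the standard binomial-type SoS inequality $\sststile{}{}\{(a+b)^{k/2}\le 2^{k/2}(a^{k/2}+b^{k/2})\}$) that the rank-one term is dominated by $O(\delta_2)\cdot\iprod{u,(\Sigma+\Sigma')u}$ inside the proof system, up to raising to the $(k/2)$-th power throughout. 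Combining the two displayed bounds by the addition and triangle-type SoS inference rules, exactly as in the non-SoS proof of \cref{cor:multiplicative-covariance-identifiability}, gives $\iprod{u,(\Sigma-\Sigma')u}^{k/2}\le\delta^{k/2}\iprod{u,(\Sigma+\Sigma')u}^{k/2}$ for $\delta\le O(Ck\cdot\e^{1-2/k})$ (since $\delta_2\le\delta_1$ under the hypothesis $Ck\,\e^{1-2/k}\le\Omega(1)$). Finally, applying \cref{lem:sos-fact-1} (the same rearrangement lemma used in \cref{lem:multiplicative-second-moment-sos}) converts this additive-in-$(\Sigma+\Sigma')$ bound into the claimed two-sided multiplicative bound $1-\delta\le\frac{1}{\iprod{u,\Sigma u}}\iprod{u,\Sigma'(X')u}\le 1+\delta$ for some $\delta\le O(Ck)\cdot\e^{1-2/k}$, provided the constant in the hypothesis is small enough. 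Throughout, bit-complexity of the composite proof remains polynomial because it is a bounded composition (bounded number of applications of the addition, multiplication, and substitution rules, with degree capped at $\ell$) of the polynomial-bit-complexity proofs guaranteed by \cref{lem:mean-estimation-sos} and \cref{lem:multiplicative-second-moment-sos}.

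The main obstacle I anticipate is bookkeeping the degrees and the exponents consistently: the raw statements from the two input lemmas come with different powers ($k$ for the mean, $k/2$ for the second moment), so one has to be careful to raise everything to a common power and to check that every step — in particular converting an $a^k$-bound into an $a^2$-bound, and the binomial expansion of $(\mu-\mu')$-type terms — is genuinely a degree-$\ell$ SoS inference and not merely a scalar inequality. A secondary subtlety is that \cref{lem:multiplicative-second-moment-sos} is stated for the \emph{raw} second moments of $X$ and $X'$, so one must justify applying it to the $\mu$-shifted data without changing the constraint system $\cA_{Y,\e}\cup\cB_{C,k,\ell}$; this is fine because shifting by the fixed vector $\mu$ is an affine reparametrization that preserves both the corruption structure and certifiable subgaussianity (by \cref{lem:shifts-of-subgaussian-are-subgaussian}), but it should be stated explicitly. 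Once these accounting issues are handled, the proof is a direct transcription of the non-SoS argument.
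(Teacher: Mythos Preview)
Your proposal is correct and follows essentially the same route as the paper's proof: shift by $\mu$ (justified via \cref{lem:shifts-of-subgaussian-are-subgaussian}), apply \cref{lem:multiplicative-second-moment-sos} to the shifted data to control $\Sigma-\Sigma'-\dyad{(\mu-\mu')}$, use \cref{lem:mean-estimation-sos} to absorb the rank-one term, and combine via the SoS binomial inequality $(a+b)^{k/2}\le 2^{k/2}(a^{k/2}+b^{k/2})$.

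Two small remarks. First, your detour through ``$a^k\le c$ implies $a^2\le c^{2/k}$'' is unnecessary and not obviously a valid SoS inference as stated: since $\iprod{u,\dyad{(\mu-\mu')}u}^{k/2}=\iprod{u,\mu-\mu'}^k$ identically, the bound from \cref{lem:mean-estimation-sos} is already exactly the $(k/2)$-th-power bound you need on the rank-one term, with effective constant $\delta_2^2\le\delta_1$. Second, in the final rearrangement the paper takes a slightly different path: rather than invoking \cref{lem:sos-fact-1} directly on the $(\Sigma+\Sigma')$-form, it first writes $\iprod{u,(\Sigma+\Sigma')u}^{k/2}\le 2^{k/2}\iprod{u,2\Sigma u}^{k/2}+2^{k/2}\iprod{u,(\Sigma-\Sigma')u}^{k/2}$, absorbs the second term into the left-hand side (using $Ck\,\e^{1-2/k}\le\Omega(1)$), obtains $\iprod{u,(\Sigma-\Sigma')u}^{k/2}\le O(\delta)^{k/2}\iprod{u,\Sigma u}^{k/2}$ with $\iprod{u,\Sigma u}$ now a constant, and then applies \cref{lem:sos-fact-2}. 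Your route via \cref{lem:sos-fact-1} is equally valid (and is in fact how \cref{lem:multiplicative-second-moment-sos} itself concludes), so this is a cosmetic difference only.
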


\begin{proof}
  Let $D_\mu$ be the distributions $D$ shifted by $\mu$ so that $D_\mu$ has expectation $0.$ 
  Then, $D_\mu$ is $(k,\ell)$-certifiably $2C$-subgaussian.
  Further, for $z_i = x'_i-\mu_i$ 
  $\cA_{Y,\e} \sststile{2}{w,X'} \Set{w_i (z_i - (x_i-\mu))=0}.$
  Further, using  \cref{lem:shifts-of-subgaussian-are-subgaussian}, for every $k' \leq k/2$,
  $\cB_{C,k,\ell} \sststile{\ell}{X',u} \frac{1}{n}\sum_i\iprod{x'_i,u}^{2k'} \leq (2Ck')^{k'/2} (\frac{1}{n}\sum_i \iprod{x'_i,u}^{2})^{k/2}.$ 

  Thus, applying \cref{lem:multiplicative-second-moment-sos}, we obtain for every vector $u \in \R^d$,
  \begin{align}
    &\cA_{Y,\e} \cup \cB_{C,k,\ell} \sststile{\ell}{w,X'}\\
    &\iprod{u,(\Sigma-\Sigma'-\dyad{(\mu-\mu')})u}^{k/2} \le \delta_2^{k/2} \cdot \iprod{u,(\Sigma+\Sigma'+\dyad{(\mu-\mu')})u}^{k/2}\\
    &\leq (2\delta_2)^{k/2} \Paren{\iprod{u,(\Sigma+\Sigma')u}^{k/2} + \iprod{u,\dyad{(\mu-\mu')} u }^{k/2}}\mper
  \end{align}
  Here, $\delta_2\le O(Ck\cdot \e^{1-2/k})$.
  Further, 
  \[
    \sststile{k}{X',w} \Set{\iprod{u, (\Sigma-\Sigma') u}^{k/2} \leq 2^{k/2}\iprod{u,(\Sigma-\Sigma'-\dyad{(\mu-\mu')})u}^{k/2} + 2^{k/2} \iprod{u,\dyad{(\mu-\mu')} u}^{k/2}}.
  \]

  By \cref{lem:mean-identifiability} (mean estimation),
  $\iprod{u,\dyad{(\mu-\mu')}u}^{k/2} \le \delta_1^2 \cdot \iprod{u,(\Sigma+\Sigma')u}^{k/2}$, where $\delta_1\le O(Ck\cdot \e^{1-1/k})$.
  Combining the above inequalities, we obtain:
  \begin{equation}
    \cA_{Y,\e} \cup \cB_{C,k,\ell} \sststile{\ell}{w,X'} \iprod{u, (\Sigma-\Sigma') u}^{k/2} \leq O(\delta_2)^{k/2} \iprod{u, (\Sigma + \Sigma') u}^{k/2}. \label{eq:mean-identifiably-eq-1}
  \end{equation}

  Further, 
  \begin{align*}
    \sststile{k}{w,X'} \iprod{u, (\Sigma + \Sigma') u}^{k/2} &= \Paren{\iprod{u,2 \Sigma u} - \iprod{u, (\Sigma-\Sigma')u}}^{k/2}\\
                                                  &\leq 2^{k/2} \iprod{u,2 \Sigma u} + 2^{k/2} \iprod{u, (\Sigma-\Sigma')u}^{k/2}.
  \end{align*}
  Using  this in conjunction with \eqref{eq:mean-identifiably-eq-1}, we obtain:
  \[
    \cA_{Y,\e} \cup \cB_{C,k,\ell} \sststile{\ell}{w,X'}  \iprod{u, (\Sigma-\Sigma') u}^{k/2} \leq O(\delta_2)^{k/2} \iprod{u, \Sigma u}^{k/2}.
  \]

  Now, using \cref{lem:sos-fact-2} with $f = \frac{\iprod{u, (\Sigma-\Sigma')u} }{O(\delta_2)^{k/2} \iprod{u, \Sigma u}}$ completes the proof. 
\end{proof}

The following is the analog of the \cref{cor:mean-identifiability-no-covariance-bound}.

\begin{corollary}[Stronger Mean Estimation]
  \label[corollary]{cor:mean-identifiability-no-covariance-bound-sos}
 Let $D$ be the uniform distribution over $X \subseteq \R^d$ with mean, covariance $\mu \in \R^d$ and $\Sigma \in \R^{d\times d}$.
  Suppose $D$ is $(k,\ell)$-certifiably subgaussian with parameter $C>0$ with $k$ divisible by $4$ and that $Ck \cdot \e^{1-2/k}\le \Omega(1)$. Let $Y$ be an $\epsilon$-corruption of $X$. 
  Let $\mu'(X')$ be the linear polynomial $\frac{1}{n} \sum_i x'_i$ in the vector-valued variables $x'_1, x'_2, \ldots, x'_n.$
  \begin{equation}
    \label{eq:mean-identifiability-final-sos}
    \forall u\in \R^d.
   \cA_{Y,\e} \cup \cB_{C,k,\ell} \sststile{\ell}{w,X'} \pm \Set{ \iprod{u,\mu - \mu'} 
    \le \delta
    \cdot \iprod{u, \Sigma u}^{1/2}\quad \text{ where } \delta= O\Paren{\sqrt{C k\,}\cdot \e^{1-1/k}}}\mper
  \end{equation}
  In particular, $ \cA_{Y,\e} \cup \cB_{C,k,\ell} \sststile{\ell}{w,X'} \norm{\mu -\mu'}\le \delta \norm{\Sigma}^{1/2}$ and $\norm{\Sigma^{-1/2}(\mu-\mu')}\le \delta$.
  Furthermore, the bit complexity of the sum-of-squares proof is polynomial.
\end{corollary}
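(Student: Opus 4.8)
The plan is to prove the displayed bound in its precise degree-$k$ sum-of-squares form, namely
\[
  \cA_{Y,\e}\cup\cB_{C,k,\ell}\ \sststile{\ell}{w,X'}\ \Set{\iprod{u,\mu-\mu'(X')}^{k}\le \delta_1^{k}\,\iprod{u,\Sigma u}^{k/2}}
  \quad\text{with }\delta_1=O\Paren{\sqrt{Ck}\cdot\e^{1-1/k}}\mper
\]
The scalar renderings $\norm{\mu-\mu'}\le\delta\norm{\Sigma}^{1/2}$ and $\norm{\Sigma^{-1/2}(\mu-\mu')}\le\delta$ are then what one reads off by specializing $u$ (to $\Sigma^{-1}(\mu-\hat\mu)$, respectively to unit vectors) after applying this proof to the pseudo-distribution produced by \cref{alg:moment-estimation-program} and using the pseudo-expectation analog of Jensen's inequality, $(\pE Z)^{k}\le\pE Z^{k}$ for even $k$. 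The displayed statement is the sum-of-squares transcription of the non-SoS \cref{cor:mean-identifiability-no-covariance-bound}, obtained by combining the sum-of-squares analogs \cref{lem:mean-estimation-sos} and \cref{cor:multiplicative-covariance-identifiability-sos} in exactly the way that \cref{cor:mean-identifiability-no-covariance-bound} combines \cref{lem:mean-identifiability} and \cref{cor:multiplicative-covariance-identifiability}.

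Concretely, I would first invoke \cref{lem:mean-estimation-sos} to get the degree-$\ell$ proof $\cA_{Y,\e}\cup\cB_{C,k,\ell}\sststile{\ell}{w,X'}\Set{\iprod{u,\mu-\mu'}^{k}\le \delta^{k}\iprod{u,(\Sigma+\Sigma'(X'))u}^{k/2}}$ with $\delta=O(\sqrt{Ck}\cdot\e^{1-1/k})$, and then invoke \cref{cor:multiplicative-covariance-identifiability-sos} (this is where the hypotheses ``$k$ divisible by $4$'' and ``$Ck\cdot\e^{1-2/k}\le\Omega(1)$'' get used) to get $\cA_{Y,\e}\cup\cB_{C,k,\ell}\sststile{\ell}{w,X'}\iprod{u,\Sigma'(X')u}\le(1+\delta_3)\iprod{u,\Sigma u}$ with $\delta_3\le O(Ck)\cdot\e^{1-2/k}=O(1)$. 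Adding the manifestly sum-of-squares quantity $\iprod{u,\Sigma u}$ to both sides gives $\cA_{Y,\e}\cup\cB_{C,k,\ell}\sststile{\ell}{w,X'}\iprod{u,(\Sigma+\Sigma'(X'))u}\le C_0\iprod{u,\Sigma u}$ with $C_0=2+\delta_3=O(1)$. The one step that is genuinely sum-of-squares in character is to raise this quadratic-form comparison to the $(k/2)$-th power: writing $f\seteq\iprod{u,(\Sigma+\Sigma'(X'))u}$ and $g\seteq C_0\iprod{u,\Sigma u}$, both are sums of squares of polynomials in $u$ with coefficients polynomial in $X'$ (since $\iprod{u,\Sigma'(X')u}=\tfrac1n\sum_i\iprod{u,x'_i-\mu'}^{2}$ and $\Sigma\succeq 0$), and the previous display certifies $g-f\ge0$; hence the identity $g^{k/2}-f^{k/2}=(g-f)\sum_{j=0}^{k/2-1}f^{j}g^{k/2-1-j}$ exhibits $g^{k/2}-f^{k/2}$ as a product of the provably-nonnegative polynomial $g-f$ with the sum of squares $\sum_{j}f^{j}g^{k/2-1-j}$, so by the multiplication rule $\cA_{Y,\e}\cup\cB_{C,k,\ell}\sststile{k}{w,X'}f^{k/2}\le g^{k/2}$ (the same type of monotonicity step already used in \cref{lem:multiplicative-second-moment-sos}). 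Chaining this with the bound from \cref{lem:mean-estimation-sos} and substituting $g^{k/2}=C_0^{k/2}\iprod{u,\Sigma u}^{k/2}$ yields $\iprod{u,\mu-\mu'}^{k}\le(\delta\sqrt{C_0})^{k}\iprod{u,\Sigma u}^{k/2}$, and $\delta\sqrt{C_0}=O(\sqrt{Ck}\cdot\e^{1-1/k})=\delta_1$ because $C_0=O(1)$.

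Every ingredient is one of the two cited degree-$\ell$ proofs or a constant-degree manipulation, so the composite proof has degree $\ell$ and polynomial bit complexity, as required. I expect the main obstacle to be nothing more than bookkeeping: one must perform the ``raise to the $(k/2)$-th power'' step through the polynomial identity above rather than as a real-number inequality, and one must verify that $f$ and $g$ are sums of squares (so that $\sum_{j}f^{j}g^{k/2-1-j}$ is as well and the multiplication rule applies), not merely nonnegative as real quantities---both of which are immediate here since $\Sigma\succeq 0$ and $\iprod{u,\Sigma'(X')u}$ is visibly a sum of squares. With these checks in hand, the argument is a line-by-line transcription of the non-SoS \cref{cor:mean-identifiability-no-covariance-bound}.
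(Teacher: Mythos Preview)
Your proposal is correct and follows essentially the same route as the paper: combine \cref{lem:mean-estimation-sos} with \cref{cor:multiplicative-covariance-identifiability-sos} to replace $\iprod{u,(\Sigma+\Sigma')u}^{k/2}$ by $O(1)^{k/2}\iprod{u,\Sigma u}^{k/2}$ inside the degree-$k$ bound. The only step you leave implicit that the paper makes explicit is the final passage from the degree-$k$ inequality $\iprod{u,\mu-\mu'}^{k}\le\delta^{k}\iprod{u,\Sigma u}^{k/2}$ to the linear inequality stated in \eqref{eq:mean-identifiability-final-sos}, which the paper obtains by invoking \cref{lem:sos-fact-2} with $f=\iprod{u,\mu-\mu'}/\bigl(\delta\,\iprod{u,\Sigma u}^{1/2}\bigr)$ (legitimate since $\iprod{u,\Sigma u}$ is a fixed nonnegative scalar for each $u$).
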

\begin{proof}
Combining \cref{lem:mean-estimation-sos} and \cref{cor:multiplicative-covariance-identifiability-sos} we have for $\delta_1 = O(\sqrt{Ck} \epsilon^{1-1/k})$ and $\delta_2 = O(Ck) \epsilon^{1-2/k}$,
\begin{align*}
&\cA_{Y,\e} \cup \cB_{C,k,\ell} \sststile{\ell}{w,X'}\\
&\iprod{u,\mu-\mu'(X')}^{k} \le \delta_1^k \iprod{u,(\Sigma + \Sigma') u}^{k/2} \\
&\leq (2\delta_1)^2 \Paren{\iprod{u, \Sigma u}^{k/2} + \iprod{u, \Sigma'u}^{k/2}}\\
&\leq (6\delta_1)^2 \iprod{u, \Sigma u}^{k/2}.
\end{align*}

Observe that $\iprod{u, \Sigma u}^{k/2}$ is a constant, i.e., does not depend on any variable in the SoS proof.
Using \cref{lem:sos-fact-2} with $f = \frac{\iprod{u, \mu-\mu'(X')}}{(6\delta_1)^2 \iprod{u, \Sigma u}^{1/2}}$ now completes the proof. 
\end{proof}

Finally, the following lemma is the sum-of-squares analog of \cref{lem:multiplicative-higher-moment}. The proof is similar to the ones presented before, so we omit it here. 
\begin{lemma}
  \label[lemma]{lem:higher-moment-estimation-sos}
  Let $D$ be the uniform distribution over $X\subseteq \R^d$ with second moments $M_2 \in \R^{d\times d}$.
  Suppose $D$ is $(k,\ell)$-certifiably subgaussian with parameter $C>0$ for even $k$ divisible by $r$.
  Let $Y$ be $\epsilon$-corruption sample of $X$. 
  Let ${M}'_{r} = \tfrac 1n \sum_{i=1}^n x_i'^{\otimes r}$ be a $r$-order tensor with entries that are $r$-degree polynomial in vector-valued variables $x'_1,\ldots,x'_n$.  
  Then, the following inequality has a sum of squares proof of degree $\ell$ in variables of $\cA_{Y,\e} \cup \cB_{C,k,\ell}$ and $u \in \R^d$,
  \begin{equation}
    \cA_{Y,\e} \cup \cB_{C,k,\ell}
    \sststile{\ell}{u,w,X'} \Set{ -\delta \iprod{u,\Paren{M_2 +  {M}'_2}u }^{k/2} \le \iprod{M_{r} - {M}'_{r}, u^{\otimes r}}^{k/r} \le  \delta \iprod{u,\Paren{M_2 + {M}'_2}u}^{k/2} }\,,
  \end{equation}
   where $\delta= O\Paren{C^r k^r}\cdot \e^{1-r/k}$.
  Consequently,
  \begin{equation}
   \cA_{Y,\e} \cup \cB_{C,k,\ell} \cup \Set{ \|u\|_2^2 \leq 1}
   \sststile{\ell}{u,w,X'}
   \Set{
     -\delta \iprod{u,M_2u }^{r/2} \le \iprod{M_{r} - {M}'_{r}, u^{\otimes r}} \le  \delta \iprod{u,M_2u}^{r/2} }\mper
    \end{equation}
  Furthermore, the bit complexity of the sum-of-squares proof is polynomial.
\end{lemma}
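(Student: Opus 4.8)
The plan is to lift the information-theoretic argument of \cref{lem:multiplicative-higher-moment} into the sum-of-squares proof system, in the same way that \cref{lem:mean-estimation-sos} lifts \cref{lem:mean-identifiability} and \cref{lem:multiplicative-second-moment-sos} lifts \cref{lem:multiplicative-second-moment}. As in those proofs, fix $r_1,\ldots,r_n\in\set{0,1}$ with $\sum_i r_i=(1-\e)n$ and $r_i(x_i-y_i)=0$ (such a choice exists since $Y$ is an $\e$-corruption of $X$), and set $z_i=r_i w_i$. Then $z_i$ is idempotent, $\cA_{Y,\e}\sststile{2}{w}\set{\tfrac1n\sum_i(1-z_i)\le 2\e}$, and, using $w_i(y_i-x'_i)=0$ together with the scalar identity $r_i(x_i-y_i)=0$, one gets $z_i\cdot\bigparen{\iprod{u,x_i}^r-\iprod{u,x'_i}^r}=0$ with a sum-of-squares proof of degree $O(r)$, so that
\begin{equation*}
\iprod{M_r-{M}'_r,u^{\otimes r}}=\tfrac1n\sum_{i=1}^n\bigparen{\iprod{u,x_i}^r-\iprod{u,x'_i}^r}(1-z_i)
\end{equation*}
holds as a polynomial identity modulo $\cA_{Y,\e}$, where $M_r=\tfrac1n\sum_i x_i^{\otimes r}$ is the (fixed) empirical $r$-th moment of $X$.

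Next I would apply the sum-of-squares H\"older inequality with the integer exponent $t:=k/r$ (here we use $r\mid k$), exploiting idempotency of $1-z_i$, to obtain
\begin{equation*}
\iprod{M_r-{M}'_r,u^{\otimes r}}^{t}\le\Bigparen{\tfrac1n\sum_i(1-z_i)}^{t-1}\cdot\tfrac1n\sum_i\bigparen{\iprod{u,x_i}^r-\iprod{u,x'_i}^r}^{t}\le(2\e)^{t-1}\cdot\tfrac1n\sum_i\bigparen{\iprod{u,x_i}^r-\iprod{u,x'_i}^r}^{t}.
\end{equation*}
By the sum-of-squares binomial/triangle inequality (as in \cref{lem:binomial}) each summand is bounded by $2^{O(t)}\bigparen{\iprod{u,x_i}^{k}+\iprod{u,x'_i}^{k}}$, so the right-hand side is at most $2^{O(t)}(2\e)^{t-1}\bigparen{\iprod{M_k,u^{\otimes k}}+\iprod{{M}'_k,u^{\otimes k}}}$, where $M_k$ and ${M}'_k$ are the degree-$k$ empirical moment tensors of $X$ and of $x'_1,\ldots,x'_n$, respectively.

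It remains to control the two raw $k$-th moments. For $\iprod{{M}'_k,u^{\otimes k}}=\tfrac1n\sum_i\iprod{u,x'_i}^{k}$, the constraint $\cB_{C,k,\ell}$ at $k'=k/2$ directly supplies the degree-$\ell$ sum-of-squares bound $\iprod{{M}'_k,u^{\otimes k}}\le\bigparen{Ck\cdot\iprod{{M}'_2,u^{\otimes 2}}}^{k/2}$. For $\iprod{M_k,u^{\otimes k}}$, which does not involve the program variables, I would combine the $(k,\ell)$-certifiable $C$-subgaussianity of the uniform distribution over $X$ with the sum-of-squares shift bound \cref{lem:shifts-of-subgaussian-are-subgaussian} (equivalently \cref{lem:subgaussian-shift-moment-bound}) to pass from centered to raw moments, obtaining a degree-$\ell$ sum-of-squares proof in $u$ of $\iprod{M_k,u^{\otimes k}}\le\bigparen{O(Ck)\cdot\iprod{M_2,u^{\otimes 2}}}^{k/2}$. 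Since $M_2$ and ${M}'_2$ are averages of dyads, $\iprod{M_2,u^{\otimes 2}}$ and $\iprod{{M}'_2,u^{\otimes 2}}$ are sums of squares, so each of $\iprod{M_2,u^{\otimes 2}}^{k/2}$ and $\iprod{{M}'_2,u^{\otimes 2}}^{k/2}$ is at most $\iprod{u,(M_2+{M}'_2)u}^{k/2}$. Chaining these bounds yields
\begin{equation*}
\cA_{Y,\e}\cup\cB_{C,k,\ell}\sststile{\ell}{u,w,X'}\Set{\iprod{M_r-{M}'_r,u^{\otimes r}}^{k/r}\le\delta\cdot\iprod{u,(M_2+{M}'_2)u}^{k/2}}
\end{equation*}
with $\delta\le O(Ck)^{k/2}(2\e)^{k/r-1}$; the matching lower bound follows by repeating the argument with $M_r$ and ${M}'_r$ interchanged (legitimate because the program moments of the $x'_i$ are controlled by $\cB_{C,k,\ell}$ just as the fixed moments of $X$ are by certifiable subgaussianity of $D$). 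The second displayed conclusion then follows by using \cref{lem:multiplicative-second-moment-sos} to replace ${M}'_2$ by $O(1)\cdot M_2$ on the right, restricting to $\set{\norm{u}_2^2\le 1}$, and applying the root-extraction lemma \cref{lem:sos-fact-2} to convert the $(k/r)$-th-power inequality into the first-power bound $\iprod{M_r-{M}'_r,u^{\otimes r}}\le O(Ck)^{r/2}\e^{1-r/k}\cdot\iprod{u,M_2u}^{r/2}$ used in \cref{thm:higher-moment-main}. Along the way one checks that each manipulation keeps the proof degree at most $\ell$ in $u,w,X'$ and has polynomial bit complexity; this is routine, since each step is a fixed-degree $O(k)$ operation and $\cB_{C,k,\ell}$ itself contributes only a degree-$\ell$ certificate.

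The main obstacle is the parity bookkeeping in the H\"older step: when $t=k/r$ is odd, $\bigparen{\iprod{u,x_i}^r-\iprod{u,x'_i}^r}^{t}$ is not manifestly a sum of squares, and the one-line H\"older bound above is not literally a sum-of-squares inequality as written. I would handle this by proving the bound separately for $\iprod{M_r-{M}'_r,u^{\otimes r}}$ and for $\iprod{{M}'_r-M_r,u^{\otimes r}}$ via the sign-aware form of sum-of-squares H\"older---which is in any case what the two-sided conclusion requires---or, equivalently, by running the whole argument with the even exponent $2\lceil k/(2r)\rceil$ in place of $k/r$, which changes only the constant hidden in $\delta$. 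A secondary point that needs care is the shift step: certifiable subgaussianity constrains the \emph{centered} moments of $D$, so one genuinely needs \cref{lem:shifts-of-subgaussian-are-subgaussian} to guarantee that the resulting bound on the raw moment $\iprod{M_k,u^{\otimes k}}$ of the uncorrupted sample is itself witnessed by a low-degree sum-of-squares proof in $u$, so that it can be chained with the rest of the derivation.
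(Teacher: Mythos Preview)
Your proposal is correct and is precisely the argument the paper has in mind: the paper omits the proof, saying only that it ``is similar to the ones presented before,'' i.e., the template of \cref{lem:mean-estimation-sos} and \cref{lem:multiplicative-second-moment-sos} applied with exponent $k/r$ in the sum-of-squares H\"older step, the binomial bound to split into degree-$k$ raw moments, and the shift lemma \cref{lem:shifts-of-subgaussian-are-subgaussian} to pass from centered to raw moments---exactly your outline. Your handling of the two nontrivial issues (the parity of $k/r$ in the H\"older step, and the need for the shift lemma so that the raw-moment bound on the uncorrupted side is itself an sos inequality in $u$) is appropriate; the paper's only additional remark is that the resulting inequality is an sos proof \emph{in $u$} as well, which it attributes to \cref{prop:SoS-proof-for-unbounded-variable}, and which your argument already witnesses by construction.
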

\begin{remark}
  We point out a subtle technical difference with respect to the two results above. In \cref{lem:mean-estimation-sos} and \cref{cor:multiplicative-covariance-identifiability-sos}, we do not force $u$ to be a variable in the sum of squares proof system.
  On the other hand, in Lemma \ref{lem:higher-moment-estimation-sos}, the resulting polynomial inequality has a sum of squares proof in $u$ (in addition to the other variables).
  This is important because, the inequalities are degree $2$ (in $u$) for \cref{cor:mean-identifiability-no-covariance-bound-sos} and \cref{cor:multiplicative-covariance-identifiability-sos} while higher degree in general.
  While unconstrained (in $u$) inequalities of degree $\leq 2$ always have a SoS proof, this is not true for higher degrees. The lemma above relies on Proposition \ref{prop:SoS-proof-for-unbounded-variable} to obtain this stronger conclusion. %
\end{remark}

\subsection{Putting things together}
\label{sec:putting-things-together}

We can now complete the proof of Theorem \ref{thm:mean-covariance-main}. 

\begin{proof}[Proof of Theorem \ref{thm:mean-covariance-main}]
We just put together the implications of the above results to analyze Algorithm \ref{alg:moment-estimation-program} here.
Let $S'$ be the true sample used to produce the $\e$-corrupted sample $S$. Then, $\U_{S'}$ - the uniform distribution on $S'$ is $(k,\ell)$-certifiably $2C$-subgaussian by an application of Lemma \ref{lem:sampling-preserves-subgaussianity}.

Let $\zeta$ be the pseudo-distribution of degree at least $\ell$ over variables $w,y$ satisfying $\cA_{Y,\e}\cup \cB_{C,k,\ell}$ output by Algorithm \ref{alg:moment-estimation-program}. 
Define $\mu' = \tfrac 1n \sum_{i=1}^n y_i$ and $ \Sigma' = \tfrac 1n \sum_{i=1}^n (y_i-\hat \mu)(y_i-\hat \mu)^{\top}$ be the polynomials defined in \cref{cor:mean-identifiability-no-covariance-bound-sos} and \cref{cor:multiplicative-covariance-identifiability-sos} respectively. 

We round the pseudo-distribution simply by setting $\hat \mu = \pE_{\zeta} \mu'$. 

From \cref{cor:mean-identifiability-no-covariance-bound-sos}, we have that:
\[
 \cA_{Y,\e} \cup \cB_{C,k,\ell}
    \vdash_{\ell} \Set{ \iprod{u,\mu-\mu'}^2 \le \delta^2 \iprod{u,\Sigma u}}\,,
\]
Thus, for every $u \in \R^d$,
$\pE_{\zeta} \iprod{u,\mu-\mu'}^2 \le \delta^2 \iprod{u,\Sigma u}\,$

By Cauchy-Schwarz inequality for pseudo-distributions, we have:

\begin{equation}
\Paren{\pE_{\zeta} \iprod{u, \mu-\mu'}}^2 \leq  \iprod{u,\mu-\pE_{\zeta}\mu'}^2 \leq \delta^2 \iprod{u,\Sigma u}. \label{eq:main-mean-estimation-sos}
\end{equation}
Using that $\pE \mu' = \hat \mu$ thus yields 
$\Paren{\iprod{u, \mu-\hat \mu}}^2 \leq \delta^2 \iprod{u,\Sigma u}$. 

Using $u \rightarrow \Sigma^{-1/2} u$ thus immediately yields:
\[
\Paren{\iprod{u, \Sigma^{-1/2} (\mu-\hat \mu)}}^2 \leq \delta^2 \|u\|^2.
\]
Taking square roots thus implies $\iprod{u, \Sigma^{-1/2} (\mu-\hat \mu)} \leq \delta\|u\|$ or equivalently, $\|\Sigma^{-1/2}(\mu-\hat \mu)\| \leq \delta$.

On the other hand, taking square roots in \eqref{eq:main-mean-estimation-sos} gives:
$|\iprod{u, \mu-\hat \mu}| = \|\mu - \hat \mu\| \leq \delta \|\Sigma\|^{1/2}$ as required.

Next, we analyze the covariance estimation procedure from Algorithm \ref{alg:moment-estimation-program}.
As before, our rounding is simple: $\hat{\Sigma} = \pE_{\zeta} \Sigma'$. 

Applying \cref{cor:multiplicative-covariance-identifiability-sos} and using that degree $\ell$ pseudo-expectations must respect degree $\ell$ SoS proofs (Fact \ref{fact:sos-soundness}), we get that for every $u \in \R^d$,
whenever $\delta < 1$, $(1-\delta) \Sigma \preceq \hat \Sigma \preceq (1+\delta) \Sigma$ as required.
\end{proof}

The proof of Theorem \ref{thm:higher-moment-main} is entirely analogous. That the inequality in the conclusion has a sum-of-squares proof in $u$ follows from Proposition \ref{prop:SoS-proof-for-unbounded-variable}.

\section{Certifiably subgaussian distributions}
\label{sec:subgaussianity}

In this section, we give sufficient conditions for distributions to be certifably subgaussian.
As concrete consequences, we derive that (affine transformations of) products of scalar-valued subgaussian distributions are certifiably subgaussian and that certain mixture models are certifiably subgaussian.

For the benefit of the reader, we restate here the definition of certifiable subgaussianity.

\begin{definition*}[Certifiable subgaussianity]
  A distribution $D$ over $\R^d$ with mean $\mu$ is \emph{$(k,\ell)$-certifiably subgaussian} with parameter $C>0$ if there for every positive integer $k'\le k/2$, there exists a degree-$\ell$ sum-of-squares proof of the polynomial inequality
  \begin{equation}
    \label{eq:low-degree-subgaussian-restated}
    \forall u\in \bbS^{d-1}.~
    \E_{D(x)} \iprod{x-\mu,u}^{2k'}
    \le \Paren{C\cdot k' \E_{D(x)} \iprod{x-\mu,u}^2 }^{k'}
  \end{equation}
\end{definition*}

We first note some basic invariance properties of certifiable subgaussianity.

\begin{lemma}[Invariance Under Linear Transformations] \label{lem:linear-transformation-subgaussianity}
Suppose $x \in \R^d$ is a random variable with a $(2k,\ell)$-certifiably $C$-subgaussian distribution. Then, for every matrix $A \in \R^{d \times d}$, the random variable ${Ax}$ is $(2k,\ell)$-certifiably $C$-subgaussian.
\end{lemma}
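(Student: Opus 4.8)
The plan is to derive the certificate for $Ax$ from the one for $x$ via the linear change of variables $v\mapsto A^{\top}u$, using the substitution rule for sum-of-squares proofs. First I would record two elementary observations: the mean of $Ax$ is $A\mu$, and for every direction $u\in\R^d$ one has $\iprod{Ax-A\mu,u}=\iprod{x-\mu,A^{\top}u}$. Consequently the degree-$2k'$ polynomial in $u$ whose nonnegativity on the sphere expresses level-$k'$ $C$-subgaussianity of $Ax$, namely $p^{Ax}_{k'}(u)=\bigl(Ck'\,\E\iprod{Ax-A\mu,u}^{2}\bigr)^{k'}-\E\iprod{Ax-A\mu,u}^{2k'}$, is exactly the corresponding polynomial $p^{x}_{k'}$ for $x$ evaluated at $v=A^{\top}u$ (the variance of $Ax$ in direction $u$ equals the variance of $x$ in direction $A^{\top}u$).

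Then I would unpack the hypothesis: for each positive integer $k'\le k$ there is a degree-$\ell$ sum-of-squares proof of $p^{x}_{k'}\ge 0$ over $\bbS^{d-1}$, i.e.\ an identity $p^{x}_{k'}(v)=\sigma_{k'}(v)+q_{k'}(v)\bigl(\norm{v}^2-1\bigr)$ with $\sigma_{k'}$ a sum of squares of polynomials of degree $\le\ell/2$ and $\deg q_{k'}\le\ell-2$. Substituting the linear forms $v_i=(A^{\top}u)_i$ and using that a sum of squares composed with a linear map is again a sum of squares of the same degree, one obtains $p^{Ax}_{k'}(u)=\sigma_{k'}(A^{\top}u)+q_{k'}(A^{\top}u)\bigl(\norm{A^{\top}u}^2-1\bigr)$ with $\sigma_{k'}(A^{\top}u)$ a sum of squares of degree $\le\ell$ in $u$. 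If the certificate for $x$ is, as is typically the case, a plain sum of squares (i.e.\ $q_{k'}\equiv 0$), this already presents $p^{Ax}_{k'}$ as a degree-$\ell$ sum of squares, hence a valid degree-$\ell$ sphere certificate, and we are done.

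What remains, and is the only nonmechanical step, is to convert the term $q_{k'}(A^{\top}u)\bigl(\norm{A^{\top}u}^2-1\bigr)$ into a multiple of $\bigl(\norm{u}^2-1\bigr)$, since the substitution has replaced the constraint $\norm{u}^2=1$ by $\norm{A^{\top}u}^2=1$. Here I would exploit that $p^{Ax}_{k'}$ is homogeneous of degree $2k'$ in $u$: via the singular value decomposition $A^{\top}=VDW^{\top}$, substitution by the orthogonal factors $V,W$ is harmless because orthogonal maps fix the quadratic form $\norm{u}^2$ while preserving the sum-of-squares structure and the degree, so the problem reduces to the case of diagonal $A^{\top}$, where homogeneity of $p^{x}_{k'}$ lets one rescale the coordinate directions and transport the sphere certificate. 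Tracking the sphere constraint through this non-orthogonal (and possibly singular) diagonal part is the delicate point of the argument; the rest is routine bookkeeping with the substitution and composition rules for sum-of-squares proofs. (The case $\ell=\infty$, where a ``proof'' is just a valid inequality with slack, is immediate, since substitution preserves validity.)
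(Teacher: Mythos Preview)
Your core idea—substitute $v=A^{\top}u$ and invoke the substitution rule for sum-of-squares proofs—is exactly the paper's argument, which is a single line: from the degree-$\ell$ certificate for $x$ one gets, via the identity $\iprod{Ax-A\mu,u}=\iprod{x-\mu,A^{\top}u}$ and the linear (hence degree-preserving) substitution $v\mapsto A^{\top}u$, the required degree-$\ell$ certificate for $Ax$.

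Where you go beyond the paper is in worrying about the sphere constraint: substitution turns $\|v\|^2=1$ into $\|A^{\top}u\|^2=1$, not $\|u\|^2=1$. This is a legitimate technical point that the paper's one-line proof simply does not address. However, your proposed SVD reduction is both incomplete (you label the diagonal, possibly singular case ``the delicate point'' and stop) and unnecessary. The subgaussianity inequality is \emph{homogeneous} of degree $2k'$ in the direction variable, so the sphere constraint is vacuous; and in fact every concrete certificate the paper constructs (for Gaussians, products, mixtures, empirical distributions) is an unconstrained sum of squares with no $(\|u\|^2-1)$ multiplier, and every place the paper \emph{uses} certifiable subgaussianity in later proofs it treats the inequality as holding for all $u\in\R^d$, not just $\|u\|=1$. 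Under that reading $q_{k'}\equiv 0$, your substitution step already finishes the proof, and the SVD detour can be dropped.
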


\begin{proof}
Fix any $v \in \R^d$. Invoking certifiable subgaussianity of $x$, we get that there's a degree $\ell$-SoS proof of the inequality: 
\[
\E[ \langle A x, v \rangle^{2k'} ] = \E[ \langle x, A^{\top} v \rangle^{2k'} ] \leq (Ck' \E[ \langle x, A^{\top} v \rangle^{2} ])^{k'} = (Ck' \E[ \langle Ax, v \rangle^{2} ])^{k'}
\]
\end{proof}

\begin{lemma}[Shifts of certifiable subgaussian distributions]
  \label[lemma]{lem:shifts-of-subgaussian-are-subgaussian}
  \label[lemma]{lem:subgaussian-shift-moment-bound}
  Let $k\in \N$ be even and let $D$ be a mean zero, $(k,\ell)$-certifiably $C$-subgaussian distribution over $\R^d$.
  Then, $D$ satisfies for every number $k' \leq k/2$ and vector $s\in \R^d$,
  \begin{equation}
    \forall u\in \bbS^{d-1}.~\E_{D(x)} \iprod{x+s,u}^{2k'} \le (Ck' \E_{D(x)} \iprod{x+s,u}^2)^{k'}
    \label{eq:shift-subgaussian}
  \end{equation}
  Furthermore, the above inequality \cref{eq:shift-subgaussian} has a degree-$\ell$ sum-of-squares proof.
\end{lemma}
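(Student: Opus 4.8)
The plan is to derive the claim from the certifiable subgaussianity of $D$ via one elementary sum-of-squares relaxation, followed by a binomial expansion to which the subgaussianity bound is applied termwise. Fix the shift $s\in\R^d$ and write $Y=\iprod{x,u}$ and $b=\iprod{s,u}$, so that $\iprod{x+s,u}=Y+b$; here $b$ is a fixed linear form in $u$, constant with respect to $x\sim D$, and since $D$ has mean zero we have $\E_{D(x)}\iprod{x+s,u}^2=\E_{D(x)} Y^2+b^2$. The starting point is the square identity $2Y^2+2b^2-(Y+b)^2=(Y-b)^2$, which exhibits $(Y+b)^2\le 2Y^2+2b^2$ as a degree-$2$ sum-of-squares inequality in the variables $x,u$. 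Raising both sides to the $k'$-th power preserves sum-of-squares provability of inequalities between sum-of-squares quantities (via the telescoping factorization $Q^{k'}-P^{k'}=(Q-P)\sum_{i=0}^{k'-1}Q^iP^{k'-1-i}$, in which each summand is a product of sum-of-squares polynomials), so
\begin{equation}
  \iprod{x+s,u}^{2k'}\;\le\;\bigl(2Y^2+2b^2\bigr)^{k'}\;=\;2^{k'}\sum_{j=0}^{k'}\binom{k'}{j}\,Y^{2j}\,b^{2(k'-j)}
\end{equation}
has a degree-$2k'$ sum-of-squares proof in $x,u$, using nothing about $D$ and not using the constraint $\norm{u}^2=1$.

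Next I would apply the linear functional $\E_{D(x)}$ term by term. Since $\E_{D(x)}$ maps a sum-of-squares polynomial in $(x,u)$ to a nonnegative combination of squares of polynomials in $u$, hence to a sum-of-squares polynomial in $u$, this stays inside the proof system and reduces the task to bounding each $\E_{D(x)} Y^{2j}=\E_{D(x)}\iprod{x,u}^{2j}$ for $0\le j\le k'$. Here I invoke the hypothesis that $D$ is $(k,\ell)$-certifiably $C$-subgaussian: because $j\le k'\le k/2$, \cref{def:certifiable-subgaussianity} supplies a degree-$\ell$ sum-of-squares proof, modulo $\norm{u}^2=1$, that $\E_{D(x)}\iprod{x,u}^{2j}\le\bigl(Cj\,\E_{D(x)}\iprod{x,u}^2\bigr)^{j}$. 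Multiplying the $j$-th such certificate by the square $\iprod{s,u}^{2(k'-j)}$ and the positive constant $2^{k'}\binom{k'}{j}$, summing over $j$, and using $Cj\le Ck'$ together with $Ck'\ge1$ (the latter forced by the case $j=1$), one obtains
\begin{align}
  \E_{D(x)}\iprod{x+s,u}^{2k'}
  &\le 2^{k'}\sum_{j=0}^{k'}\binom{k'}{j}\bigl(Ck'\,\E_{D(x)} Y^2\bigr)^{j}\,b^{2(k'-j)}\\
  &= 2^{k'}\bigl(Ck'\,\E_{D(x)} Y^2+b^2\bigr)^{k'}\\
  &\le \bigl(2Ck'\,(\E_{D(x)} Y^2+b^2)\bigr)^{k'}
  = \bigl(2Ck'\,\E_{D(x)}\iprod{x+s,u}^2\bigr)^{k'},
\end{align}
which is \cref{eq:shift-subgaussian}, except with $Ck'$ replaced by $2Ck'$. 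This constant-factor loss is genuine---from moment bounds alone the constant $C$ itself is not attainable, as one checks on a one-dimensional example---but harmless: it is exactly the factor $2$ that surfaces wherever the lemma is used, e.g.\ the ``$2C$'' in the proof of \cref{cor:multiplicative-covariance-identifiability-sos}. The certificate assembled this way has degree at most $\ell+2k'$, which is absorbed into $\ell$.

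I do not expect a real obstacle here; the lemma is essentially an unwinding of the definitions, and the only substantive point is bookkeeping. The point that needs care is that certifiable subgaussianity certifies moments only up to degree $k$, so the termwise bound on $\E_{D(x)}\iprod{x,u}^{2j}$ is available only for $j\le k/2$; this is exactly why the conclusion is stated for $k'\le k/2$, and it matters that the binomial expansion above only ever raises $Y^2$ to powers $j\le k'\le k/2$, so that every term lies within the regime the hypothesis controls. It is also worth double-checking that the ``free'' steps---the square identity, the $k'$-th power, and the recombination into $(\,\cdot\,)^{k'}$---are unconditional sum-of-squares facts, so that the only appeal to $\norm{u}^2=1$ is the one inherited from the subgaussianity certificates; this is what makes the final object a degree-$\ell$ sum-of-squares proof over the sphere.
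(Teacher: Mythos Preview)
Your argument is correct, and you have correctly identified that the lemma as stated (with the \emph{same} parameter $C$) is not attainable; the paper's own proof also loses a constant, and indeed every downstream use of the lemma (e.g.\ the proof of \cref{cor:multiplicative-covariance-identifiability-sos}) invokes it with $2C$.

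Your route, however, differs from the paper's. The paper reduces to a scalar proposition and argues via the Minkowski/$L_k$ triangle inequality: $(\E(X+s)^k)^{1/k}\le(\E X^k)^{1/k}+|s|$, then bounds the right-hand side by $2(C\,\E(X+s)^2)^{1/2}$ using $\E(X+s)^2=\E X^2+s^2$ and the elementary fact $\sqrt{a}+\sqrt{b}\le 2\sqrt{a+b}$. That is a clean classical proof but, as written, does not directly yield the sum-of-squares certificate the lemma also asserts; making it SoS would require an SoS version of Minkowski, which is doable but not entirely free. Your approach---start from the pointwise square identity $(Y+b)^2\le 2Y^2+2b^2$, raise to the $k'$-th power, expand binomially, and apply the certifiable subgaussianity bound to each $\E_{D(x)}\iprod{x,u}^{2j}$ termwise---stays inside the SoS proof system throughout, so the certifiability claim comes for free. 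It also gives the sharper constant $2C$ (the paper's chain yields $4C$). The price is that your proof touches every moment degree $j\le k'$ rather than just $k'$, but as you note the hypothesis covers all $j\le k/2$, so this is harmless. Your degree bookkeeping ($\ell+2k'$ rather than $\ell$) is the usual slack that the paper absorbs implicitly.
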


Next, we construct examples of various classes of distributions that are certifiably subgaussian. We start by noting that the standard gaussian distribution is $k$-certifiably $O(1)$-subgaussian for every $k$. 

\begin{lemma}[Certifiable Subgaussianity of Standard Gaussian]
  Let $x$ be the standard gaussian random variable with mean $0$ and covariance $\Id$ on $\R^d$.
  Then, $x$ is $(2k,2k)$certifiably $1$-subgaussian for every $k$.  \label{lem:subgaussianity-of-gaussian}
\end{lemma}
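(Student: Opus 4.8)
The plan is to show that for the standard Gaussian the inequality \eqref{eq:low-degree-subgaussian} degenerates into a scalar numeric inequality multiplied by the manifestly sum-of-squares polynomial $(\|u\|^2)^{k'}$, so that no genuine sum-of-squares reasoning beyond this trivial observation is needed.

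First I would compute both sides of \eqref{eq:low-degree-subgaussian} as polynomials in $u$. For $x\sim\cN(0,\Id)$ and any $u\in\R^d$, the linear form $\iprod{x,u}$ is a one-dimensional centered Gaussian of variance $\|u\|^2$, so $\E_{D(x)}\iprod{x,u}^2=\|u\|^2$ and, by the standard formula for even moments of a Gaussian, $\E_{D(x)}\iprod{x,u}^{2k'}=(2k'-1)!!\cdot(\|u\|^2)^{k'}$ as an identity of polynomials in $u$, where $(2k'-1)!!\defeq 1\cdot 3\cdots(2k'-1)$. Hence, with $C=1$, the polynomial whose nonnegativity we must certify is
\[
p(u)\;=\;\bigl((k')^{k'}-(2k'-1)!!\bigr)\cdot(\|u\|^2)^{k'}\,.
\]

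Next I would verify the scalar inequality $(2k'-1)!!\le (k')^{k'}$ by AM--GM, using $\sum_{i=1}^{k'}(2i-1)=(k')^2$: indeed $(2k'-1)!!=\prod_{i=1}^{k'}(2i-1)\le\bigl(\tfrac1{k'}\sum_{i=1}^{k'}(2i-1)\bigr)^{k'}=(k')^{k'}$, so the scalar prefactor in $p$ is nonnegative. Finally, $(\|u\|^2)^{k'}=(u_1^2+\dots+u_d^2)^{k'}=\sum_{|\alpha|=k'}\binom{k'}{\alpha}(u^\alpha)^2$ is a sum of squares of monomials with nonnegative coefficients, so multiplying by the nonnegative constant $(k')^{k'}-(2k'-1)!!$ leaves $p$ a sum of squares of degree $2k'$. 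Taking $q\equiv 0$ in the sphere formulation of a sum-of-squares proof (the footnote following \cref{def:certifiable-subgaussianity}), this is a degree-$2k'$, hence degree-$2k$, sum-of-squares proof of \eqref{eq:low-degree-subgaussian}; since it works for every positive integer $k'\le k$, the distribution is $(2k,2k)$-certifiably $1$-subgaussian.

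There is no real obstacle here; the single point requiring care is obtaining the Gaussian moment $\E\iprod{x,u}^{2k'}=(2k'-1)!!\,(\|u\|^2)^{k'}$ as an exact polynomial identity in $u$ with the correct constant, after which the claim reduces to the elementary AM--GM bound on the double factorial and the trivial fact that $(\|u\|^2)^{k'}$ is a sum of squares. (One could instead use the sphere constraint $\|u\|^2=1$ to replace $(\|u\|^2)^{k'}$ by $1$ and lower the degree, but this is unnecessary since $2k'\le 2k$ already.)
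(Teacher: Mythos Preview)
Your proof is correct and follows essentially the same route as the paper: use rotational invariance of the standard Gaussian to reduce the inequality \eqref{eq:low-degree-subgaussian} to a scalar comparison times the manifestly sum-of-squares polynomial $(\|u\|^2)^{k'}$. Your treatment is in fact more careful than the paper's: the paper only writes $\Gamma_{k'}<(2k')^{k'}$, which on its face would yield parameter $C=2$, whereas your AM--GM argument $\prod_{i=1}^{k'}(2i-1)\le\bigl(\tfrac{1}{k'}\sum_{i=1}^{k'}(2i-1)\bigr)^{k'}=(k')^{k'}$ cleanly delivers the claimed $C=1$.
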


\begin{proof}
By rotational symmetry of the standard gaussian and the fact that $\E[x_i^{2k}] = \Gamma_k$, $\E[ \iprod{x,u}^{2k}] = \Gamma_k \E[\iprod{x,u}^2]^{k}$ for some $\Gamma_k < (2k)^k$. 
\end{proof}

Next, we prove a similar fact about uniform mixtures of arbitrary mean gaussians. 

\begin{lemma} \label{lem:mixtures-of-gaussians-certifiable}
Let $\mu_1, \mu_2, \ldots \mu_q$ be arbitrary.
Let $D$ be the mixture of $\cN(\mu_i,I)$ with mixture weights $\frac{1}{q}$ for each component.
Then, $D$ is $(k,k)$-certifiably $C$-subgaussian with $C=O(q)$.
\end{lemma}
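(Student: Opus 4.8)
The plan is to reduce to the single-Gaussian case, which is already handled by \cref{lem:subgaussianity-of-gaussian} and \cref{lem:shifts-of-subgaussian-are-subgaussian}, and then to average over the $q$ mixture components using an elementary sum-of-squares inequality for sums of squares of \emph{nonnegative} polynomials.

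First I would fix notation: write $\mu=\tfrac1q\sum_i\mu_i$ for the mean of $D$ and set $a_i=\mu_i-\mu$ (so $\sum_i a_i=0$), and fix an arbitrary $k'\le k/2$. If $x\sim\cN(\mu_i,\Id)$ then $x-\mu$ has the law of $h+a_i$ for $h\sim\cN(0,\Id)$, so as formal identities in $u$,
\[
  \E_{D(x)}\iprod{x-\mu,u}^{2k'}=\tfrac1q\sum_{i=1}^q\E_{h\sim\cN(0,\Id)}\iprod{h+a_i,u}^{2k'},\qquad
  \E_{D(x)}\iprod{x-\mu,u}^{2}=\tfrac1q\sum_{i=1}^q B_i(u),
\]
where $B_i(u):=\E_h\iprod{h+a_i,u}^2=\norm{u}^2+\iprod{a_i,u}^2$ is visibly a sum of squares in $u$. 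By \cref{lem:subgaussianity-of-gaussian}, the standard Gaussian is $(k,k)$-certifiably $1$-subgaussian, so \cref{lem:shifts-of-subgaussian-are-subgaussian} applied with shift $s=a_i$ yields, for each $i$, a degree-$k$ sum-of-squares proof in $u$ of $\E_h\iprod{h+a_i,u}^{2k'}\le\bigl(k'\,B_i(u)\bigr)^{k'}$. Averaging these $q$ proofs with weights $1/q$ gives a degree-$k$ sum-of-squares proof of $\E_{D(x)}\iprod{x-\mu,u}^{2k'}\le\tfrac{(k')^{k'}}{q}\sum_i B_i(u)^{k'}$.

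The remaining step — and the only mildly delicate point — is to certify $\sum_i B_i^{k'}\le\bigl(\sum_i B_i\bigr)^{k'}$ by a sum-of-squares proof; note this is false for arbitrary reals and relies crucially on each $B_i$ being a sum of squares. I would prove it by expanding $\bigl(\sum_i B_i\bigr)^{k'}=\sum_{i_1,\dots,i_{k'}}B_{i_1}\cdots B_{i_{k'}}$ and observing that each monomial is a product of sum-of-squares polynomials, hence a sum of squares, so the difference $\bigl(\sum_i B_i\bigr)^{k'}-\sum_i B_i^{k'}$ (the sum over tuples that are not constant) is a degree-$2k'$ sum of squares. Chaining this with the previous bound and using $\sum_i B_i=q\cdot\E_{D(x)}\iprod{x-\mu,u}^2$ produces a sum-of-squares proof of
\[
  \E_{D(x)}\iprod{x-\mu,u}^{2k'}\;\le\;(k')^{k'}q^{k'-1}\bigl(\E_{D(x)}\iprod{x-\mu,u}^2\bigr)^{k'}\;\le\;\bigl(qk'\cdot\E_{D(x)}\iprod{x-\mu,u}^2\bigr)^{k'},
\]
where the last step is the numerical inequality $q^{k'-1}\le q^{k'}$ (valid since $q\ge1$). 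Since $2k'\le k$, the whole certificate has degree $k$, so this is exactly the certifiable-subgaussianity inequality for this $k'$ with parameter $C=q$; as $k'\le k/2$ was arbitrary, $D$ is $(k,k)$-certifiably $C$-subgaussian with $C=q=O(q)$. I do not expect any genuine obstacle: all the content is in the standard sum-of-squares multinomial trick for nonnegative polynomials and in checking that no degree in the assembled proof exceeds $k$.
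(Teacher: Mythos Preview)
Your proposal is correct and follows essentially the same route as the paper: apply the shift-of-subgaussian lemma to each Gaussian component to bound the $2k'$-th moment by $(O(k')\,B_i(u))^{k'}$ with $B_i(u)$ a sum of squares, and then upper-bound $\tfrac1q\sum_i B_i^{k'}$ by $q^{k'-1}\bigl(\tfrac1q\sum_i B_i\bigr)^{k'}$ via the multinomial expansion. The only difference is cosmetic: the paper does not explicitly center by the mixture mean $\mu=\tfrac1q\sum_i\mu_i$ (it works with raw moments $\E\iprod{x,u}^{2k'}$, implicitly assuming $\mu=0$ by translation), whereas you carry out the centering explicitly, which makes your write-up cleaner and more faithful to the definition.
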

\begin{proof}
We have using Lemma \ref{lem:shifts-of-subgaussian-are-subgaussian} and certifiable $C$-subgaussianity of the gaussian distribution for $C = O(1)$,
\begin{align*}
\E \iprod{x,u}^{2k} &= \frac{1}{q} \sum_i \E_{x \sim \cN(\mu_i,I)} \iprod{x, u}^{2k}\\
&\leq (2Ck)^k\frac{1}{q} \sum_i \Paren{\E_{x \sim \cN(\mu_i,I)} \iprod{x,u}^{2}}^{k}\\ 
&= (2Ck)^k \frac{1}{q} (\iprod{u,\mu_i}^2 + 1)^{k} \leq q^{k-1} (Ck)^k \Paren{\frac{1}{q} \sum_i (\iprod{u,\mu_i}^2+1)}^k.
\end{align*}
On the other hand, $\E \iprod{x,u}^2 = \frac{1}{q} \sum_i (\iprod{u,\mu_i}^2+1)$. Thus, we obtain that $D$ is $(k,k)$-certifiably $(2Cq)$ subgaussian. 
\end{proof}

\begin{lemma}[Sampling Preserves Certifiable Subgaussianity] \label{lem:sampling-preserves-subgaussianity}
Let $\cD$ be a $(k,\ell)$-certifiably subgaussian distribution on $\R^d$. Let $S$ be an i.i.d. sample from $\cD$ of size $n \geq n_0 = \Omega((d\log{(d/\delta)})^{k/2}/\epsilon^2)$. Then, with probability at least $1-\delta$ over the draw of $S$ according to $\cD$, the uniform distribution on $S$, $\U_S$, is $(k,\ell)$ certifiably $(C+\epsilon)$-subgaussian.
\end{lemma}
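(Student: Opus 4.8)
The plan is to split the claim into a probabilistic part --- that the empirical moment tensors of degree at most $k$ concentrate around the population ones --- and an algebraic part --- that the degree-$\ell$ sum-of-squares certificate guaranteed for $\cD$ can be perturbed into one for $\U_S$ at a cost of $\epsilon$ in the parameter. I would first reduce to the isotropic case: certifiable subgaussianity and the quantity $n_0$ are both invariant under an invertible linear change of coordinates (apply \cref{lem:linear-transformation-subgaussianity} with $\Sigma^{-1/2}$ and again with $\Sigma^{1/2}$), and in directions in the kernel of $\Sigma$ the distribution --- hence every sample point --- is deterministic so that both sides of \cref{eq:low-degree-subgaussian} vanish there; thus we may assume $\cD$ has mean $0$ and covariance $\Id$. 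The target is then, for each integer $k'\le k/2$, a degree-$\ell$ sum-of-squares proof over the sphere of $\E_{\cD(x)}\iprod{x-\hat\mu,u}^{2k'}\le\bigparen{(C+\epsilon)k'\iprod{u,\hat\Sigma u}}^{k'}$ for the uniform distribution, where $\hat\mu=\tfrac1n\sum_i x_i$ and $\hat\Sigma=\tfrac1n\sum_i\dyad{(x_i-\hat\mu)}$.

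The probabilistic ingredient --- which I expect to be the main obstacle --- is that, for $n\ge n_0$, with probability at least $1-\delta$ one has $\norm{\hat\mu}\le\epsilon_0$, $(1-\epsilon_0)\Id\preceq\hat\Sigma\preceq(1+\epsilon_0)\Id$, and $\norm{\widehat{\mathbf M}_{k'}-\mathbf M_{k'}}\le\epsilon_0(Ck')^{k'}$ in operator norm for every $k'\le k/2$, where $\mathbf M_{k'}=\E_{\cD(x)}\dyad{(x^{\otimes k'})}$ and $\widehat{\mathbf M}_{k'}=\tfrac1n\sum_i\dyad{(x_i^{\otimes k'})}$ are the $d^{k'}\times d^{k'}$ flattenings of the degree-$2k'$ moment tensors, and $\epsilon_0$ is a sufficiently small multiple of a fixed power of $\epsilon/(Ck)$. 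The deterministic input is the moment bound $\E_{\cD(x)}\norm{x}^{2m}\le(Cmd)^{m}$ for $m\le k/2$, obtained by integrating \cref{eq:low-degree-subgaussian} over the uniform measure on $\bbS^{d-1}$ and using $\norm{x}^2=d\cdot\E_{u}\iprod{x,u}^2$ together with Jensen's inequality. The bounds on $\hat\mu,\hat\Sigma$ follow from standard second- and fourth-moment concentration; for $k'\le k/4$ the tensor bound follows from matrix Bernstein applied to the rank-one summands $\dyad{(x_i^{\otimes k'})}$, whose sizes $\norm{x_i}^{2k'}$ have finite second moment $\E_{\cD(x)}\norm{x}^{4k'}\le(2Ck'd)^{2k'}$, and it is precisely the factor $(Ck'd)^{k'}\le(Ckd)^{k/2}$ here that produces the $(d\log(d/\delta))^{k/2}$ in $n_0$. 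The obstacle is the regime $k/4<k'\le k/2$, in particular the top moment $k'=k/2$: only degree-$k$ control of $\cD$ is available, so the summands have no finite second moment a priori and one must truncate $x$ at a radius $R$ chosen so that both the truncated-variance term and the truncation-bias term in Bernstein's inequality fall below $\epsilon_0(Ck')^{k'}$ already at $n=n_0$; this is where the precise dependence of $n_0$ on $d,\epsilon,\delta$ is decided.

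For the algebraic part, fix $k'\le k/2$ and condition on the concentration event. Write the target inequality as $P_{k'}(u)+(\star)(u)+(\star\star)(u)\ge0$, where $P_{k'}(u)=\bigparen{Ck'\norm{u}^2}^{k'}-\E_{\cD(x)}\iprod{x,u}^{2k'}$ (nonnegative over the sphere with a degree-$\ell$ sum-of-squares proof, by hypothesis), $(\star)(u)=\bigparen{(C+\epsilon)k'\iprod{u,\hat\Sigma u}}^{k'}-\bigparen{Ck'\norm{u}^2}^{k'}$, and $(\star\star)(u)=\E_{\cD(x)}\iprod{x,u}^{2k'}-\E_{\U_S(x)}\iprod{x-\hat\mu,u}^{2k'}$. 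Since $(1-\epsilon_0)\Id\preceq\hat\Sigma$, the quadratic forms $p=(C+\epsilon)k'\iprod{u,\hat\Sigma u}$ and $q=Ck'\norm{u}^2$ satisfy $p-q\succeq\tfrac{\epsilon}{2C}q\succeq0$, so the telescoping identity $p^{k'}-q^{k'}=(p-q)\sum_{j<k'}p^{k'-1-j}q^j$ --- a product and sum of sums of squares --- exhibits $(\star)$ as a degree-$2k'$ (hence degree-$\ell$, as $\ell\ge k\ge2k'$) sum of squares with $(\star)-\tfrac{\epsilon}{2C}q^{k'}$ again a sum of squares. For $(\star\star)$: the tensor bound gives $\pm\bigparen{\E_{\U_S(x)}\iprod{x,u}^{2k'}-\E_{\cD(x)}\iprod{x,u}^{2k'}}=\pm\iprod{u^{\otimes k'},(\widehat{\mathbf M}_{k'}-\mathbf M_{k'})u^{\otimes k'}}\le\epsilon_0 q^{k'}$ modulo a degree-$2k'$ sum of squares, using the positive semidefinite matrices $\epsilon_0(Ck')^{k'}\Id\pm(\widehat{\mathbf M}_{k'}-\mathbf M_{k'})$; and re-centering from $\hat\mu$ to $0$ costs an additional $O(\norm{\hat\mu}\sqrt{k/C})\,q^{k'}$ modulo sums of squares, obtained by expanding $\iprod{x-\hat\mu,u}^{2k'}$ and bounding each cross term $\iprod{\hat\mu,u}^{j}\iprod{x,u}^{2k'-j}$, $j\ge1$, by the sum-of-squares version of Young's inequality (with weights chosen so that the $\iprod{\hat\mu,u}^{2k'}$-part is absorbed into $\norm{\hat\mu}^{2k'}\norm{u}^{2k'}$), together with the already-established bound $\E_{\U_S(x)}\iprod{x,u}^{2k'}\le(1+\epsilon_0)q^{k'}$. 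As $\norm{\hat\mu}\le\epsilon_0$ and $\epsilon_0$ is a small enough multiple of a fixed power of $\epsilon/(Ck)$ --- which costs only $\poly(d,Ck,1/\epsilon,\log(1/\delta))$ additional samples, absorbed into $n_0$ --- the total error in $(\star\star)$ is at least $-\tfrac{\epsilon}{2C}q^{k'}$ modulo sums of squares, so $(\star)+(\star\star)$ is a sum of squares and $P_{k'}+(\star)+(\star\star)$ has a degree-$\ell$ sum-of-squares proof over the sphere. This holds for every $k'\le k/2$, so $\U_S$ is $(k,\ell)$-certifiably $(C+\epsilon)$-subgaussian; undoing the linear reduction via \cref{lem:linear-transformation-subgaussianity} completes the proof.
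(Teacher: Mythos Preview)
Your approach is essentially the same as the paper's: reduce to the isotropic case, establish spectral-norm concentration of the empirical moment tensors $\widehat{\mathbf M}_{k'}$ around the population ones via matrix concentration (the paper invokes Matrix Rosenthal combined with matrix Chebyshev), and then convert this into a sum-of-squares perturbation of the certificate. Your decomposition $P_{k'}+(\star)+(\star\star)$ is a cleaner way to organize the same argument, and you are more careful than the paper on two points it glosses over --- the re-centering from $0$ to the sample mean $\hat\mu$, and the truncation needed for the top moment $k'=k/2$ where only degree-$k$ control of $\cD$ is available.
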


We will use the following matrix concentration inequality in the proof. 

\begin{fact}[Matrix Rosenthal Inequality] \label{fact:matrix-rosenthal}
Fix $p \geq 1.5$. For a finite sequence $P_k$ for $k \geq 1$ of independent, random psd matrices that satisfy $\E \|P_k\|_{2p}^{2p} < \infty$, 
\[
\Paren{ \E \|\sum_k P_k\|_{2p}^{2p} }^{1/2p} \leq \Paren{\|\sum_k \E P_k\|_{2p}^{1/2} + \sqrt{4p-2} \Paren{\sum_k \|P_k\|_{2p}^{2p}}^{1/4p} }^2\mper
\]
\end{fact}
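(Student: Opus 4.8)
This is a standard Rosenthal-type matrix concentration estimate, and the cleanest option is simply to cite it from the matrix-concentration literature (versions appear in work of Chen--Gittens--Tropp and of Mackey et al.). If we prefer to reproduce the argument, here is the plan. Reading the last sum with an expectation --- as it must be, for the right-hand side to be deterministic --- the target is $\bigparen{\E\Norm{Z}_{2p}^{2p}}^{1/2p} \le \bigparen{\Norm{W}_{2p}^{1/2} + \sqrt{4p-2}\,b^{1/4p}}^2$ where $Z = \sum_k P_k$, $W = \E Z = \sum_k \E P_k$, and $b = \sum_k \E\Norm{P_k}_{2p}^{2p}$. Since $\Norm{Z}_{2p}^{2p} = \Tr(Z^{2p})$, we need to control the trace moment $\E\Tr(Z^{2p})$.

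First I would separate the mean: write $Z = W + Y$ with $Y = \sum_k Y_k$, $Y_k = P_k - \E P_k$ independent, self-adjoint, mean-zero. Since $\Norm{\cdot}_{2p}$ is a norm and by Minkowski's inequality in $L^{2p}$ over the randomness,
\begin{equation}
  \Paren{\E\Norm{Z}_{2p}^{2p}}^{1/2p}
  \le \Norm{W}_{2p} + \Paren{\E\Norm{Y}_{2p}^{2p}}^{1/2p}\mper
\end{equation}
Expanding the square in the target as $\Norm{W}_{2p} + 2\sqrt{4p-2}\,\Norm{W}_{2p}^{1/2}b^{1/4p} + (4p-2)b^{1/2p}$, it therefore suffices to bound the centered piece $\bigparen{\E\Norm{Y}_{2p}^{2p}}^{1/2p}$ by $2\sqrt{4p-2}\,\Norm{W}_{2p}^{1/2}b^{1/4p} + (4p-2)b^{1/2p}$.

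For the centered sum I would invoke the noncommutative Rosenthal--Pinelis inequality for independent self-adjoint mean-zero matrices --- itself derivable from scratch by the method of exchangeable pairs (form the Stein pair $(Y,Y')$ by resampling a uniformly chosen summand, compute the conditional-variance proxy, and run the matrix Stein recursion of Mackey et al.) --- which bounds $\bigparen{\E\Norm{Y}_{2p}^{2p}}^{1/2p}$ by $\sqrt{2p-1}$ times a suitable Schatten norm of $(\sum_k \E Y_k^2)^{1/2}$, plus $(2p-1)\bigparen{\sum_k \E\Norm{Y_k}_{2p}^{2p}}^{1/2p}$. The ``jump'' term is harmless since $\Norm{Y_k}_{2p}\le 2\Norm{P_k}_{2p}$ pointwise, so it is at most $2b^{1/2p}$. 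The real work is the variance term: using $\E Y_k^2 = \E P_k^2 - (\E P_k)^2 \preceq \E P_k^2$ and the positive-semidefinite inequality $P_k^2 \preceq \Norm{P_k}\,P_k$ one gets $\sum_k \E Y_k^2 \preceq \sum_k \E\bigparen{\Norm{P_k}\,P_k}$, and this must be bounded, in the relevant Schatten norm, by the product of $\Norm{W}_{2p}$ and $b^{1/2p}$ through a \Holder/interpolation argument that compares the matrices $\E(\Norm{P_k}P_k)$ with $\E P_k$. Tracking the constants so that the resulting polynomial inequality in $t = (\E\Tr Z^{2p})^{1/2p}$ rearranges exactly into the stated squared form is the one genuinely delicate point; the rest is bookkeeping.

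I expect this variance estimate to be the main obstacle. In the bounded regime where $\Norm{P_k} \le R$ almost surely one simply has $\sum_k \E Y_k^2 \preceq R\,W$, so the variance proxy is at most $R^{1/2}\Norm{W}^{1/2}$ with no further work; in the unbounded regime one must pay for the random operator norms through their $L^{2p}$-moments, and recovering the favorable exponent $1/4p$ on $b = \sum_k \E\Norm{P_k}_{2p}^{2p}$ --- rather than the lossy $1/2p$ an $\ell_\infty$ bound would give --- is exactly what the interpolation step achieves. The triangle-inequality split, the Rosenthal--Pinelis bound for the centered part, and the final algebraic rearrangement are all routine.
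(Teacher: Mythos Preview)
The paper does not prove this statement at all: it is stated as a \emph{Fact} and used as a black box in the proof of \cref{lem:sampling-preserves-subgaussianity}, with no argument or even an explicit citation attached. Your opening sentence --- that the cleanest option is simply to cite the matrix-concentration literature (Chen--Gittens--Tropp, Mackey et al.) --- is exactly right and is already more than the paper provides.

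Everything after that first sentence goes well beyond the paper. Your sketch (split off the mean via Minkowski, apply a noncommutative Rosenthal--Pinelis bound to the centered sum via exchangeable pairs, then interpolate to control the variance term) is a reasonable outline of how such inequalities are actually proved in the cited works, but be aware that the paper makes no attempt at this and you are not being asked to reproduce it. If you do want to keep the sketch, the one place you correctly flag as delicate --- getting the $b^{1/4p}$ exponent via interpolation rather than a lossy $b^{1/2p}$ --- is genuinely the nontrivial step, and your plan there is more of a hope than an argument; for the purposes of this paper a citation suffices.
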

\begin{proof}
Without loss of generality, assume that $M_2 = \E_D xx^{\top}$ is full-rank. If not, we just work in the range-space of $\E_D xx^{\top}$ instead, as every sample $x$ would live in this subspace.  

Now, consider the affine transformation $x \rightarrow M_2^{-1/2} x$ which makes second moment is now $I$. By Lemma \ref{lem:linear-transformation-subgaussianity}, it is enough to show certifiable subgaussianity of the distribution after this transformation. Thus, without loss of generality, we  assume that the $D$ is isotropic, that is, $M_2 = I.$ 

Now, by sum-of-squares version of the Cauchy-Schwarz inequality
\begin{align} 
\sststile{2k'}{u} \pm (Ck')^{k'/2}(\E_{\U_S} \iprod{x,u}^2)^{k'}-\E_{\U_S} \iprod{x,u}^{2k'} &= \pm \Iprod{(Ck')^{k'/2}\Paren{\E_{\U_S} x^{\otimes 2}}^{\otimes k'} -\E_{\U_S} x^{\otimes 2k'}, u^{\otimes 2k'}} \notag \\
&\leq \|(Ck')^{k'/2}\Paren{\E_{\U_S} x^{\otimes 2}}^{\otimes k'} -\E_{\U_S} x^{\otimes 2k'}\|_{\infty} \|u\|_2^{2k'}\, \label{eq:sos-to-spectral}
\end{align}
where $\|(Ck')^{k'/2}\Paren{\E_{\U_S} x^{\otimes 2}}^{\otimes k'} -\E_{\U_S} x^{\otimes 2k'}\|_{\infty}$ is the spectral norm of the canonical $d^{k'} \times d^{k'}$ flattening of the $2k'$-order tensor $T(\U_S) = (Ck')^{k'/2}\Paren{\E_{\U_S} x^{\otimes 2}}^{\otimes k'} -\E_{\U_S} x^{\otimes 2k'}$. Let $T(D)$ be the corresponding tensor for $D$.

By standard matrix concentration, whenever $n_0 \gg (d \log{d}/\epsilon^2)$, $\E_{\U_S} \iprod{x,u}^2 > 1-\epsilon.$
By standard argument combining Fact \ref{fact:matrix-rosenthal} with Matrix-Chebyshev inequality, we can obtain that with probability at least $1-\delta$ over the draw of the sample $S$ of size $n \geq n_0 = \Omega((d\log{(d/\delta)})^{k/2}/\epsilon^2)$, for any $k' \leq k/2$,
\begin{equation}
\|T(D) - T(\U_S) \|_{\infty} \leq \epsilon. \label{eq:spectral-conc-tensor}
\end{equation}

Thus, along with \eqref{eq:sos-to-spectral}, we obtain that: 
\[
\sststile{2k'}{u} (Ck')^{k'/2}(\E_{\U_S} \iprod{x,u}^2)^{k'}-\E_{\U_S} \iprod{x,u}^{2k'} \leq \epsilon \|u\|_2^{2k'} < 2\epsilon \Paren{\E_{\U_S} \iprod{x,u}^2}^{k'}\mper
\]
Rearranging thus yields that: 
\[
\sststile{2k'}{u} \Paren{(Ck')^{k'/2} +\epsilon)} (\E_{\U_S} \iprod{x,u}^2)^{k'}-\E_{\U_S} \iprod{x,u}^{2k'} \geq 0\,
\]
completing the proof.
\end{proof}

\begin{lemma}[Certifiable Subgaussianity of Mixtures]
Let $D_1$ and $D_2$ be two $(k,\ell)$-certifiably $C$-subgaussian distributions with covariances $\Sigma_1, \Sigma_2$ respectively. Suppose further that $ \Delta \Sigma_1 \succeq  \Sigma_2 \succeq \Sigma_1$.

Let $D$ be a mixture of $D_1$ and $D_2$ with mixture weights $1-\lambda$ and $\lambda$ respectively such that $\lambda < \Delta^{-k/2}$. 

Then, $D$ is $(k,\ell)$-certifiably $2C$-subgaussian. \label{lem:certifiably-subgaussian-mixtures}
\end{lemma}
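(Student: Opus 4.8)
The plan is to lift the elementary mixture computation behind \cref{lem:mixtures-of-gaussians-certifiable} to the level of degree-$\ell$ sum-of-squares certificates in the sphere variable $u$. Since certifiable subgaussianity only constrains central moments and the statement involves only translation-invariant data, I would first reduce to the case where $D_1$ and $D_2$ (hence $D$) are centered at the origin. Write $\Sigma=(1-\lambda)\Sigma_1+\lambda\Sigma_2$ for the covariance of $D$, so that $\iprod{u,\Sigma u}=\E_{D(x)}\iprod{x,u}^2$, and recall $\Sigma_1\preceq\Sigma_2\preceq\Delta\Sigma_1$ (in particular $\Delta\ge1$).

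Fix a positive integer $k'\le k/2$. The first step is the split
\[
  \E_{D(x)}\iprod{x,u}^{2k'}=(1-\lambda)\,\E_{D_1(x)}\iprod{x,u}^{2k'}+\lambda\,\E_{D_2(x)}\iprod{x,u}^{2k'}
\]
combined with the hypothesis: each $D_i$ supplies a degree-$\ell$ sum-of-squares proof over $\bbS^{d-1}$ of $\E_{D_i(x)}\iprod{x,u}^{2k'}\le\bigparen{Ck'\iprod{u,\Sigma_i u}}^{k'}$, and the nonnegative combination of these two proofs with weights $1-\lambda$ and $\lambda$ certifies $\E_{D(x)}\iprod{x,u}^{2k'}\le(1-\lambda)\bigparen{Ck'\iprod{u,\Sigma_1 u}}^{k'}+\lambda\bigparen{Ck'\iprod{u,\Sigma_2 u}}^{k'}$. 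The second step replaces $\Sigma_1,\Sigma_2$ by $\Sigma$ in these terms: from $\Sigma-\Sigma_1=\lambda(\Sigma_2-\Sigma_1)\succeq0$ and $\Delta\Sigma-\Sigma_2=\Delta\lambda(\Sigma_2-\Sigma_1)+(\Delta\Sigma_1-\Sigma_2)\succeq0$ the quadratic forms $\iprod{u,(\Sigma-\Sigma_1)u}$ and $\iprod{u,(\Delta\Sigma-\Sigma_2)u}$ are sums of squares of linear forms, and feeding this into the identity $b^{k'}-a^{k'}=(b-a)\sum_{j=0}^{k'-1}a^j b^{k'-1-j}$ --- which exhibits $b^{k'}-a^{k'}$ as a sum of squares whenever $a$ and $b-a$ are (so that $b$, and each $a^j b^{k'-1-j}$, is too) --- gives degree-$2k'$ sum-of-squares proofs of $\bigparen{Ck'\iprod{u,\Sigma_1 u}}^{k'}\le\bigparen{Ck'\iprod{u,\Sigma u}}^{k'}$ and of $\bigparen{Ck'\iprod{u,\Sigma_2 u}}^{k'}\le\Delta^{k'}\bigparen{Ck'\iprod{u,\Sigma u}}^{k'}$. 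Chaining the three certificates (total degree $\max\{\ell,2k'\}=\ell$, since $\ell\ge k\ge2k'$) yields
\[
  \E_{D(x)}\iprod{x,u}^{2k'}\le\bigparen{(1-\lambda)+\lambda\Delta^{k'}}\bigparen{Ck'\,\E_{D(x)}\iprod{x,u}^2}^{k'}.
\]
Finally, $k'\le k/2$ and $\lambda<\Delta^{-k/2}\le\Delta^{-k'}$ give $\lambda\Delta^{k'}<1$, so the scalar prefactor is $<2\le2^{k'}$, and the bound becomes $\E_{D(x)}\iprod{x,u}^{2k'}\le\bigparen{2Ck'\,\E_{D(x)}\iprod{x,u}^2}^{k'}$ with a degree-$\ell$ sum-of-squares proof. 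As this holds for every positive integer $k'\le k/2$, $D$ is $(k,\ell)$-certifiably $2C$-subgaussian.

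The moment split and the closing arithmetic in $\lambda$ and $\Delta$ are routine; the step that deserves real care is the second one --- checking that an SoS-certified inequality $a\le b$ between sum-of-squares expressions can be raised to the $k'$-th power while remaining SoS and staying inside degree $\ell$. This is the same ``monotonicity under exponentiation'' bookkeeping already used (alongside the $L_k$-triangle and \Holder inequalities) in the proofs of \cref{lem:mean-estimation-sos} and \cref{lem:multiplicative-second-moment-sos}, and the degree count closes because the hypothesized certificates already have degree $\ell\ge k\ge2k'$.
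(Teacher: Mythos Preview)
Your proof is correct and follows essentially the same route as the paper's: reduce to the mean-zero case, split the $2k'$-th moment of the mixture, apply the certifiable-subgaussianity certificate to each component, compare the quadratic forms via the \Lowner relations, and finish with the scalar bound $(1-\lambda)+\lambda\Delta^{k'}<2$. The only cosmetic difference is that the paper routes both component bounds through $\iprod{u,\Sigma_1 u}$ (using $\Sigma_2\preceq\Delta\Sigma_1$ and then $\Sigma_1\preceq\Sigma$), whereas you route directly to $\iprod{u,\Sigma u}$ via $\Sigma_1\preceq\Sigma$ and $\Sigma_2\preceq\Delta\Sigma$; the two are equivalent. Your write-up is in fact more explicit than the paper's about why the ``monotonicity under exponentiation'' step stays sum-of-squares of degree at most $2k'\le\ell$.
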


\begin{proof}
Assume that $D_1$ and $D_2$ are mean zero. 
We have for any $u$,

$\vdash_{2} \E_{D_1} \iprod{x,u}^2 \leq \E_{D} \iprod{x,u}^2 \leq (1-\lambda + \lambda \Delta) \E_{D_1} \iprod{x,u}^2$.

On the other hand, for any $k' \leq k/2$, $\vdash_{2k'} \E_{D} \iprod{x,u}^{2k'} \leq (1-\lambda) \Paren{Ck\E_{D_1} \iprod{x,u}^2}^{k'} + \lambda \E_{D_2} \Paren{Ck'\E_{D_2} \iprod{x,u}^2}^{k'} \leq (1-\lambda + \lambda \Delta^{k'}) \Paren{Ck'\E_{D_1} \iprod{x,u}^2}^{k'}$. 

Thus, $D$ is $(k,\ell)$-certifiably $2C$-subgaussian. 
\end{proof}

The proof of this lemma follows immediately from the following proposition. 

\begin{proposition}[Invariance Under Additive Shifts]
Let $k\in \N$ be even and let  $X$ be a mean zero, scalar random variable satisfying $(\E X^k)^{1/k}\le  (C \E X^2)^{1/2}$. Then, every $s\in R$ satisfies $(\E (X+s)^k)^{1/k}\le 2(C\E (X+s)^2)^{1/2}$.
\end{proposition}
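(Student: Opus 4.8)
The plan is to combine the triangle inequality for the $L_k$-norm with the elementary fact that, because $X$ has mean zero, the second moment of $X+s$ picks up a clean Pythagorean contribution from $s$. Throughout write $\|Z\|_p = (\E Z^p)^{1/p}$ for even $p$, so that $\|X\|_k$ and $\|X+s\|_k$ are honest norms of $X$ and $X+s$ — this is the one place where evenness of $k$ is used. First I would apply the $L_k$ triangle inequality together with the hypothesis $\|X\|_k \le \sqrt{C}\,\|X\|_2$ to obtain
\[
\|X+s\|_k \;\le\; \|X\|_k + |s| \;\le\; \sqrt{C}\,\|X\|_2 + |s|.
\]

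Second, since $\E X = 0$ we have $\E(X+s)^2 = \E X^2 + s^2$, and hence $\|X+s\|_2 = (\|X\|_2^2 + s^2)^{1/2} \ge \max\{\|X\|_2,\,|s|\}$. I would also record that $C\ge 1$: by the power-mean inequality $\|X\|_2 \le \|X\|_k \le \sqrt{C}\,\|X\|_2$, so $\sqrt C \ge 1$ whenever $\E X^2 > 0$ (the remaining case $\E X^2 = 0$ means $X\equiv 0$, and the statement is then immediate as soon as $C\ge 1$, which is the regime of interest). Plugging both observations into the previous display,
\[
\|X+s\|_k \;\le\; \sqrt{C}\,\|X\|_2 + |s| \;\le\; \sqrt{C}\,\|X+s\|_2 + \sqrt{C}\,\|X+s\|_2 \;=\; 2\sqrt{C}\,\|X+s\|_2,
\]
using $\|X\|_2 \le \|X+s\|_2$ for the first term and $|s| \le \|X+s\|_2 \le \sqrt{C}\,\|X+s\|_2$ for the second. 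This is exactly the claimed bound $(\E(X+s)^k)^{1/k} \le 2(C\,\E(X+s)^2)^{1/2}$.

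There is essentially no real obstacle here; the statement is a two-line manipulation. The only points that need a moment's care are that $k$ is even (so that the $L_k$ triangle inequality genuinely applies to $\E X^k$ and $\E(X+s)^k$) and that $C\ge 1$, which is precisely what lets us absorb the leftover $|s|$ term into $\sqrt{C}\,\|X+s\|_2$ instead of ending up with a weaker constant such as $1+\sqrt{C}$.
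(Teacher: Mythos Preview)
Your argument is correct and follows essentially the same route as the paper: apply the $L_k$ triangle inequality, use the hypothesis, and exploit $\E(X+s)^2=\E X^2+s^2$ (from $\E X=0$) to absorb both terms into $2\sqrt{C}\,\|X+s\|_2$. You are in fact slightly more careful than the paper, which combines the last step via $2(a^2+b^2)^{1/2}\ge a+b$ but leaves the need for $C\ge 1$ implicit; your explicit justification of $C\ge 1$ via the power-mean inequality is a welcome addition.
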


\begin{proof}
  By the triangle inequality for $L_k$ norms, we have $\Paren{\E (X+s)^k}^{1/k}\le \Paren{\E X^k}^{1/k} + \abs{s}$.
  Therefore, $\Paren{\E (X+s)^k}^{1/k}\le \Paren{C \E X^2}^{1/2} + \abs{s}$.
  On the other hand $\E (X+s)^2 = \E X^2 + s^2$ using the fact that $\E X = 0$.
  Thus, $2(\E (X+s)^2)^{1/2} \geq (\E X^2)^{1/2} + \abs{s}$.
  Combining the last two inequalties yields the lemma.
\end{proof}

\begin{lemma}[Certifiable Subgaussianity of Product Distributions]
Let $\cD = \cD_1 \otimes \cD_2 \otimes \cdots \otimes \cD_d$ be a product distribution on $\R^d$ such that each $\cD_i$ has ean zero, variance $1$ and certifiably subgaussian with constant $C_i$. Then, $\cD$ is $(2k,2k)$-certifiably subgaussian with constant $C = \max_i C_i$. \label{lem:product-subgaussianity}
\end{lemma}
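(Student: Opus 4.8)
The plan is to verify the definition of certifiable subgaussianity directly. Since $\cD$ has mean zero, it suffices to produce, for each integer $k'\le k$, a sum‑of‑squares representation of degree $2k'$ in the variables $u=(u_1,\dots,u_d)$ of
\[
  R(u)\ :=\ \Paren{C k'\cdot\textstyle\sum_{i}u_i^2}^{k'}\ -\ \E_{\cD}\iprod{x,u}^{2k'}\mcom\qquad C:=\max_i C_i\mcom
\]
because then (taking the multiplier of $\norm{u}^2-1$ to be $0$) this is the desired degree‑$2k$ certificate, and $2k'\le 2k$. I will use that $\E_{\cD}\iprod{x,u}^2=\sum_i u_i^2$ (each $\cD_i$ has mean zero and unit variance) and the scalar hypothesis $\E_{\cD_i}x_i^{2k'}\le (C_i k')^{k'}$. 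First I would expand both pieces in the monomial basis: writing $m_{i,t}:=\E_{\cD_i}x_i^{t}$ (so $m_{i,0}=1$, $m_{i,1}=0$, $m_{i,2}=1$), independence gives $\E_{\cD}\iprod{x,u}^{2k'}=\sum_{\alpha}\binom{2k'}{\alpha}u^{\alpha}\prod_i m_{i,\alpha_i}$, the sum over multi‑indices $\alpha\in\N^d$ with $\abs{\alpha}=2k'$ and no coordinate equal to $1$, while $\Paren{\sum_i u_i^2}^{k'}=\sum_{\abs{\beta}=k'}\binom{k'}{\beta}u^{2\beta}$.

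For the ``even'' monomials $\alpha=2\beta$ I would argue as follows: $\prod_i m_{i,2\beta_i}\le\prod_i(C_i\beta_i)^{\beta_i}\le C^{k'}\prod_i\beta_i^{\beta_i}$, and combined with the elementary multinomial inequality
\[
  \binom{2k'}{2\beta}\prod_i\beta_i^{\beta_i}\ \le\ (k')^{k'}\binom{k'}{\beta}\qquad(\abs{\beta}=k')
\]
this shows that every square monomial $u^{2\beta}$ occurs in $R(u)$ with a nonnegative coefficient. (A weaker constant in place of $(k')^{k'}$ is immediate from crude factorial estimates; the sharp constant---which is what makes the final parameter $\max_i C_i$ rather than an absolute multiple of it---follows from the super‑multiplicativity $f(a)f(b)\ge f(a+b)$ of $f(m):=\tfrac{(2m)!}{m!\,m^m}=\prod_{j=1}^{m}\paren{1+\tfrac jm}$, which holds because $\ln f$ is concave with $\ln f(0)=0$, equivalently $f(m+1)/f(m)$ is non‑increasing.) When $k'\le 2$ these are the only monomials and one is done; in general there remain monomials with an odd exponent $\alpha_i\ge 3$. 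Here I would exploit that, by independence, the terms sharing a fixed set of odd coordinates reassemble---through the identity $\sum_{i\ne j}a_i a_j\,p_ip_j=\Paren{\sum_i a_ip_i}^2-\sum_i a_i^2p_i^2$ and its higher‑order analogues---into explicit squares of polynomials minus square monomials, the odd moments being kept under control by Cauchy--Schwarz, $m_{i,2l+1}^2\le m_{i,2l}\,m_{i,2l+2}$ (which does not inflate any constant). Collecting the explicit squares with the nonnegatively weighted square monomials $u^{2\beta}$ then presents $R(u)$ as a sum of squares of degree $2k'$.

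The step I expect to be the main obstacle is reconciling the last two points: one must check that the slack remaining in the coefficient of each square monomial $u^{2\beta}$, after subtracting the even part of $\E_{\cD}\iprod{x,u}^{2k'}$, suffices to absorb the square monomials produced when the odd‑exponent terms are regrouped into squares, and to do so without losing any constant factor if the conclusion is to hold with parameter exactly $\max_i C_i$. This demands a careful simultaneous accounting of the multinomial coefficients and of the (small) slack in the Cauchy--Schwarz bounds; everything else reduces to routine manipulation on top of the coordinatewise subgaussianity hypothesis.
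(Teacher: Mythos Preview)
Your approach is genuinely different from the paper's. The paper avoids all of your multinomial combinatorics by comparing to a standard Gaussian: for every multi-index $\alpha$ with all entries even, coordinatewise subgaussianity gives $\E_{\cD}x^\alpha=\prod_i\E_{\cD_i}x_i^{\alpha_i}\le C^{\abs{\alpha}}\prod_i\E_{g\sim\cN(0,1)}g^{\alpha_i}=C^{\abs{\alpha}}\,\E_{\cN(0,I)}x^\alpha$, and then the rotational invariance $\E_{\cN(0,I)}\iprod{x,u}^{2k'}=(2k'-1)!!\,\norm{u}^{2k'}$ delivers an SOS certificate immediately, since a power of $\norm{u}^2$ is trivially a sum of squares. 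This sidesteps your super-multiplicativity inequality and all the coefficient bookkeeping; the price is a worse final parameter (the proof yields $(2k')^{k'}C^{2k'}=(2C^2\cdot k')^{k'}$, i.e.\ certifiable $2C^2$-subgaussianity rather than $C$). As for the odd-exponent monomials---precisely the obstacle you flag---the paper attempts to remove them via the global sign flip $x\mapsto\pm x$, but that symmetrization does not kill terms such as $u_1^3u_2^3\,\E_{\cD_1}x_1^3\,\E_{\cD_2}x_2^3$ (total degree even, individual exponents odd), so the paper's argument is incomplete at exactly the point you worry about. Your Cauchy--Schwarz regrouping is an honest way to handle it; the Gaussian-comparison route is cleaner for the even part, but either way obtaining the sharp constant $\max_iC_i$ rather than a polynomial in it appears to require the delicate accounting you anticipate.
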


\begin{proof}

Let $1 \leq C = \max_i C_i$. 
Observe that in that case, $\E_{\cD_i} x^{2j} \leq C^{2j} \cdot \E_{\cN(0,1)} x^{2j}$ for every $j \leq 2k$.

Let $\cD'$ be the distribution obtained by the following process: 1) sample $x \sim \cD$ and 2) output $x$ and $-x$ with uniform probability. By linearity, $\E_{\cD} \iprod{x,u}^{2k} =  \E_{\cD'} \iprod{x,u}^{2k}$.

Call a multi-set $S$ of $[d]$ even if every individual element in $S$ appears even number of times. 

Let $S$ be an even multi-set of size $|S| = 2k$. Then, $\E_{\cD'} x^S = \E_{\cD} x^S \leq C^{|S|} \cdot \E_{\cN(0,I)} x^S$ using the sub-gaussianity of each $x_i$.

Now, let $u \in \R^d$. Then, $\E_{\cD} \iprod{x,u}^2 = \E_{\cD'} \iprod{x,u}^2 = \|u\|_2^2$.
We have:

\[
\E_{\cD'} \langle x, u \rangle^{2k} = \sum_{S: \text{ even multi-set } }  \E[ x^S ] \cdot  u^S.
\]
Since $S$ is even, we can upper bound the term corresponding to $S$ above by $C^{|S|} \E_{\cN(0,I)} x^S u^S$. Thus,
$\E_{\cD'} \langle x, u \rangle^{2k} \leq C^{|S|} \sum_{S: \text{ even multi-set } } \E_{\cN(0,I)} x^S u^S$. Thus, the RHS is at most $\sigma^{2k} \cdot \E_{x \sim \cN(0,I)} \iprod{x,u}^{2k}$. 

By rotational symmetry,
$\E_{x \sim \cN(0,I)} \iprod{x,u}^{2k} = \Gamma_k \cdot (\E_{x \sim \cN(0,I)} \iprod{x,u}^{2})^k = \Gamma_k \cdot (\E_{\cD'} \iprod{x,u}^{2})^k$, where $\Gamma_k\le (2k)^k$.

Thus, $\E_{\cD'} \langle x, u \rangle^{2k} \leq (2k)^k C^{2k} (\E_{\cD'} \iprod{x,u}^{2})^k$.
And thus, $\E_{\cD} \langle x, u \rangle^{2k} \leq (2k)^k C^{2k} (\E_{\cD} \iprod{x,u}^{2})^k$. Showing that $\cD$ is $(2k,2k)$ certifiably $C$-subgaussian.
\end{proof}

\section{Applications}
\subsection{Outlier-Robust Independent Component Analysis}
\label{sec:outl-robust-indep}

In this section, we apply our robust moment estimation algorithm to give an efficient algorithm for outlier-robust independent component analysis. We recall the definition of subgaussian random variables first. 

\begin{definition}[Subgaussian Distributions]
A distribution $D$ on $\R$ is said to be $K$-subgaussian if for every $p \geq 1$, $\E_{x \sim D} [x^p]^{1/p} \leq K (\E_{z \sim \cN(0,1)}[z^p])^{1/p} \leq K\cdot \sqrt{p}$. A distribution $D$ on $\R^n$ is said to be $K$-subgaussian if for every unit vector $v \in \R^n$, $\langle x, v \rangle$ is $K$-subgaussian random variable on $\R$.
\end{definition}

Next, we describe the ICA problem:

\begin{definition}[Fully Determined Independent Component Analysis]
Let $A$ be an unknown \emph{mixing matrix} in $\R^{d \times d}$ of full rank. 
Let $x$ be an unknown product distribution with $C$-subgaussian marginals satisfying $\E[x_i^2] = 1$ and $\E[x_i^4]-1 \neq 0$ for every $i$.
Given samples from $y = \{Ax\}$ for i.i.d. samples $x$, estimate a matrix $\hat{A}$ such that there's a permutation $\pi$ and some signing $\sigma \in \on^{d}$ on columns of $\hat{A}$ satisfying $\|A_i - \sigma_i \cdot \hat{A}_{\pi(i)}\|_2 \leq \delta \|A_i\|$ for the smallest possible $\delta$. 
\end{definition}
\begin{remark}
\begin{enumerate}
	\item The assumption that $\E[x_i^2] =1$ is without the loss of generality. Since we only see samples from $\{Ax\}$ one can't uniquely recover the scalings on the column $A_i$ and $\E[x_i^2]$ simultaneously for any $i$.
	\item The goal is to recover a matrix $A$ whose columns are close to that of $A$ up to permutation and signs. This is again, necessary, since the order of the columns cannot be uniquely recovered from samples of $\{Ax\}$.
	\item Information theoretically, columns of $A$ can be recovered (up to permutation and scaling) if  at most one $x_i$ is has gaussian distribution.
\end{enumerate}
\end{remark}

Our main result is a outlier-robust algorithm for ICA. Our algorithm is based on 4th order tensor decomposition.
We will use noise-tolerant generalization of the FOOBI algorithm due to Cardoso \cite{MR2473563-DeLathauwer07} presented in \cite{DBLP:conf/focs/MaSS16}.

\begin{theorem}[Outlier-Robust ICA]
For every $C\ge 1$ and even $k\in \N$, there exists a $(dn)^{O(k)}$ that given a (corrupted) sample $S\subseteq \R^d$ of size $n$ outputs component estimates $\hat a_1,\ldots,\hat a_d\in \R^d$ with the following guarantees:
Suppose $A \in \R^{d \times d}$ is a non-singular matrix with condition number $\kappa$ and columns $a_1,\ldots,a_d\in \R^d$.
Suppose  $\mathbf x$ is a centered random vector with $d$ independent coordinates such that every coordinate $i\in [d]$ satisfies $\E[\mathbf x_i^2] = 1$, $\E[\mathbf x_i^4] -3 = \gamma \neq 0$, and  $\E[\mathbf x_i^{k}]^{1/k} \leq \sqrt{Ck}$.
Then, if $S$ is an $\e$-corrupted sample of size $\card{S}\ge n_0$ from the distribution $\set{A \mathbf x}$, where $n_0\le (C+\kappa +d)^{O(k)}$, the component estimates satisfy with high probability
  \begin{equation}
    \max_{\pi \in S_d}\min_{i\in [d]} \iprod{A^{-1}\hat a_i, A^{-1} a_{\pi(i)}}^2 \ge 1-\delta
    \quad \text{ for } \delta < (1+\tfrac 1{\abs{\gamma}}) \cdot O(C^2 k^2) \cdot \epsilon^{1-4/k}
    \mper
    \label{eq:ica-guarantee}
  \end{equation}
\end{theorem}

Our algorithm is simple. It first estimates the 2nd and 4th moments of the observed distribution from an $\epsilon$-corrupted sample using Algorithm \ref{alg:moment-estimation-program} and then applies a blackbox tensor decomposition algorithm to a ``whitened'' version of 4th moments. 

\paragraph{Tensor Decomposition} 
We state the tensor decomposition algorithm (implicit in \cite{DBLP:conf/focs/MaSS16}, analog of Theorem 1.1 for 4th order tensor decomposition) we use before proceeding. 

It is convenient to define the following relaxation of the injective tensor norm in describing the guarantees of this algorithm.

\begin{definition}[Sum of Squares Norms]
Given a tensor $T \in (\R^d)^{\otimes 2t}$ for any $2t$, the degree $2t$ \emph{sum of squares norm} of $T$, $\|T\|_{\sos_{2t}}$ is defined by 
\[ 
\Paren{\|T\|_{\sos_{2t}}}^{2t} = \sup_{\mu(u): \text{ deg $2t$ pseudo-dist over sphere }} \Paren{\pE_{\mu(u)} \iprod{T, u^{\otimes 2t}}}. \label{eq:def-sos-norm}
\]
\end{definition}

\begin{fact}[\cite{DBLP:conf/focs/MaSS16}] \label{fact:tensor-decomposition-blackbox}
There exists a polynomial time algorithm that given a symmetric $4$-tensor $T \in \Paren{\R^d}^{\otimes 4}$, outputs a set of vectors $\{c_1', c_2', \ldots, c_n'\} \subseteq \R^d$, such that for every orthonormal set $\{c_1, c_2, \ldots, c_n\} \in \R^d$, there's a permutation $\pi:[n] \rightarrow [n]$ satisfying 
\[
\max_{i} \|c_i - c_{\pi(i)}'\|^2 \leq O(1) \cdot \| \frac{T}{\|T\|_{\sos_4}} -\sum_{i =1}^n c_i^{\otimes 4}\|_{\sos_4}.
\]
\end{fact}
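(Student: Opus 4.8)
The plan is to derive this statement from the robust fourth-order tensor decomposition algorithm of \cite{DBLP:conf/focs/MaSS16}: I would run that algorithm on the normalized tensor $\bar T = T/\norm{T}_{\sos_4}$ and check that its noise tolerance, measured through $\norm{\cdot}_{\sos_4}$, yields exactly the stated per-component bound. Fix an orthonormal set $\set{c_1,\ldots,c_n}\subseteq\R^d$ (so $n\le d$) and write $\eta = \Norm{\bar T - \sum_{i=1}^n c_i^{\otimes 4}}_{\sos_4}$, so that $\bar T = \sum_{i=1}^n c_i^{\otimes 4} + E$ with $\norm{E}_{\sos_4}=\eta$. The normalization by $\norm{T}_{\sos_4}$ is harmless: it is a single positive scalar fixed before any $\set{c_i}$ is considered, and $\norm{\bar T}_{\sos_4}=1$ by construction, so no dependence on $\norm{T}_{\sos_4}$ survives in the final bound.

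The algorithm I would use is the one implicit in \cite{DBLP:conf/focs/MaSS16}: solve the constant-degree sum-of-squares relaxation that seeks a pseudo-distribution $\mu(u)$ over the unit sphere $\bbS^{d-1}$ maximizing $\pE_{\mu(u)}\iprod{\bar T, u^{\otimes 4}}$, then apply their Gaussian-reweighting rounding to read off candidate directions. The analysis splits into two halves. For \emph{completeness}, the uniform mixture of point masses at $c_1,\ldots,c_n$ is an actual (hence pseudo-) distribution over $\bbS^{d-1}$, and using $\iprod{c_i^{\otimes 4},c_j^{\otimes 4}}=\iprod{c_i,c_j}^4=\Ind\set{i=j}$ together with $\abs{\iprod{E,c_i^{\otimes 4}}}\le\eta$, it achieves objective value at least $1-\eta$; hence the optimal $\mu$ satisfies $\pE_{\mu(u)}\iprod{\bar T,u^{\otimes 4}}\ge 1-\eta$. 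For \emph{soundness}, orthonormality of the $c_i$ yields the degree-$4$ sum-of-squares inequality $\sum_{i=1}^n\iprod{c_i,u}^4\le\bigparen{\sum_{i=1}^n\iprod{c_i,u}^2}^2\le\norm{u}^4$, so $\sststile{4}{u}\Set{\sum_i\iprod{c_i,u}^4\le 1}$ holds on the sphere; combined with $\abs{\pE_{\mu(u)}\iprod{E,u^{\otimes4}}}\le\eta$ this forces $\pE_{\mu(u)}\sum_i\iprod{c_i,u}^4\ge 1-2\eta$, i.e.\ $\mu$ is, in the pseudo-expectation sense, concentrated on unit vectors nearly equal to one of the $c_i$.

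Next I would invoke the rounding guarantee of \cite{DBLP:conf/focs/MaSS16}, which converts this concentration into a list of candidate vectors such that every $c_i$ has a candidate within squared distance $O(\eta)$ and every candidate lies within squared distance $O(\eta)$ of some $c_i$; selecting the candidates nearest the $c_i$ and matching by nearest neighbor produces a permutation $\pi$ with $\max_i\norm{c_i-c'_{\pi(i)}}^2\le O(\eta)=O(1)\cdot\Norm{\bar T-\sum_i c_i^{\otimes 4}}_{\sos_4}$, as claimed. A point worth stressing is that the algorithm never inspects $\set{c_i}$ --- it sees only $\bar T$ --- so the same output list works simultaneously for every orthonormal set, with the bound for each such set governed by that set's own $\sos_4$-distance to $\bar T$.

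The step I expect to be the main obstacle is the robustness of the rounding: one must show that the map from ``pseudo-distribution whose pseudo-moments nearly match $\tfrac1n\sum_i c_i^{\otimes 2t}$'' to ``recovered directions'' degrades only linearly in $\eta$, and --- more delicately --- that the correct way to measure the perturbation $E$ is $\norm{E}_{\sos_4}$ and not, say, the spectral norm of a matrix flattening of $E$ (which can be much larger for the error tensors our moment estimates produce, and is what a naive Jennrich-type argument would demand). That is precisely the content of \cite{DBLP:conf/focs/MaSS16}; the remaining work is bookkeeping --- checking that their noise model is implied by a bound on $\norm{E}_{\sos_4}$ (equivalently, that every inequality used in the soundness step is a degree-$4$ sum-of-squares consequence of the definition of $\norm{\cdot}_{\sos_4}$), and that using \emph{exact} orthonormality of the $c_i$, rather than approximate orthonormality, lets the completeness and soundness certificates above be written with absolute constants.
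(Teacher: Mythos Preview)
The paper does not prove this statement at all: it is stated as a \emph{Fact} imported verbatim from \cite{DBLP:conf/focs/MaSS16} and used as a black box in the ICA application. There is nothing to compare against---the paper's ``proof'' is simply the citation.

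Your proposal is therefore not wrong so much as supererogatory: you have sketched the architecture of the result from \cite{DBLP:conf/focs/MaSS16} (SoS relaxation over the sphere, completeness via the uniform mixture on the $c_i$, soundness via the degree-$4$ inequality $\sum_i\iprod{c_i,u}^4\le\norm{u}^4$, then rounding), which is a faithful high-level summary of how that paper proceeds. The one place where your write-up blurs algorithm and analysis is the phrase ``selecting the candidates nearest the $c_i$ and matching by nearest neighbor''---as you yourself note two sentences later, the algorithm cannot do this because it never sees the $c_i$; the matching is purely an artifact of the analysis, and the existence of the permutation $\pi$ is what is being asserted, not computed. You also do not address how the algorithm fixes the \emph{number} of output vectors without knowing $n$; in \cite{DBLP:conf/focs/MaSS16} this is handled either by outputting a longer list or by assuming $n$ is given, and the present paper only invokes the fact with $n=d$ known in advance. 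But none of this is a gap relative to the paper under review, which simply defers the entire argument to the reference.
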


Without loss of generality, we can assume that the distribution $x$ satisfies $\E[x_i^{2j-1}] = 0$ for every $j > 0$. This can be accomplished by negating each sample independently with probability $1/2$. The fraction of corruptions in the modified sample is still $\epsilon$. 

\begin{mdframed}
  \begin{algorithm}[Algorithm for Outlier-Robust ICA via Sum of Squares]
    \label[algorithm]{alg:outlier-robust-ica}\mbox{}
    \begin{description}
    \item[Given:]
      $\e$-corrupted sample from $\{Ax\}$: $Y=\set{y_1,\ldots,y_n}\subseteq \R^d$. %
    \item[Estimate:]
      Mixing Matrix $A \in \R^{d \times d}$.    
    \item[Operation:]\mbox{}
      \begin{enumerate}
      \item 
      	Apply Algorithm \ref{alg:moment-estimation-program} to estimate the covariance $\hat{\Sigma}$ and 4th moments $\hat{M}_4$ of sample $Y$.
      \item Define the ``whitened'' estimated 4th moment tensor $\hat{M}_4'$ so that for every $u \in \R^d$, $\iprod{\hat{M}_4',u^{\otimes 4}} = \iprod{\hat{M}_4, (\Sigma^{-1/2} u)^{\otimes 4}}$.

       Let $\sigma_1, \sigma_2, \ldots, \sigma_d$ be the columns of $\hat{\Sigma}^{-1/2}$. Then, the above is equivalent to setting: $\hat{M}_4' = \sum_{j_1, j_2, \ldots, j_4} \sigma_{j_1} \otimes \sigma_{j_2} \otimes \sigma_{j_3} \otimes \sigma_{j_4} \cdot (\hat{M}_4)(j_1, j_2, j_3, j_4)$. 
       \item Let $\hat{T} = \hat{M}_4'-3I \otimes I$.
      \item 
        Apply Tensor Decomposition from Theorem \ref{fact:tensor-decomposition-blackbox} to $\hat{T}$ to recover components $\hat{a}_1, \hat{a}_2,\ldots, \hat{a}_d$.
        \item Output $\hat{\Sigma}^{1/2} \hat{A}$ where $\hat{A} = (\hat{a}_1, \ldots, \hat{a}_d)$.
      \end{enumerate}
    \end{description}    
  \end{algorithm}
\end{mdframed}

We now analyze the algorithm. The following lemma shows that the components of the tensor $T$ that we decompose are close to the columns of $A$.  
\begin{lemma}
Let $T = \hat{M}_4'-3I \otimes I$ for the whitenened 4th moments $\hat{M}_4'$.
Let $a_1, a_2, \ldots, a_d$ be the columns of $A$ and let $b_i = (AA^{\top})^{-1/2} a_i$ be the whitened versions. Then,

\[
\| \hat{T} - \gamma \sum_i b_i^{\otimes 4}\|_{\sos_4} \leq \delta_4^{1/4}(2+ 3\delta_2)^{1/4}.
\]
\end{lemma}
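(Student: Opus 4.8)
The plan is to reduce the statement entirely to the two moment-estimation guarantees already proved: the multiplicative covariance bound from \cref{thm:mean-covariance-main} (equivalently \cref{cor:multiplicative-covariance-identifiability-sos}), $(1-\delta_2)\Sigma \sle \hat\Sigma \sle (1+\delta_2)\Sigma$ with $\Sigma = AA^\top$, and the certified fourth-moment bound from \cref{thm:higher-moment-main}, namely $\pm\iprod{\hat M_4 - M_4, w^{\otimes 4}} \le \delta_4\iprod{w,\Sigma w}^2$ \emph{together with a degree-$k$ sum-of-squares proof in $w$}, where $M_4 = \E (A\mathbf x)^{\otimes 4}$ is the true fourth moment of the (sign-symmetrized) observed distribution $\{A\mathbf x\}$, which is certifiably subgaussian by \cref{lem:product-subgaussianity} and \cref{lem:linear-transformation-subgaussianity}. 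The $\sos_4$ bound will come out by evaluating the error tensor against an arbitrary degree-$4$ pseudo-distribution on the sphere and taking fourth roots.

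First I would compute the ideal whitened tensor in closed form. Expanding with independence of the coordinates of $\mathbf x$, $\E\mathbf x_i^2 = 1$, $\E\mathbf x_i^4 = 3+\gamma$, and vanishing of odd moments,
\[
\iprod{M_4, z^{\otimes 4}} = \E\iprod{\mathbf x, A^\top z}^4 = \gamma\tsum_i\iprod{a_i,z}^4 + 3\iprod{z,\Sigma z}^2
\qquad(z\in\R^d).
\]
Setting $z=\Sigma^{-1/2}u$, using $\Sigma^{-1/2}a_i = b_i$ and $\iprod{\Sigma^{-1/2}u,\Sigma\Sigma^{-1/2}u}=\norm u^2$, the whitening of $M_4$ by the \emph{true} covariance is exactly $3\,(I\tensor I) + \gamma\sum_i b_i^{\otimes 4}$ (with $I\tensor I$ the symmetric $4$-tensor whose form at $u$ is $\norm u^4$). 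Hence $\hat T - \gamma\sum_i b_i^{\otimes 4}$ is precisely $\hat M_4'$ minus this ideal tensor, and I would split it as $E_{\mathrm{mom}} + E_{\mathrm{cov}}$, where $E_{\mathrm{mom}}$ is the whitening of $\hat M_4 - M_4$ by $\hat\Sigma$ and $E_{\mathrm{cov}}$ is the difference between whitening $M_4$ by $\hat\Sigma$ versus by $\Sigma$.

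For $E_{\mathrm{mom}}$: substitute $w\mapsto\hat\Sigma^{-1/2}u$ into the certified fourth-moment inequality; since $\hat\Sigma^{-1/2}\Sigma\hat\Sigma^{-1/2}\sle\tfrac1{1-\delta_2}\Id$ follows from the Löwner bound (and is itself sum-of-squares), this yields a degree-$k$ sum-of-squares proof in $u$ of $\pm\iprod{E_{\mathrm{mom}},u^{\otimes 4}}\le\tfrac{\delta_4}{(1-\delta_2)^2}\norm u^4$. For $E_{\mathrm{cov}}$: the closed form with $z=\hat\Sigma^{-1/2}u$ gives
\[
\iprod{E_{\mathrm{cov}},u^{\otimes 4}} = \gamma\Bigparen{\tsum_i\iprod{\hat b_i,u}^4 - \tsum_i\iprod{b_i,u}^4} + 3\Bigparen{\iprod{u,Pu}^2 - \norm u^4},
\]
where $\hat b_i = \hat\Sigma^{-1/2}a_i$ and $P = \hat\Sigma^{-1/2}\Sigma\hat\Sigma^{-1/2}$; the second term is $O(\delta_2)\norm u^4$ since $\norm{P-\Id}\le O(\delta_2)$, with an immediate sum-of-squares certificate in $u$, and the first term I would handle by writing $\hat b_i = M^\top b_i$ with $M = \Sigma^{1/2}\hat\Sigma^{-1/2}$, so that $\sum_i\iprod{\hat b_i,u}^4 = \sum_i\iprod{b_i,Mu}^4$, then comparing to $\sum_i\iprod{b_i,u}^4$ using orthonormality of $\{b_i\}$ and closeness of $M$ to $\Id$. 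Summing the three contributions, taking $\pE$ over an arbitrary degree-$4$ sphere pseudo-distribution (legitimate, since each inequality is either sum-of-squares or an application of pseudo-Cauchy--Schwarz), and taking fourth roots gives the claimed estimate.

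The main obstacle is the $\gamma$-weighted rank-one part of $E_{\mathrm{cov}}$. Unlike the $3\,(I\tensor I)$ piece, $\sum_i b_i^{\otimes 4}$ is \emph{not} invariant under the map $M$ relating $\{b_i\}$ and $\{\hat b_i\}$, so comparing $\sum_i\iprod{b_i,Mu}^4$ with $\sum_i\iprod{b_i,u}^4$ requires using not merely that $M^\top M = P\approx\Id$ (singular values near $1$) but the finer fact that $M$ itself is close to $\Id$ — which one gets because $M=\Sigma^{1/2}\hat\Sigma^{-1/2}$ is conjugate (via $\hat\Sigma^{1/4}$) to the symmetric positive-definite matrix $\hat\Sigma^{-1/4}\Sigma^{1/2}\hat\Sigma^{-1/4}$, whose eigenvalues the Löwner bound pins into $[1-O(\delta_2),1+O(\delta_2)]$ — and then folding the residual discrepancy into the stated constant $2+3\delta_2$. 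Everything else is routine bookkeeping with \cref{thm:mean-covariance-main}, \cref{thm:higher-moment-main}, and soundness of sum-of-squares proofs under pseudo-expectation (\cref{fact:sos-soundness}).
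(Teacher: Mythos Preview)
Your decomposition into $E_{\mathrm{mom}} + E_{\mathrm{cov}}$ is more careful than what the paper actually does. The paper substitutes $w = \hat\Sigma^{-1/2}u$ into the certified fourth-moment bound and then writes the result as $\iprod{M_4' - \hat M_4', u^{\otimes 4}} = \iprod{T - \hat T, u^{\otimes 4}}$ --- but that identification is valid only if $M_4'$ denotes $M_4$ whitened by $\hat\Sigma$, whereas the earlier computation $T = \gamma\sum_i b_i^{\otimes 4}$ uses $M_4'$ whitened by the true $\Sigma$. In effect the paper silently drops your $E_{\mathrm{cov}}$ term, and the displayed constant $\delta_4(2+3\delta_2)$ is exactly your $E_{\mathrm{mom}}$ bound. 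So on $E_{\mathrm{mom}}$ you and the paper coincide; the difference is that you correctly notice there is more to do.

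Where your sketch does not close is the $\gamma$-weighted part of $E_{\mathrm{cov}}$. You need $\norm{M - \Id}$ small for $M = \Sigma^{1/2}\hat\Sigma^{-1/2}$, and you argue via conjugacy to the symmetric matrix $S = \hat\Sigma^{-1/4}\Sigma^{1/2}\hat\Sigma^{-1/4}$. Operator monotonicity of the square root does give $(1-O(\delta_2))\Id \preceq S \preceq (1+O(\delta_2))\Id$, hence $\norm{S-\Id}\le O(\delta_2)$; but $M - \Id = \hat\Sigma^{1/4}(S-\Id)\hat\Sigma^{-1/4}$, and conjugation by the ill-conditioned $\hat\Sigma^{1/4}$ can inflate this by $\kappa(\hat\Sigma)^{1/4}$. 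Put differently: similarity transports eigenvalues, not operator-norm distance to $\Id$, and the fact that $M^\top M = P \approx \Id$ only pins the \emph{singular values} of $M$ near $1$, which any orthogonal matrix already enjoys. So the inference ``eigenvalues of $S$ near $1$ $\Rightarrow$ $M$ close to $\Id$'' is not justified as written, and as stated your bound on this piece would carry a condition-number factor that cannot be ``folded into'' $2+3\delta_2$. The cleanest repair is to sidestep $M$ altogether: run the whole argument with target components $\hat b_i = \hat\Sigma^{-1/2}a_i$ (whitened by the \emph{estimated} covariance), which collapses $E_{\mathrm{cov}}$ to the easy term $3(\iprod{u,Pu}^2 - \norm{u}^4) = O(\delta_2)\norm{u}^4$, and then absorb the $O(\delta_2)$-approximate orthonormality of the $\hat b_i$ into the tensor-decomposition guarantee downstream.
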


\begin{proof}

First, let's understand why the tensor decomposition should produce the columns of $A$ if all the moment estimates were exactly correct. We will then account for the estimation errors. 

For the distribution $\{Ax\}$, we can compute $M_2 = \Sigma = \E[ (Ax) (Ax)^{\top}] = AA^{\top}$ using $\E[xx^{\top}] = I$. Thus, for $b_i = \Sigma^{-1/2} a_i$, $\sum_i b_i b_i^{\top} = \sum_i \Sigma^{-1/2} a_i a_i^{\top} \Sigma^{-1/2} = I$. 

Similarly, we have 
\begin{align*}
\iprod{M_4,u^{\otimes 4}} = \E[ \iprod{Ax, u}^4] &= \sum_i \E[x_i^4] \iprod{a_i,u}^4 + 3\sum_{i \neq j} \E[x^2_i x^2_j] \iprod{a_i,u}^2 \iprod{a_j, u}^2\\
&= (\gamma-3) \iprod{a_i,u}^4 + 3(\sum_i \iprod{a_i,u}^2)^2. \end{align*}

The whitened and shifted 4th moment $T$ then, by definition, satisfies
\begin{align*}
\iprod{T, u^{\otimes 4}} = \iprod{M_4'-3I\otimes I,u^{\otimes 4}}  &= \iprod{M_4, (\Sigma^{-1/2}u)^{\otimes 4}}-3\|u\|_2^4\\
&= \gamma \iprod{b_i,u}^4 + 3(\sum_i \iprod{b_i,u}^2)^2 - 3\|u\|_2^4 = \gamma \iprod{b_i,u}^4.
\end{align*}

As a result, $\vdash_{u,4} \iprod{T - \gamma \sum_i b_i^{\otimes 4},u^{\otimes 4}} = 0$ and thus, $\|T  - \gamma \sum_i b_i^{\otimes 4}\|_{\sos_4} = 0$. 

Now, let $\hat{\Sigma}, \hat{M}_4, \hat{M}_4', \hat{T}, \hat{b}_i$ stand for their estimated (via Algorithm \ref{alg:moment-estimation-program}) counterparts. To complete the analysis, we now understand the error $\|T-\hat{T}\|_{\sos_4}$. The proof is then complete by using:
\[
\|\hat{T}- \gamma \sum_i b_i^{\otimes 4} \|_{\sos_4} \leq \|\hat{T}- T\|_{\sos_4} + \|T- \gamma \sum_i b_i^{\otimes 4}\|_{\sos_4} = \|\hat{T}- T\|_{\sos_4}.
\]

First, we check that our algorithm (Algorithm \ref{alg:moment-estimation-program}) can indeed be applied to this setting. 
Since $x$ is a product distribution with $i$th marginal $C_i$-subgaussian. By Lemma \ref{lem:product-subgaussianity}, $x$ is $(k,k)$-certifiably $C$-subgaussian. By Lemma \ref{lem:linear-transformation-subgaussianity}, $\{Ax\}$ is $(k,k)$-certifiably subgaussian. \footnote{Negating samples with probability $1/2$ preserves 2nd and 4th moments and doesn't affect certifiable subgaussianity.} Finally, applying Lemma \ref{lem:sampling-preserves-subgaussianity} shows that with probability at least $1-1/n^2$, the sample the facts above all hold for sample $Y$ of size $m \geq \tilde{\Omega}(d^{k/2}/\epsilon^2)$.

We will use $\Sigma, M_4, M_4', T, b_i = \Sigma^{-1/2} a_i$ for the covariance, 4th moment, whitenened 4th moment, shifted, whitenened 4th moment and orthogonalized columns of $A$ respectively. 

By \cref{cor:multiplicative-covariance-identifiability-sos}, for every $u$, $\iprod{u,\hat{\Sigma} u} = (1\pm \delta_2) \iprod{u,\Sigma u}$ for some $\delta_2 < O(Ck) \epsilon^{1-2/k}$. Thus:

\begin{equation}
(1-\delta_2)I\preceq \sum_i \hat{b}_i \hat{b}_i^{\top} \preceq (1+\delta_2) I. \label{eq:whitened-eigenvalue}
\end{equation} 
By Theorem \ref{thm:higher-moment-main}, for $\delta_4 = O(C^2k^2) \epsilon^{1-4/k}$, 
\begin{equation}
\sststile{4}{u,k} \pm \Paren{\iprod{M_4 - \hat{M}_4, u^{\otimes 4}}} \le \delta_4 \cdot \Paren{\iprod{\Sigma, u^{\otimes 2}}^2 + \iprod{\hat{\Sigma}, u^{\otimes 2}}^2}. \label{eq:4th-moment-approx}
\end{equation}
Thus,
\begin{align*}
\sststile{4}{u,k} \pm \Paren{\iprod{M_4 - \hat{M}_4, (\hat{M}_2^{-1/2}u)^{\otimes 4}}} &\le \delta_4 \cdot \Paren{\iprod{M_2, (\hat{M}_2^{-1/2}u)^{\otimes 2}}^2 + \iprod{\hat{M}_2, (\hat{M}_2^{-1/2}u)^{\otimes 2}}^2}\\
&\leq \delta_4(2+3\delta_2)\|u\|_2^4
\end{align*}
Or,
\begin{align*}
\sststile{4}{u,k} \pm \Paren{\iprod{M_4' - \hat{M}_4', u^{\otimes 4}}} &\le \delta_4 (2+3\delta_2)\|u\|_2^4\\
\end{align*}
Since $M_4' - \hat{M}_4' = T-\hat{T}$, this gives:
\begin{align}
\sststile{4}{u,k} \pm \Paren{\iprod{T-T', u^{\otimes 4}}} \le \delta_4(2+3\delta_2)\|u\|_2^4 \label{lem:error-in-whitened-tensor}
\end{align}

\end{proof}

The proof of Theorem \ref{thm:outlier-robust-ica} then follows easily. 

\begin{proof}[Proof of Theorem \ref{thm:outlier-robust-ica}]
We are now ready to apply Fact \ref{fact:tensor-decomposition-blackbox} to obtain that decomposing the tensor $T$ recovers components $c_1, c_2, \ldots, c_d$ such that there's a permutation $\pi:[d] \rightarrow [d]$ such that $\|c_i - b_{\pi(i)}\|_2^2 \leq O(1) \cdot \frac{1}{|\gamma|} \cdot \delta_4(2+ 3 \delta_2)$.

Applying the reverse whitening transform $\hat{a}_i = \hat{\Sigma}^{1/2} c_i$ then recovers the set of vectors $a_1, a_2, \ldots, a_d$ satisfying the requirements of Theorem \ref{thm:outlier-robust-ica}.
\end{proof}

\subsection{Outlier-Robust Mixtures of Gaussians}
\label{subsec:outl-robust-mixtures-gaussians}

In this section, we describe our outlier-robust algorithm for learning mixtures of spherical gaussians.  As before, this algorithm can be seen as a direct analog of the non-robust algorithm that learns mixtures of spherical gaussians with linearly independent means by decomposing the 3rd order moment tensor of the observed sample.   %

\begin{theorem}[Outlier-Robust Mixtures of Gaussians] \label{thm:mixture-of-gaussian-technical}
Let $D$ be mixtures of $\cN(\mu_i,I)$ for $i \leq q$ with uniform\footnote{While our algorithm generalizes naturally to arbitrary mixture weights, we restrict to this situation for simplicity} mixture weights. Assume that $\mu_i$s are linearly independent and, further, let $\kappa$ be the smallest non-zero eigenvalue of $\frac{1}{q}\sum_i \mu_i \mu_i^{\top}$.

Given an $\epsilon$-corrupted sample of size $n \geq n_0= \Omega((\kappa d\log{(d)})^{k/2})$, for every $k \geq 4$, there's a $\poly(n) d^{O(k)}$ time algorithm that recovers $\hat{\mu}_1, \hat{\mu}_2,\ldots \hat{\mu}_q$ so that there's a permutation $\pi:[q] \rightarrow [q]$ satisfying \[
\max_i \| (\frac{1}{q}\sum_i \mu_i \mu_i^{\top})^{-1/2}(\hat{\mu}_i - \mu_{\pi(i)})\| \leq O(\sqrt{qk}) \epsilon^{1-1/k}/\sqrt{\kappa} + O(qk) \epsilon^{3/2-2/k}/\kappa +  O(\sqrt{qk}) \epsilon^{1/3-1/k}.
\]
\end{theorem}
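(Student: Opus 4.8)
The plan is to follow the classical spectral/tensor-decomposition algorithm for mixtures of spherical Gaussians with linearly independent means, replacing the empirical second and third moments by the outlier-robust estimates produced by \cref{alg:moment-estimation-program}. First I would record that $D$, being a uniform mixture of $q$ components $\cN(\mu_i,I)$, is $(k,k)$-certifiably $O(q)$-subgaussian by \cref{lem:mixtures-of-gaussians-certifiable}, and that the uniform distribution over the uncorrupted sample inherits this property up to a constant factor once $n\ge n_0$, by \cref{lem:sampling-preserves-subgaussianity}. Running \cref{alg:moment-estimation-program} therefore yields estimates $\hat M_1,\hat M_2,\hat M_3$ of the first three raw moment tensors of $D$, together with sum-of-squares certificates of the error bounds: a Mahalanobis mean bound of order $\sqrt{qk}\,\epsilon^{1-1/k}$ from \cref{cor:mean-identifiability-no-covariance-bound-sos}, a multiplicative second-moment bound with $\delta_2=O(qk)\epsilon^{1-2/k}$ from \cref{lem:multiplicative-second-moment-sos}, and an error bound for $\hat M_3$ with $\delta_3=O(qk)^{3/2}\epsilon^{1-3/k}$ from \cref{lem:higher-moment-estimation-sos} (the $r=3$ case of \cref{thm:higher-moment-main}), certified by a sum-of-squares proof in $u$.

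Next I would use the moment structure of spherical-Gaussian mixtures. Writing $\mu=\E_D x=\tfrac1q\sum_i\mu_i$ and $G:=\E_D xx^{\top}-I=\tfrac1q\sum_i\mu_i\mu_i^{\top}$, one has $\E_D x^{\otimes3}=\tfrac1q\sum_i\mu_i^{\otimes3}+\mathrm{Sym}(\mu\otimes I)$, where $\mathrm{Sym}$ sums over the three ways of assigning the $\mu$ and the identity to the tensor slots; hence $T:=\tfrac1q\sum_i\mu_i^{\otimes3}$ is recovered up to the estimation errors by $\hat T:=\hat M_3-\mathrm{Sym}(\hat M_1\otimes I)$, with $\hat G:=\hat M_2-I$. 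The structural point is that, since the $\mu_i$ are linearly independent, the vectors $v_i:=G^{+1/2}\mu_i$ (using the pseudo-inverse square root on the $q$-dimensional range of $G$) satisfy $\tfrac1q\sum_i v_iv_i^{\top}=I$; being $q$ vectors that form a tight frame in a $q$-dimensional space, they are mutually orthogonal with $\norm{v_i}=\sqrt q$, so $c_i:=v_i/\sqrt q$ are orthonormal and the whitened tensor $\tilde T$ with $\iprod{\tilde T,u^{\otimes3}}=\iprod{T,(G^{+1/2}u)^{\otimes3}}$ equals $\sqrt q\sum_i c_i^{\otimes3}$. I would form $\hat{\tilde T}$ analogously from $\hat T$ and $\hat G^{+1/2}$, restricting to the range of $\hat G$ (close to that of $G$).

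For the error analysis I would bound $\Norm{\hat{\tilde T}-\tilde T}_{\sos_3}$ by three contributions: (i) the contribution of $\hat M_3-M_3$, obtained by composing the sum-of-squares certificate from \cref{lem:higher-moment-estimation-sos} with the linear substitution $u\mapsto\hat G^{+1/2}u$ --- this is exactly the step where a numerical injective-norm bound would be useless, since injective norm is not preserved under linear maps whereas a sum-of-squares proof of the polynomial inequality survives substitution; (ii) the contribution of $\hat M_1-M_1$ through the correction term, controlled by the Mahalanobis mean bound; and (iii) the contribution of $\hat G^{+1/2}-G^{+1/2}$, controlled by \cref{lem:multiplicative-second-moment-sos}, which after subtracting $I$ and incurring a factor depending on $1/\kappa$ gives $\hat G\approx G$ on the common range. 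Since $\Norm{G^{+1/2}}^2=1/\kappa$, contributions (ii) and (iii) carry factors $1/\sqrt\kappa$ and $1/\kappa$ and produce the first two terms of the claimed bound, while (i) reduces, after normalizing the tensor, to $O(\delta_3^{1/3})=O(\sqrt{qk})\epsilon^{1/3-1/k}$. I would then feed $\hat{\tilde T}/\Norm{\hat{\tilde T}}_{\sos_3}$ into the robust degree-$3$ analog of \cref{fact:tensor-decomposition-blackbox} from \cite{DBLP:conf/focs/MaSS16} to recover $\hat c_1,\ldots,\hat c_q$ with $\max_i\norm{\hat c_i-c_{\pi(i)}}$ bounded by $O(1)$ times the normalized tensor error; because the tensor has odd order there is no sign ambiguity. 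Un-whitening via $\hat\mu_i:=\sqrt q\,\hat G^{1/2}\hat c_i$, and using $G^{+1/2}\mu_{\pi(i)}=\sqrt q\,c_{\pi(i)}$ together with the closeness of $\hat G^{1/2}$ and $G^{1/2}$ on the range, yields $\max_i\Norm{G^{+1/2}(\hat\mu_i-\mu_{\pi(i)})}$ bounded by the sum of the three terms.

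The main obstacle is the interaction between the only approximately-known whitening transform and the higher-moment error bound: one must transport the third-moment error through $\hat G^{+1/2}$ at the level of sum-of-squares proofs rather than numerics, and carefully book-keep how the $1/\sqrt\kappa$ spectral blow-up of the whitening compounds with $\delta_2$ and with the mean error to give precisely the three-term bound; the rank-deficient case $q<d$ is a secondary nuisance handled by matching the ranges of $G$ and $\hat G$, which are close whenever the conclusion is non-trivial. None of this is conceptually hard --- it is a direct translation of the classical algorithm --- but the bookkeeping is where the work lies.
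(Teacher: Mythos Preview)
Your proposal is correct and follows essentially the same route as the paper: robustly estimate $M_1,M_2,M_3$ via \cref{alg:moment-estimation-program}, form $\hat T=\hat M_3-3\hat M_1\otimes I$, whiten by $(\hat M_2-I)^{+1/2}$, and feed the result into the degree-$3$ tensor-decomposition black box of \cite{DBLP:conf/focs/MaSS16}; the three error terms you isolate are exactly the paper's $\delta_1/\sqrt\kappa$, $\delta_1\delta_2^{1/2}/\kappa$, and $\delta_3^{1/3}$ contributions from \cref{lem:tensor-is-useful-mog}. Your treatment is in fact slightly more careful than the paper's in two places---you make explicit why the whitened $\nu_i$ are orthogonal (the paper applies \cref{fact:3-tensor-decomposition-blackbox} without comment), and you flag the pseudo-inverse issue when $q<d$---but note that the relevant sum-of-squares norm in \cref{fact:3-tensor-decomposition-blackbox} is $\norm{\cdot}_{\sos_4}$, not $\norm{\cdot}_{\sos_3}$.
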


We will rely on 3rd order tensor decomposition algorithm (Theorem 1.1) from \cite{DBLP:journals/corr/MaSS16} here.

\begin{fact}[\cite{DBLP:conf/focs/MaSS16}] \label{fact:3-tensor-decomposition-blackbox}
There exists a polynomial time algorithm that given a symmetric $3$-tensor $T \in \Paren{\R^d}^{\otimes 3}$, outputs a set of vectors $\{c_1', c_2', \ldots, c_n'\} \subseteq \R^d$, such that for every orthonormal set $\{c_1, c_2, \ldots, c_n\} \in \R^d$, there's a permutation $\pi:[n] \rightarrow [n]$ satisfying 
\[
\max_{i} \|c_i - c_{\pi(i)}'\|^2 \leq O(1) \cdot \| \frac{T}{\|T\|_{\sos_4}} -\sum_{i =1}^n c_i^{\otimes 3}\|_{\sos_4}.
\]
\end{fact}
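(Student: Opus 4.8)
The plan is to recast tensor decomposition as polynomial optimization over the sphere, solve a constant-degree sum-of-squares relaxation of it, and round the resulting pseudo-distribution by Gaussian reweighting, following \cite{DBLP:conf/focs/MaSS16}. Fix an arbitrary orthonormal set $\{c_1,\ldots,c_n\}\subseteq\R^d$ and write $T=\sum_{i=1}^n c_i^{\otimes 3}+E$ with $E=T-\sum_i c_i^{\otimes 3}$; after rescaling $T$ to have unit sum-of-squares norm, the quantity $\eta\seteq\norm{T-\sum_i c_i^{\otimes 3}}_{\sos_4}$ is exactly the right-hand side of the claimed bound (the algorithm itself does not know $\{c_i\}$). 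Over the unit sphere, $p(u)\seteq\iprod{T,u^{\otimes 3}}=\sum_i\iprod{c_i,u}^3+\iprod{E,u^{\otimes 3}}$, and since $\sum_i\iprod{c_i,u}^2\le\norm{u}^2=1$ for orthonormal $c_i$, there is a constant-degree sum-of-squares proof that $\set{\norm{u}^2=1}\sststile{O(1)}{u}\set{\sum_i\iprod{c_i,u}^3\le 1}$, with the left-hand value close to $1$ only when $u$ is close to some $\pm c_i$. Hence the $c_i$ themselves certify $\max_{\norm{u}=1}p(u)\ge 1-\eta$, and near-maximizers of $p$ localize near the components.

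First I would run the sum-of-squares algorithm to produce (approximately) a degree-$\ell_0$ pseudo-distribution $\mu$ over $u$, for a suitable absolute constant $\ell_0$, satisfying $\set{\norm{u}^2=1}$ with $\pE_\mu p(u)\ge 1-\eta$; such a $\mu$ exists by the previous paragraph. The key structural step is to push the sphere-localization argument through $\mu$ using soundness of sum-of-squares (\cref{fact:sos-soundness}): since $\pE_\mu\iprod{E,u^{\otimes 3}}\ge-\eta$ by definition of the sum-of-squares norm, one gets $\pE_\mu\sum_i\iprod{c_i,u}^3\ge 1-O(\eta)$, and combined with the constant-degree certificate $\sum_i\iprod{c_i,u}^3\le 1$ this forces $\mu$ to behave like a genuine distribution supported near $\{\pm c_i\}$: there are weights $w_i\ge 0$ with $\sum_i w_i=1-O(\eta)$ such that the low-degree pseudo-moments of $\mu$ approximate $\sum_i w_i(\pm c_i)^{\otimes t}$ up to error $O(\eta)$.

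Next I would round $\mu$ by Gaussian reweighting. Draw $g\sim N(0,\Id_d)$, form the reweighted second-moment matrix $M_g=\pE_\mu\bigl[\iprod{g,u}^{\ell_0-2}\,u u^{\top}\bigr]$, and output the normalized top eigenvector of $M_g$. With probability $\Omega(1/n)$ over $g$ there is a unique index $i^\ast$ with $\iprod{g,c_{i^\ast}}$ noticeably larger in absolute value than all $\iprod{g,c_i}$, $i\neq i^\ast$; on that event the factor $\iprod{g,u}^{\ell_0-2}$ amplifies the mass of $\mu$ near $\pm c_{i^\ast}$, so that $M_g$ is a positive multiple of $c_{i^\ast}c_{i^\ast}^{\top}$ up to spectral error $O(\eta)$, and hence by Davis--Kahan its top eigenvector lies within squared distance $O(\eta)$ of $\pm c_{i^\ast}$. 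Repeating this $O(n\log n)$ times with independent $g$'s recovers, by a coupon-collector argument, some output vector within squared distance $O(\eta)$ of each $c_i$; clustering the outputs into $n$ groups and keeping one representative per group yields $c_1',\ldots,c_n'$ and a permutation $\pi$ with $\max_i\norm{c_i-c_{\pi(i)}'}^2\le O(1)\cdot\eta$.

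The main obstacle is the rounding analysis: one must show that, on the $\Omega(1/n)$-probability separating event, $M_g$ really is rank-one plus an error of spectral norm $O(\eta)$ — not $O(\sqrt{n}\cdot\eta)$ or something with a dimension dependence. This amounts to bounding the reweighted cross terms $\pE_\mu\bigl[\iprod{g,u}^{\ell_0-2}\iprod{c_i,u}\iprod{c_j,u}\bigr]$ for $\{i,j\}\neq\{i^\ast\}$ by invoking constant-degree sum-of-squares inequalities \emph{inside} $\pE_\mu$ (so that \cref{fact:sos-soundness} applies), exploiting both the orthonormality of the $c_i$ and the localization of $\mu$; the error tensor $E$ enters only through $\pE_\mu\iprod{E,u^{\otimes 3}}\ge-\eta$ and through its sum-of-squares norm controlling analogous reweighted expectations of $E$. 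Secondary technical points — the approximate feasibility of the pseudo-distribution returned by the SDP solver, and correctness of the clustering step — are routine once the $\Omega(1)$ spectral gap in the rounding is established.
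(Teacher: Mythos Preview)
The paper does not prove this statement at all: it is stated as a \emph{Fact} and attributed to \cite{DBLP:conf/focs/MaSS16}, so there is no in-paper proof to compare against. Your write-up is therefore not a reconstruction of something in this paper but a sketch of the cited result.

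That said, your outline is broadly faithful to the approach of \cite{DBLP:conf/focs/MaSS16}: maximize $\iprod{T,u^{\otimes 3}}$ over a constant-degree pseudo-distribution on the sphere, use the sos-certified inequality $\sum_i\iprod{c_i,u}^3\le 1$ for orthonormal $c_i$ to conclude the pseudo-distribution is ``localized'' near the $\pm c_i$, and round by a random linear functional. Two comments. First, the actual rounding in \cite{DBLP:conf/focs/MaSS16} is a random contraction of the degree-$4$ pseudo-moment tensor $\pE_\mu u^{\otimes 4}$ (i.e.\ eigenvectors of $\pE_\mu[\iprod{g,u}^2\,uu^\top]$), not a high-degree reweighting $\pE_\mu[\iprod{g,u}^{\ell_0-2}\,uu^\top]$; the degree-$4$ version already suffices and the analysis hinges on sos-certified bounds for $\sum_i\iprod{c_i,u}^4$ rather than on an unspecified $\ell_0$. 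Second, your ``structural step'' asserting that the low-degree pseudo-moments of $\mu$ approximate $\sum_i w_i(\pm c_i)^{\otimes t}$ up to $O(\eta)$ is the heart of the matter and does not follow just from $\pE_\mu\sum_i\iprod{c_i,u}^3\ge 1-O(\eta)$; one needs the additional sos inequalities relating $\sum_i\iprod{c_i,u}^3$, $\sum_i\iprod{c_i,u}^4$, and $\norm{u}^2$ to control the off-diagonal blocks, and this is exactly where the dimension-free $O(\eta)$ (as opposed to $O(\sqrt{n}\,\eta)$) comes from. You flag this as the main obstacle, which is correct, but the sketch as written does not yet supply that argument.
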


\begin{mdframed}
  \begin{algorithm}[Algorithm for Outlier-Robust Mixture of Spherical Gaussians via Sum of Squares]
    \label[algorithm]{alg:outlier-robust-mog}\mbox{}
    \begin{description}
    \item[Given:]
      $\e$-corrupted sample $X$ from $\sum_i \frac{1}{q} \cN(\mu_i,I)$ for $\mu_1, \mu_2, \ldots, \mu_q$. %
    \item[Estimate:]
      $\mu_1, \mu_2, \ldots, \mu_q$. 
    \item[Operation:]\mbox{}
      \begin{enumerate}
      \item 
      	Apply Algorithm \ref{alg:moment-estimation-program} to estimate the first three moments, $\hat{M}_1, \hat{M}_2,$ and $\hat{M}_3$ of sample $X_0$.
      \item Define the ``whitened'' version of the estimated 3rd moment tensor $\hat{T}$ so that for every $u \in \R^d$, $\iprod{\hat{T},u^{\otimes 3}} = \iprod{\hat{M}_3-3\hat{M}_1 \otimes I, ((\hat{M}_2-I)^{-1/2} u)^{\otimes 3}}$.

      Concretely, set $\hat{M}_1' = (\hat{M}_2-I)^{-1/2} \hat{M}_1$. Let $\sigma_1, \sigma_2, \ldots, \sigma_d$ be the columns of $(\hat{M}_2-I)^{-1/2}$. Set $\hat{T} = \sum_{j_1, j_2, j_3} \sigma_{j_1} \otimes \sigma_{j_2} \otimes \sigma_{j_3} \cdot (\hat{M}_3- 3\hat{M_1}' \otimes I) (j_1, j_2, j_3)$.
      \item 
       Apply Tensor Decomposition from Theorem \ref{fact:tensor-decomposition-blackbox} to $\hat{T}$ to recover components $\hat{\nu}_1, \hat{\nu}_2,\ldots, \hat{\nu}_q$.
        \item Output $\hat{\mu}_i = (\hat{M}_2-I)^{1/2} \hat {\nu}_i$ for $i \leq q$ as the estimated means.%
      \end{enumerate}
    \end{description}    
  \end{algorithm}
\end{mdframed}

The following is the main technical lemma required in the proof of \ref{thm:mixture-of-gaussian-technical}. 

\begin{lemma}
Let $Y$ be an $\epsilon$-corrupted sample of size $n > n_0 = \Omega( (\kappa d \log d)^{k}/\epsilon^2)$ from $D_0 = \sum_i \frac{1}{q} \cN(\mu_i,I)$. 
Let $\hat{M}_i$ be the estimated ith moment of $D_0$ given by Algorithm \ref{alg:moment-estimation-program} when run on $Y$.
Let $\hat{M}_i'$ be the whitening of the estimated $i$th moment of $D_0$ as in Algorithm \ref{alg:outlier-robust-mog}.

Then, with high probability over the draw of the $\epsilon$-corrupted sample $Y$, for $\hat{T} = \hat{M}'_3-3 \hat{M}_1' \otimes I$ and $\nu_i = (M_2-I)^{-1/2} \mu_i$ for $i \leq q$ and $\delta_i = O(qk)^{i/2} \epsilon^{1-i/k}$ for $i = 1,2,3$, 
\[
\|\hat{T}- \frac{1}{q} \sum_i \nu_i^{\otimes 3} \|_{\sos_4} \leq \delta_1/\sqrt{\kappa} + \delta_1\delta_2^{1/2}/\kappa + 2\delta_3^{1/3}.
\] \label{lem:tensor-is-useful-mog}
\end{lemma}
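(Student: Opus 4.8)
The argument runs parallel to the analysis of the whitened fourth-moment tensor in the ICA application above, and splits into two parts: identifying the error-free whitened-and-shifted third-moment tensor $T$ and checking that it equals $\frac{1}{q}\sum_i\nu_i^{\otimes3}$ exactly, and then bounding $\|\hat T-T\|_{\sos_4}$ in terms of the moment-estimation errors.

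For the first part I would compute the low-degree moments of $D_0=\frac{1}{q}\sum_{i}\cN(\mu_i,I)$ in closed form. Using $\E_{\cN(\mu,I)}\langle x,w\rangle^2=\langle\mu,w\rangle^2+\|w\|^2$ and $\E_{\cN(\mu,I)}\langle x,w\rangle^3=\langle\mu,w\rangle^3+3\langle\mu,w\rangle\|w\|^2$, we get $M_1=\frac{1}{q}\sum_i\mu_i$, $M_2=I+\frac{1}{q}\sum_i\mu_i\mu_i^{\top}$ (so that $M_2-I=\frac{1}{q}\sum_i\mu_i\mu_i^{\top}$ has smallest nonzero eigenvalue $\kappa$), and after the appropriate symmetric subtraction $\langle M_3-3\,M_1\otimes I,\,w^{\otimes3}\rangle=\frac{1}{q}\sum_i\langle\mu_i,w\rangle^3$. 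Substituting the whitening $w=(M_2-I)^{-1/2}u$ and recalling $\nu_i=(M_2-I)^{-1/2}\mu_i$, the error-free whitened-and-shifted tensor satisfies the polynomial identity $\langle T,u^{\otimes3}\rangle=\frac{1}{q}\sum_i\langle\nu_i,u\rangle^3$, a trivial sum-of-squares identity in $u$; hence $\|T-\frac{1}{q}\sum_i\nu_i^{\otimes3}\|_{\sos_4}=0$ and, by the triangle inequality for $\|\cdot\|_{\sos_4}$, it suffices to bound $\|\hat T-T\|_{\sos_4}$.

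For the second part I would first check that \cref{alg:moment-estimation-program} applies here: by \cref{lem:mixtures-of-gaussians-certifiable} the mixture $D_0$ is $(k,k)$-certifiably $O(q)$-subgaussian, and by \cref{lem:sampling-preserves-subgaussianity} this is (up to constants) inherited by the uniform distribution on the uncorrupted sample with high probability, so \cref{thm:mean-covariance-main} and \cref{thm:higher-moment-main} apply with $C=O(q)$ and yield the three moment guarantees with error parameters $\delta_1=O(\sqrt{qk})\,\e^{1-1/k}$, $\delta_2=O(qk)\,\e^{1-2/k}$, $\delta_3=O(qk)^{3/2}\,\e^{1-3/k}$. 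I would then write $\hat T-T$ as a sum of three contributions and bound each as a sum-of-squares statement in $u$ over the sphere: \emph{(i)} the error from replacing $M_3$ by $\hat M_3$ under the true whitening, controlled by \cref{lem:higher-moment-estimation-sos} with $r=3$ (which supplies the needed SoS proof in $u$), substituting $u\mapsto(M_2-I)^{-1/2}u$ and using $M_2=I+(M_2-I)$ to bound the whitened quadratic form, which contributes, after the $\sos_4$-norm conversion, a term $O(\delta_3^{1/3})$; \emph{(ii)} the error from replacing $M_1$ by $\hat M_1$ in the shift term, controlled by \cref{cor:mean-identifiability-no-covariance-bound-sos} together with $\Sigma_{D_0}=M_2-M_1M_1^{\top}\preceq M_2$, which after whitening contributes $O(\delta_1/\sqrt\kappa)$, the factor $\kappa^{-1/2}$ coming from $\|(M_2-I)^{-1/2}\Sigma_{D_0}^{1/2}\|$ while the $\otimes I$ slots contribute $\|u\|^2=1$ on the sphere; \emph{(iii)} the error from whitening with the estimated $(\hat M_2-I)^{-1/2}$ instead of $(M_2-I)^{-1/2}$, controlled via the multiplicative bound $(1-\delta_2)M_2\preceq\hat M_2\preceq(1+\delta_2)M_2$ of \cref{lem:multiplicative-second-moment-sos}, which gives $\|(\hat M_2-I)^{-1/2}(M_2-I)^{1/2}-I\|=O(\delta_2/\kappa)$ and, combined with the first-moment bound, contributes the cross term $O(\delta_1\delta_2^{1/2}/\kappa)$. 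Summing the three pieces by the triangle inequality for $\|\cdot\|_{\sos_4}$ gives the claimed bound.

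The main obstacle is the bookkeeping around the whitening operator: the algorithm whitens with the estimated matrix $(\hat M_2-I)^{-1/2}$ while the target frame is $\nu_i=(M_2-I)^{-1/2}\mu_i$, and $M_2-I$ may be ill-conditioned (smallest nonzero eigenvalue $\kappa$), so a Frobenius-type covariance guarantee would be useless here; it is precisely the multiplicative (\Lowner) guarantee of \cref{thm:mean-covariance-main}/\cref{cor:multiplicative-covariance-identifiability-sos} that lets one bound $\|(\hat M_2-I)^{-1/2}(M_2-I)^{1/2}-I\|$ and thereby transfer the SoS-in-$u$ error bounds from the true whitening to the estimated one. A secondary subtlety is that all three error bounds must be available as sum-of-squares proofs \emph{in $u$} so that they translate into $\sos_4$-norm bounds on tensors; this is guaranteed for the third-moment term by \cref{lem:higher-moment-estimation-sos}, while the degree-$\le2$ pieces are automatic since unconstrained degree-$\le2$ polynomial inequalities always admit SoS proofs.
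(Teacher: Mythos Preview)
Your proposal is correct and follows essentially the same route as the paper: first compute the exact low-degree moments of the mixture to verify that the error-free whitened-and-shifted third moment equals $\tfrac{1}{q}\sum_i\nu_i^{\otimes 3}$, then invoke certifiable subgaussianity of the mixture (\cref{lem:mixtures-of-gaussians-certifiable}) and of the sample (\cref{lem:sampling-preserves-subgaussianity}) so the moment-estimation theorems apply with $C=O(q)$, and finally split $\hat T-T$ into the same three contributions---mean error under true whitening, whitening change, and third-moment error---each controlled by the corresponding $\delta_i$. The only cosmetic difference is that the paper nests the whitening-change term inside the first-moment split rather than listing it as a separate top-level piece; the resulting bookkeeping and the final bound are identical.
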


\begin{proof}
As before, we will first analyze the tensor $T$ assuming all estimates of the moments are exactly correct and then account for the estimation errors. We will write $M_1, M_2, M_3$ for the true first three moments of the input mixture of gaussian and $\Sigma = M_2-I$. We will let $\hat{M}_1, \hat{M}_2, \hat{M}_3, \hat{\Sigma}$ be the corresponding quantities estimated  by our algorithm.  We will write $M_1', M_3'$ for the true whitened moments and $\hat{M}_1', \hat{M}_3'$, the estimated counterparts.  

By direct computtion, $\iprod{M_3, u^{\otimes 3}} = \frac{1}{q} \sum_i (\iprod{\mu_i,u}^3 + 3\iprod{\mu_i,u})$. Thus, $\iprod{M_3-3M_1 \otimes I, u^{\otimes 3}} = \frac{1}{q} \sum_i \iprod{\mu_i,u}^3$.

Now, by definition, $\iprod{M_3' - 3M_1' \otimes I, u^{\otimes 3}} = \iprod{M_3-3M_1 \otimes I, (\Sigma^{-1/2}u)^{\otimes 3}}$.

As a result, for $T = M_3'-3M_1' \otimes I$, we have: $\|T - \frac{1}{q}\sum_i (\Sigma^{-1/2}\mu_i)^{\otimes 3}\|_{\sos_4} = 0$.

We now account for the estimation errors and bound $\|T - \hat{T}\|_{\sos_4}$.
First, we use Lemma \ref{lem:mixtures-of-gaussians-certifiable} and Lemma \ref{lem:sampling-preserves-subgaussianity} to conclude that with high probability, the uniform distribution on the input sample is $k$-certifiably $O(q)$-subgaussian. 

Now (via triangle inequality), 
\begin{equation}
\|T - \hat{T}\|_{\sos_4}\leq \|M_3' - \hat{M}_3'\|_{\sos_4} + 3\|M_1' \otimes I - \hat{M}_1' \otimes I \|_{\sos_4}.
\label{eq:basic-split}
\end{equation}

Now, 
\begin{equation}
|\iprod{M_1'-\hat{M}_1',u}| \leq |\iprod{\Sigma^{-1/2}(M_1-\hat{M}_1),u}| + |\iprod{(\Sigma^{-1/2}-\hat{\Sigma}^{-1/2}) \hat{M}_1,u}| \label{eq:err-mean} 
\end{equation}

From Theorem \ref{thm:mean-covariance-main}, for $\delta_1 = O(\sqrt{qk}) \epsilon^{1-1/k}, \delta_2 = O(qk) \epsilon^{1-2/k}$, $\iprod{M_1-\hat{M}_1, \Sigma^{-1/2} u} \leq \delta_1 \iprod{u, M_2 \Sigma^{-1} u}^{1/2}$. 

Using that $M_2 = I + \frac{1}{q} \sum_i \mu_i \mu_i^{\top}$, $M_2 \Sigma^{-1} = (\frac{1}{q}\sum_i \mu_i \mu_i^{\top})^{-1} + I$. Thus, the first term in \eqref{eq:err-mean} can be upper bounded by $\delta_1 \|u\|_2 + \delta_1 \iprod{u,\Sigma^{-1},u}^{1/2} \leq \delta_1 (1+1/\sqrt{\kappa}) \|u\|_2$.

From Theorem \ref{thm:mean-covariance-main}, for any $u$, $\iprod{u,(M_2-\hat{M}_2)u} \leq \delta_2 \iprod{u,M_2 u}$. Writing $M_2 = \Sigma+I$ for $\Sigma = \frac{1}{q} \mu_i \mu_i^{\top}$, we obtain that $\iprod{u,\hat{\Sigma} u}= (1 \pm O(\delta_2/\kappa)) \iprod{u, \Sigma u}$. %

Thus, $\iprod{(\Sigma^{-1/2}-\hat{\Sigma}^{-1/2}) \hat{M}_1,u} \leq O(\delta_2^{1/2}/\sqrt{\kappa}) \cdot \iprod{\Sigma^{-1/2} \hat{M}_1, u}$. 
Writing $\iprod{\Sigma^{-1/2} \hat{M}_1, u} = \iprod{ \Sigma^{-1/2} (\hat{M}_1 - M_1) u} + \iprod{ \Sigma^{-1/2} M_1 u}$ and using that $\frac{1}{q} \sum_i \iprod{\Sigma^{-1/2} \mu_i, u} \leq \frac{1}{\sqrt{q}}\|u\|_2$, we finally obtain anupper bound on $\eqref{eq:err-mean}$ of: $\delta_1 (1+1/\sqrt{\kappa}) + O(\delta_2^{1/2}/\sqrt{\kappa}) \delta_1 \Paren{ 1+1/\sqrt{\kappa} + \frac{1}{\sqrt{q}}} \|u\|_2 \leq O(\delta_2^{1/2} \delta_1/\kappa + \delta_1/\sqrt{\kappa}) \|u\|_2$.

For the first term of \eqref{eq:basic-split}, we apply Theorem \ref{thm:higher-moment-main} to get that for $\delta_3 = O(qk)^{3/2} \epsilon^{1-3/k}$, 

\[
\sststile{6}{u} \iprod{\hat{M_3} - M_3,u^{\otimes 3}}^2 \leq \delta_3^2 \iprod{M_2,u^{\otimes 2}}^{3} \leq 2\delta_3 \iprod{\Sigma,u^{\otimes 2}}^{3} \|u\|_2^{3}.
\]
 As a result, $\sststile{6}{u} \iprod{\hat{M_3}' - M_3',u^{\otimes 3}}^2 \leq 2\delta_3^2 \|u\|_2^{3}$ giving us an upper bound on the first term in the RHS of \eqref{eq:basic-split}.

This completes the proof.

\end{proof}

We can now complete the proof of Theorem \ref{thm:mixture-of-gaussian-technical}.

\begin{proof}[Proof of Theorem \ref{thm:mixture-of-gaussian-technical}]
Applying Fact \ref{fact:3-tensor-decomposition-blackbox} to $\hat{T}$ as in Lemma \ref{lem:tensor-is-useful-mog} yields that the resulting components satisfy $\min_{\pi:[q]\rightarrow [q]} \max_i \| \nu_i - \hat{\nu}_{\pi(i)}\|_2 \leq O(1) \delta_1/\sqrt{\kappa} + \delta_1\delta_2^{1/2}/\kappa + 2\delta_3^{1/3}$. The theorem now follows. 
\end{proof}

\section{Information Theoretic Lower Bounds} \label{sec:lower-bounds}
In this section, we give simple examples that show that our robust moment estimation results achieve sharp information theoretic limits for (certifiably) subgaussian distributions. In both results below, for any choice of $k$, we will show that there are two distributions that 1) both satisfy $(k,k)$-certifiable $C$-subgaussianity and 2) are at most $\epsilon$ apart in total variation distance but have low-degree moments that are exactly as far apart as the estimation error in our algorithm. In fact, these distributions are really simple and in fact just one dimensional and thus the certifiability follows from generic resuls about sum of squares representations for univariate non-negative polynomials.

We begin by showing a lower bound on the error incurred in robust estimation of mean.

\begin{lemma}
There exist two distributions $D_1$ and $D_2$ on $\R$ such that:
\begin{enumerate}
\item $D_1$, $D_2$ are both $2$-subgaussian in all moments up to $k$. 
\item $d_{TV}(D_1, D_2) \le \epsilon$.
\item $\|\E_{D_1}x- \E_{D_2} x\| \geq \sqrt{k} \epsilon^{1-1/k}$.
\end{enumerate}
\label{lem:info-theoretic-lowerbound-mean}
\end{lemma}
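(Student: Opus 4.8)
The plan is to take $D_1$ and $D_2$ to be simple two-atom / atom-plus-mass perturbations of a point mass, following exactly the scenario sketched informally in the introduction (the distribution putting mass $1-\epsilon$ on $0$ and mass $\epsilon$ at a far-away point). Concretely, I would let $D_1 = \delta_0$ be the point mass at $0$, and let $D_2 = (1-\epsilon)\delta_0 + \epsilon \delta_t$ be the mixture that puts mass $1-\epsilon$ at $0$ and mass $\epsilon$ at the single point $t$, where $t$ is chosen as $t = \Theta(\sqrt{k}\,\epsilon^{-1/k})$. Then $d_{TV}(D_1,D_2) \le \epsilon$ is immediate from the definition (they differ only on the $\epsilon$-mass at $t$), so item (2) holds. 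For item (3), $\E_{D_1} x = 0$ and $\E_{D_2} x = \epsilon t = \Theta(\sqrt{k}\,\epsilon^{1-1/k})$, which is exactly the claimed separation.

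For item (1), I need both distributions to be $2$-subgaussian in all moments up to $k$, i.e.\ to satisfy $(\E_D (x-\mu_D)^j)^{1/j} \le 2\sqrt{j}\,(\E_D(x-\mu_D)^2)^{1/2}$ for all even $j \le k$ (this is the scalar condition \cref{eq:subgaussian-scalar}, and since these are one-dimensional it is the same as $(k,k)$-certifiable subgaussianity up to absorbing constants, because univariate nonnegative polynomials are sums of squares). For $D_1$ this is trivial: it is a point mass, all centered moments vanish. For $D_2$, I would compute the centered moments directly. Writing $\mu = \epsilon t$, the second centered moment is $\E_{D_2}(x-\mu)^2 = (1-\epsilon)(\epsilon t)^2 + \epsilon(t-\epsilon t)^2 = \epsilon(1-\epsilon)t^2 \ge \tfrac12 \epsilon t^2$ for $\epsilon \le 1/2$; more simply, the raw second moment is $\E x^2 = \epsilon t^2$, and $\E(x-\mu)^2 = \epsilon t^2 - \epsilon^2 t^2 \ge \tfrac12\epsilon t^2$. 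For the $j$-th centered moment, $\abs{\E_{D_2}(x-\mu)^j} \le (1-\epsilon)(\epsilon t)^j + \epsilon((1-\epsilon)t)^j \le 2\epsilon t^j$ (using $\epsilon t \le t$ and $j\ge 2$, $\epsilon\le 1$). So I need
\[
(2\epsilon t^j)^{1/j} \le 2\sqrt{j}\cdot \bigparen{\tfrac12 \epsilon t^2}^{1/2},
\]
i.e.\ roughly $2^{1/j}\epsilon^{1/j} t \le 2\sqrt{j}\cdot 2^{-1/2}\epsilon^{1/2} t$, i.e.\ $\epsilon^{1/j - 1/2} \le c\sqrt{j}$ for an absolute constant $c$. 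Since the worst case over $j\le k$ is $j=k$ (the exponent $1/j-1/2$ is largest there for $j\ge 2$), this reduces to checking $t^{j-2} \cdot \text{(constants)}$ against powers of $j$; carrying the $t = \Theta(\sqrt{k}\epsilon^{-1/k})$ choice through, the binding constraint is at $j=k$ and reads $\epsilon^{1/k}\cdot t \le O(\sqrt{k})\cdot \epsilon^{1/2}\cdot t^{2/k}$, which holds by the choice of $t$; for $j < k$ the inequality is slacker. I would also double-check $\epsilon$ small enough that all the $\le 1/2$ assumptions are valid, which is harmless since the lower bound is only interesting for small $\epsilon$.

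The main obstacle—really the only delicate point—is pinning down the exponent bookkeeping: verifying that with the single choice $t = \Theta(\sqrt{k}\,\epsilon^{-1/k})$ the subgaussianity inequality $(\E(x-\mu)^j)^{1/j} \le 2\sqrt{j}(\E(x-\mu)^2)^{1/2}$ holds \emph{simultaneously} for all even $j\le k$ with the \emph{same} constant $2$, and that it is genuinely tight at $j = k$ (so the construction cannot be pushed further). This amounts to showing the function $j \mapsto \tfrac1j\log(\text{const}\cdot\epsilon t^j) - \log(2\sqrt{j}) - \tfrac12\log(\tfrac12\epsilon t^2)$ is maximized at $j=k$ over $[2,k]$, which follows because $\tfrac1j\log(\epsilon t^j) = \log t + \tfrac1j\log\epsilon$ is increasing in $j$ (as $\log\epsilon < 0$) and dominates the slowly-growing $-\log\sqrt j$ term once $t > 1$. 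Once that monotonicity is in hand the rest is a one-line substitution. I would close by remarking that since $D_1, D_2$ are one-dimensional, certifiable subgaussianity is automatic from the fact that a nonnegative univariate polynomial of degree $2k'$ is a sum of squares (cf.\ the footnote after \cref{def:certifiable-subgaussianity}), so the stated $2$-subgaussianity of moments up to $k$ upgrades to $(k,k)$-certifiable $O(1)$-subgaussianity as needed for the matching-lower-bound claim in \cref{sec:lower-bounds}.
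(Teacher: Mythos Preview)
Your construction has a genuine gap: with $D_1=\delta_0$ and $D_2=(1-\epsilon)\delta_0+\epsilon\delta_t$, the variance of $D_2$ is only $\epsilon(1-\epsilon)t^2=\Theta(k\epsilon^{1-2/k})$, which tends to $0$ with $\epsilon$ for fixed $k$. You correctly reduce the subgaussianity check at level $j$ to $\epsilon^{1/j-1/2}\le c\sqrt{j}$, and correctly note the worst case is $j=k$; but then the binding inequality is simply $\epsilon^{1/k-1/2}\le c\sqrt{k}$, which is a constraint on $\epsilon$ and $k$ alone and \emph{does not} involve $t$ at all. (Your sentence ``reads $\epsilon^{1/k}\cdot t \le O(\sqrt{k})\cdot \epsilon^{1/2}\cdot t^{2/k}$, which holds by the choice of $t$'' is an algebra slip---both sides are linear in $t$, so $t$ cancels.) The inequality $\epsilon^{1/k-1/2}\le c\sqrt{k}$ fails once $\epsilon$ is smaller than roughly $(c\sqrt{k})^{-2k/(k-2)}\approx 1/(c^2 k)$, so for small $\epsilon$ your $D_2$ is \emph{not} $O(1)$-subgaussian in its $k$-th moment, and the lemma would only be proved in the uninteresting regime $\epsilon\gtrsim 1/k$.

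The paper avoids this by taking the base distribution to be $\cN(0,1)$ rather than a point mass: $D_1=\cN(0,1)$ and $D_2$ is the $(1-\epsilon,\epsilon)$ mixture of $\cN(0,1)$ with the atom at $t=\sqrt{k}\,\epsilon^{-1/k}$. Now $\mathrm{Var}(D_2)\ge 1$, while the $k$-th raw moment is at most $(1-\epsilon)\,k^{k/2}+\epsilon\,t^{k}=(1-\epsilon)k^{k/2}+k^{k/2}<2k^{k/2}$, so the $k$-th centered moment is bounded by $2k^{k/2}(\mathrm{Var}\,D_2)^{k/2}$ and $2$-subgaussianity holds for \emph{all} $\epsilon$. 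The rest of your outline (TV bound, mean gap, automatic certifiability in one dimension) is fine; only the choice of base distribution needs to change.
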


\begin{proof}
Let $D_1$ be the standard gaussian distribution $\cN(0,1)$. 
Let $D_2$ be the distribution that with probability $1-\epsilon$ outputs a sample from $\cN(0,1)$ and with probability $\epsilon$, outputs $\sqrt{k} \epsilon^{-1/k}$. Observe that by construction, $d_{TV}(D_1, D_2) \leq \epsilon$.

The mean of $D_2$ is easily seen to be $\sqrt{k} \epsilon^{1-1/k}$. 
The variance of $D_2$ can be computed to be between $1$ and $(1 + k \epsilon^{1-2/k})$.

It is standard to check that $D_1$ is $C$-subgaussian in all moments for $C = 1 < 2$.
For $D_2$, we observe that $\E_{D_2} (x-\E_{D_2}x)^{2k}\leq \E_{D_2} x^{2k} \leq (1-\epsilon) k^{k/2} + \epsilon \sqrt{k}^{k} \epsilon^{-1} < 2 k^{k/2} \leq 2 k^{k/2} (\E_{D_2}(x-\E_{D_2}x)^2)^{k/2}$. Thus, $D_2$ is $2$-subgaussian.

This completes the proof. 

\end{proof}

Next, we show that our robust estimation error for the covariance and higher moments are also tight up to constant factors. As before, we will construct two zero-mean, $2$-subgaussian distributions on $\R$ that are at most $\epsilon$ apart in total variation but the variances differ by at least $k/2 \epsilon^{1-2/k}$. The construction is very similar to the one above for mean estimation.

\begin{lemma}
There exist two mean zero distributions $D_1$ and $D_2$ on $\R$ such that:
\begin{enumerate}
\item $D_1$, $D_2$ are both $2$-subgaussian in all moments up to $k$. 
\item $d_{TV}(D_1, D_2) \le \epsilon$.
\item For any $\epsilon < 1/2$, $|\E_{D_1}x^2 - \E_{D_2} x^2| \geq \frac{k}{2} \epsilon^{1-2/k}$.
\item For any $\epsilon < 2^{-k/2r}$, $|\E_{D_1}x^{2r} - \E_{D_2} x^{2r}| \geq \frac{k^{r}}{2} \epsilon^{1-2r/k}$.

\end{enumerate}
\label{lem:info-theoretic-lowerbound-covariance}
\end{lemma}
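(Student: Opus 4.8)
The plan is to mimic the construction from \cref{lem:info-theoretic-lowerbound-mean}, but with a symmetric "heavy spike" so that the means stay zero while the higher even moments change.  Concretely, I would take $D_1 = \cN(0,1)$ and let $D_2$ be the mixture that with probability $1-\epsilon$ outputs a sample from $\cN(0,1)$ and with probability $\epsilon$ outputs $\pm t$ each with probability $\epsilon/2$, where $t = \sqrt{k}\,\epsilon^{-1/k}$ (for the variance claim) or more generally $t = \sqrt{k}\,\epsilon^{-1/k}$ again but read off against the $2r$-th moment.  Symmetry of the spike immediately gives $\E_{D_2} x = 0$ and, in fact, all odd moments of $D_2$ vanish, so both distributions are mean zero.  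By construction $d_{TV}(D_1,D_2)\le \epsilon$.

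The second step is the subgaussianity check, which is essentially identical to the one in \cref{lem:info-theoretic-lowerbound-mean}.  Since $D_1 = \cN(0,1)$ is $1$-subgaussian in all moments (\cref{lem:subgaussianity-of-gaussian}), it suffices to bound the even moments of $D_2$.  For any $j\le k$, $\E_{D_2} x^{2j} = (1-\epsilon)\,\E_{\cN(0,1)} x^{2j} + \epsilon\, t^{2j} \le \Gamma_j + \epsilon\cdot k^{j}\epsilon^{-2j/k}$.  For $j\le k/2$ this is at most $\Gamma_j + k^{j}\epsilon^{1-2j/k}\le 2 k^{j}$ (using $\Gamma_j\le (2j)^j$ and $\epsilon^{1-2j/k}\le 1$ when $\epsilon<1$, $2j\le k$), while the variance of $D_2$ is $\E_{D_2} x^2 = 1+\epsilon t^2 \ge 1$.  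Hence $\E_{D_2} x^{2j} \le 2k^{j}\le 2 k^{j} (\E_{D_2} x^2)^{j}$, so $D_2$ is $2$-subgaussian up to degree $k$.  One dimensional non-negative polynomials have SoS representations, so this also gives $(k,k)$-certifiable $2$-subgaussianity; for definiteness I would just cite the univariate SoS fact (or \cref{lem:certifiably-subgaussian-mixtures}), as the lemma statement only asks for $2$-subgaussianity in all moments up to $k$ anyway.

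The third step is the moment-gap computation.  For the variance: $\E_{D_2} x^2 - \E_{D_1} x^2 = \epsilon t^2 = k\,\epsilon^{1-2/k}$, and subtracting the contribution of replacing mass $\epsilon$ of $\cN(0,1)$ costs at most $\epsilon$ in absolute value, so for $\epsilon<1/2$ the difference is at least $k\,\epsilon^{1-2/k} - \epsilon \ge \tfrac{k}{2}\epsilon^{1-2/k}$ (since $\epsilon \le \tfrac12 k\,\epsilon^{1-2/k}$ when $\epsilon^{2/k}\le k/2$, which holds for $\epsilon<1/2$, $k\ge 2$).  For the $2r$-th moment, more carefully: $\E_{D_2} x^{2r} - \E_{D_1} x^{2r} = \epsilon t^{2r} - \epsilon\, \E_{\cN(0,1)} x^{2r} = \epsilon\cdot k^{r}\epsilon^{-2r/k} - \epsilon\Gamma_r = k^{r}\epsilon^{1-2r/k} - \epsilon\Gamma_r$, and when $\epsilon < 2^{-k/2r}$ the subtracted term $\epsilon\Gamma_r \le \tfrac12 k^{r}\epsilon^{1-2r/k}$, yielding the claimed bound $\tfrac{k^r}{2}\epsilon^{1-2r/k}$.

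The main obstacle is purely bookkeeping: getting the threshold on $\epsilon$ (the conditions $\epsilon<1/2$ and $\epsilon<2^{-k/2r}$) exactly right so that the "error term" from depleting the Gaussian by mass $\epsilon$ is dominated by the spike's contribution, and checking that the spike height $t=\sqrt k\,\epsilon^{-1/k}$ simultaneously (i) keeps $D_2$ $2$-subgaussian up to degree $k$ and (ii) is large enough to force the moment gap.  There is no genuine difficulty here — it is the same one-parameter trade-off already used in \cref{lem:info-theoretic-lowerbound-mean}, just tracked through an even moment instead of the mean.
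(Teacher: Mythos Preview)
Your proposal is correct and follows essentially the same approach as the paper: the paper also takes $D_1=\cN(0,1)$ and $D_2$ the $(1-\epsilon,\epsilon)$ mixture of $\cN(0,1)$ with the uniform distribution on $\pm\sqrt{k}\,\epsilon^{-1/k}$, then performs the same subgaussianity check (by reference to \cref{lem:info-theoretic-lowerbound-mean}) and the same moment-gap arithmetic. Your bookkeeping for the thresholds on $\epsilon$ is slightly more explicit than the paper's, but the construction and argument are identical.
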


\begin{proof}
Let $D_1$ be the standard gaussian distribution $\cN(0,1)$. 
Let $D_2$ be the distribution that with probability $1-\epsilon$ outputs a sample from $\cN(0,1)$ and with probability $\epsilon$, outputs a uniform element of $\pm \sqrt{k} \epsilon^{-1/k}$.
Observe that by construction, $d_{TV}(D_1, D_2) \leq \epsilon$.

The mean of $D_1,D_2$ is easily seen to be $0$. The same computation as in the proof of Lemma  \ref{lem:info-theoretic-lowerbound-mean} establishes that $D_1$, $D_2$ are both $2$-subgaussian. 

The variance of $D_1$ is $1$ and for $D_2$ can be computed to be at least $(1-\epsilon)(1 + k \epsilon^{1-2/k} - k \epsilon^{2-2/k}) \geq 1+\frac{k}{2} \epsilon^{1-2/k}$.

The $2r$th moment of $D_1$ is $\sim k^{r}$ and for $D_2$ can be computed to as: $k^{r}(1-\epsilon + \epsilon^{1-2r/k}) \geq k^{r}(1+ 0.5 \epsilon^{1-2r/k})$ for $\epsilon < 2^{-k/2r}$.

This completes the proof. 

\end{proof}

  \phantomsection
  \addcontentsline{toc}{section}{References}
  \bibliographystyle{amsalpha}
  \bibliography{bib/mathreview,bib/dblp,bib/custom,bib/scholar}

\newcommand{\etalchar}[1]{$^{#1}$}
\providecommand{\bysame}{\leavevmode\hbox to3em{\hrulefill}\thinspace}
\providecommand{\MR}{\relax\ifhmode\unskip\space\fi MR }
\providecommand{\MRhref}[2]{%
  \href{http://www.ams.org/mathscinet-getitem?mr=#1}{#2}
}
\providecommand{\href}[2]{#2}
\begin{thebibliography}{DKK{\etalchar{+}}17b}

\bibitem[AFH{\etalchar{+}}12]{DBLP:conf/nips/AnandkumarFHKL12}
Anima Anandkumar, Dean~P. Foster, Daniel~J. Hsu, Sham Kakade, and Yi{-}Kai Liu,
  \emph{A spectral algorithm for latent dirichlet allocation}, {NIPS}, 2012,
  pp.~926--934.

\bibitem[AGH{\etalchar{+}}14]{DBLP:journals/jmlr/AnandkumarGHKT14}
Animashree Anandkumar, Rong Ge, Daniel~J. Hsu, Sham~M. Kakade, and Matus
  Telgarsky, \emph{Tensor decompositions for learning latent variable models},
  Journal of Machine Learning Research \textbf{15} (2014), no.~1, 2773--2832.

\bibitem[BCMV14]{DBLP:conf/stoc/BhaskaraCMV14}
Aditya Bhaskara, Moses Charikar, Ankur Moitra, and Aravindan Vijayaraghavan,
  \emph{Smoothed analysis of tensor decompositions}, {STOC}, {ACM}, 2014,
  pp.~594--603.

\bibitem[Ber06]{bernholt2006robust}
Thorsten Bernholt, \emph{Robust estimators are hard to compute}, Tech. report,
  Technical Report/Universit{\"a}t Dortmund, SFB 475 Komplexit{\"a}tsreduktion
  in Multivariaten Datenstrukturen, 2006.

\bibitem[BKS15]{DBLP:conf/stoc/BarakKS15}
Boaz Barak, Jonathan~A. Kelner, and David Steurer, \emph{Dictionary learning
  and tensor decomposition via the sum-of-squares method}, {STOC}, {ACM}, 2015,
  pp.~143--151.

\bibitem[BM16]{DBLP:conf/colt/BarakM16}
Boaz Barak and Ankur Moitra, \emph{Noisy tensor completion via the
  sum-of-squares hierarchy}, {COLT}, {JMLR} Workshop and Conference
  Proceedings, vol.~49, JMLR.org, 2016, pp.~417--445.

\bibitem[BS16]{BarakS16}
Boaz Barak and David Steurer, \emph{Proofs, beliefs, and algorithms through the
  lens of sum-of-squares}, 2016, Lecture notes in preparation, available on
  \url{http://sumofsquares.org}.

\bibitem[CJN17]{DBLP:conf/colt/Cherapanamjeri017}
Yeshwanth Cherapanamjeri, Prateek Jain, and Praneeth Netrapalli,
  \emph{Thresholding based outlier robust {PCA}}, {COLT}, Proceedings of
  Machine Learning Research, vol.~65, {PMLR}, 2017, pp.~593--628.

\bibitem[CR09]{DBLP:journals/focm/CandesR09}
Emmanuel~J. Cand{\`{e}}s and Benjamin Recht, \emph{Exact matrix completion via
  convex optimization}, Foundations of Computational Mathematics \textbf{9}
  (2009), no.~6, 717--772.

\bibitem[CRTV05]{DBLP:conf/focs/CandesRTV05}
Emmanuel~J. Cand{\`{e}}s, Mark Rudelson, Terence Tao, and Roman Vershynin,
  \emph{Error correction via linear programming}, {FOCS}, {IEEE} Computer
  Society, 2005, pp.~295--308.

\bibitem[CSV17]{DBLP:conf/stoc/CharikarSV17}
Moses Charikar, Jacob Steinhardt, and Gregory Valiant, \emph{Learning from
  untrusted data}, {STOC}, {ACM}, 2017, pp.~47--60.

\bibitem[DKK{\etalchar{+}}16]{DBLP:conf/focs/DiakonikolasKK016}
Ilias Diakonikolas, Gautam Kamath, Daniel~M. Kane, Jerry Li, Ankur Moitra, and
  Alistair Stewart, \emph{Robust estimators in high dimensions without the
  computational intractability}, {FOCS}, {IEEE} Computer Society, 2016,
  pp.~655--664.

\bibitem[DKK{\etalchar{+}}17a]{DBLP:journals/corr/DiakonikolasKK017}
\bysame, \emph{Being robust (in high dimensions) can be practical}, CoRR
  \textbf{abs/1703.00893} (2017).

\bibitem[DKK{\etalchar{+}}17b]{DBLP:journals/corr/DiakonikolasKK017a}
\bysame, \emph{Robustly learning a gaussian: Getting optimal error,
  efficiently}, CoRR \textbf{abs/1704.03866} (2017).

\bibitem[DLCC07]{MR2473563-DeLathauwer07}
Lieven De~Lathauwer, Jos\'ephine Castaing, and Jean-Fran\c{c}ois Cardoso,
  \emph{Fourth-order cumulant-based blind identification of underdetermined
  mixtures}, IEEE Trans. Signal Process. \textbf{55} (2007), no.~6, part 2,
  2965--2973. \MR{2473563}

\bibitem[GLS81]{MR625550-Grotschel81}
M.~Gr\"otschel, L.~Lov\'asz, and A.~Schrijver, \emph{The ellipsoid method and
  its consequences in combinatorial optimization}, Combinatorica \textbf{1}
  (1981), no.~2, 169--197. \MR{625550}

\bibitem[GM15]{DBLP:conf/approx/GeM15}
Rong Ge and Tengyu Ma, \emph{Decomposing overcomplete 3rd order tensors using
  sum-of-squares algorithms}, {APPROX-RANDOM}, LIPIcs, vol.~40, Schloss
  Dagstuhl - Leibniz-Zentrum fuer Informatik, 2015, pp.~829--849.

\bibitem[Gro11]{DBLP:journals/tit/Gross11}
David Gross, \emph{Recovering low-rank matrices from few coefficients in any
  basis}, {IEEE} Trans. Information Theory \textbf{57} (2011), no.~3,
  1548--1566.

\bibitem[GVX13]{DBLP:journals/corr/GoyalVX13}
Navin Goyal, Santosh Vempala, and Ying Xiao, \emph{Fourier {PCA}}, CoRR
  \textbf{abs/1306.5825} (2013).

\bibitem[HK13a]{MR3385380-Hsu13}
Daniel Hsu and Sham~M. Kakade, \emph{Learning mixtures of spherical
  {G}aussians: moment methods and spectral decompositions},
  I{TCS}'13---{P}roceedings of the 2013 {ACM} {C}onference on {I}nnovations in
  {T}heoretical {C}omputer {S}cience, ACM, New York, 2013, pp.~11--19.
  \MR{3385380}

\bibitem[HK13b]{DBLP:conf/innovations/HsuK13}
Daniel~J. Hsu and Sham~M. Kakade, \emph{Learning mixtures of spherical
  gaussians: moment methods and spectral decompositions}, {ITCS}, {ACM}, 2013,
  pp.~11--20.

\bibitem[HL17]{HopkinsLi17}
Sam~B. Hopkins and Jerry Li, \emph{Mixture models, robustness, and sum of
  squares proofs}, 2017.

\bibitem[HRRS11]{hampel2011robust}
Frank~R Hampel, Elvezio~M Ronchetti, Peter~J Rousseeuw, and Werner~A Stahel,
  \emph{Robust statistics: the approach based on influence functions}, vol.
  114, John Wiley \& Sons, 2011.

\bibitem[HSS15]{DBLP:conf/colt/HopkinsSS15}
Samuel~B. Hopkins, Jonathan Shi, and David Steurer, \emph{Tensor principal
  component analysis via sum-of-square proofs}, {COLT}, {JMLR} Workshop and
  Conference Proceedings, vol.~40, JMLR.org, 2015, pp.~956--1006.

\bibitem[Hub11]{huber2011robust}
Peter~J Huber, \emph{Robust statistics}, International Encyclopedia of
  Statistical Science, Springer, 2011, pp.~1248--1251.

\bibitem[KMV10]{DBLP:conf/stoc/KalaiMV10}
Adam~Tauman Kalai, Ankur Moitra, and Gregory Valiant, \emph{Efficiently
  learning mixtures of two gaussians}, {STOC}, {ACM}, 2010, pp.~553--562.

\bibitem[KS17]{KothariSteinhardt17}
Pravesh~K. Kothari and Jacob Steinhardt, \emph{Better agnostic clustering via
  relaxed tensor norms}, 2017.

\bibitem[Las01]{MR1846160-Lasserre01}
Jean~B. Lasserre, \emph{New positive semidefinite relaxations for nonconvex
  quadratic programs}, Advances in convex analysis and global optimization
  ({P}ythagorion, 2000), Nonconvex Optim. Appl., vol.~54, Kluwer Acad. Publ.,
  Dordrecht, 2001, pp.~319--331. \MR{1846160}

\bibitem[LRV16]{DBLP:conf/focs/LaiRV16}
Kevin~A. Lai, Anup~B. Rao, and Santosh Vempala, \emph{Agnostic estimation of
  mean and covariance}, {FOCS}, {IEEE} Computer Society, 2016, pp.~665--674.

\bibitem[MMY06]{maronna2006robust}
RARD Maronna, R~Douglas Martin, and Victor Yohai, \emph{Robust statistics},
  John Wiley \& Sons, Chichester. ISBN, 2006.

\bibitem[MR05]{DBLP:conf/stoc/MosselR05}
Elchanan Mossel and S{\'{e}}bastien Roch, \emph{Learning nonsingular
  phylogenies and hidden markov models}, {STOC}, {ACM}, 2005, pp.~366--375.

\bibitem[MSS16a]{DBLP:conf/focs/MaSS16}
Tengyu Ma, Jonathan Shi, and David Steurer, \emph{Polynomial-time tensor
  decompositions with sum-of-squares}, {FOCS}, {IEEE} Computer Society, 2016,
  pp.~438--446.

\bibitem[MSS16b]{DBLP:journals/corr/MaSS16}
\bysame, \emph{Polynomial-time tensor decompositions with sum-of-squares}, CoRR
  \textbf{abs/1610.01980} (2016).

\bibitem[MV10]{DBLP:conf/focs/MoitraV10}
Ankur Moitra and Gregory Valiant, \emph{Settling the polynomial learnability of
  mixtures of gaussians}, {FOCS}, {IEEE} Computer Society, 2010, pp.~93--102.

\bibitem[Nes00]{MR1748764-Nesterov00}
Yurii Nesterov, \emph{Squared functional systems and optimization problems},
  High performance optimization, Appl. Optim., vol.~33, Kluwer Acad. Publ.,
  Dordrecht, 2000, pp.~405--440. \MR{1748764}

\bibitem[Par00]{parrilo2000structured}
Pablo~A Parrilo, \emph{Structured semidefinite programs and semialgebraic
  geometry methods in robustness and optimization}, Ph.D. thesis, California
  Institute of Technology, 2000.

\bibitem[Pea94]{pearson1894contributions}
Karl Pearson, \emph{Contributions to the mathematical theory of evolution},
  Philosophical Transactions of the Royal Society of London. A \textbf{185}
  (1894), 71--110.

\bibitem[PS17]{DBLP:conf/colt/PotechinS17}
Aaron Potechin and David Steurer, \emph{Exact tensor completion with
  sum-of-squares}, {COLT}, Proceedings of Machine Learning Research, vol.~65,
  {PMLR}, 2017, pp.~1619--1673.

\bibitem[Rec11]{DBLP:journals/jmlr/Recht11}
Benjamin Recht, \emph{A simpler approach to matrix completion}, Journal of
  Machine Learning Research \textbf{12} (2011), 3413--3430.

\bibitem[SCV17]{DBLP:journals/corr/SteinhardtCV17}
Jacob Steinhardt, Moses Charikar, and Gregory Valiant, \emph{Resilience: {A}
  criterion for learning in the presence of arbitrary outliers}, CoRR
  \textbf{abs/1703.04940} (2017).

\bibitem[Sho87]{MR939596-Shor87}
N.~Z. Shor, \emph{Quadratic optimization problems}, Izv. Akad. Nauk SSSR Tekhn.
  Kibernet. (1987), no.~1, 128--139, 222. \MR{939596}

\bibitem[Tuk75]{MR0426989}
John~W. Tukey, \emph{Mathematics and the picturing of data}, 523--531.
  \MR{0426989}

\bibitem[VX15]{DBLP:conf/colt/VempalaX15}
Santosh Vempala and Ying Xiao, \emph{Max vs min: Tensor decomposition and {ICA}
  with nearly linear sample complexity}, {COLT}, {JMLR} Workshop and Conference
  Proceedings, vol.~40, JMLR.org, 2015, pp.~1710--1723.

\end{thebibliography}

\appendix

\section{Sum-of-squares toolkit}
\label{sec:basic-sos-proofs}

We will rely on the following sum-of-squares version of the inequality between the arithmetic and geometric mean.

\begin{lemma}[{\cite[Lemma A.1]{DBLP:conf/stoc/BarakKS15}}]
\label[lemma]{lem:AM-GM-sos}
$$
\Set{w_1 \geq 0,\ldots,w_n\ge 0}
\sststile{k}{w}
\Set{\prod_i w_i \leq \frac{\sum_i w_i^k}{k}}\,.
$$
\end{lemma}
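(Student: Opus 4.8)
The plan is to establish the inequality by Cauchy's classical forward–backward induction on the number of variables, checking at each stage that the step is captured by sum-of-squares reasoning of the stated degree (so the argument is carried out with $k$ variables $w_1,\dots,w_k$, matching the exponent). The base case is the square $(w_1-w_2)^2 \ge 0$, which gives $\sststile{2}{w}\Set{w_1 w_2 \le \tfrac12(w_1^2+w_2^2)}$ and uses none of the hypotheses $w_i \ge 0$.

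First I would treat the case where $k = 2^m$ is a power of two, by doubling the number of variables. Suppose inductively that $\Set{w_i \ge 0} \sststile{N}{w}\Set{\prod_{i=1}^N w_i \le \tfrac1N\sum_{i=1}^N w_i^N}$. Apply this to the first and to the second half of $w_1,\dots,w_{2N}$, obtaining $\prod_{i\le N}w_i \le S$ and $\prod_{N<i\le 2N}w_i \le T$, where $S = \tfrac1N\sum_{i\le N}w_i^N$ and $T = \tfrac1N\sum_{N<i\le 2N}w_i^N$. All four of $\prod_{i\le N}w_i$, $\prod_{N<i\le 2N}w_i$, $S$, $T$ are nonnegative under $\Set{w_i \ge 0}$ — this is the only place the hypotheses enter — so the addition and multiplication rules give $\prod_{i\le 2N}w_i \le S\cdot T$. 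Now $S\cdot T \le \tfrac12(S^2+T^2)$ because $(S-T)^2 \ge 0$, and $S^2 = \big(\tfrac1N\sum_{i\le N}w_i^N\big)^2 \le \tfrac1N\sum_{i\le N}w_i^{2N}$ is the Cauchy–Schwarz inequality $\sum_{i<j}(w_i^N - w_j^N)^2 \ge 0$ (likewise for $T^2$). Composing these three steps via the substitution and transitivity rules yields a degree-$2N$ sum-of-squares proof that $\prod_{i\le 2N}w_i \le \tfrac1{2N}\sum_{i\le 2N}w_i^{2N}$. Iterating from the base case settles the lemma whenever $k$ is a power of two, with proof degree exactly $k$.

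For a general $k$, I would use the backward half of Cauchy's argument. Let $N = 2^m$ be the least power of two with $N \ge k$ and set $A = \tfrac1k\sum_{i=1}^k w_i$, which is a nonnegative linear form under the constraints. Applying the power-of-two case (in its equivalent ``multiplicative'' form $\prod_{j\le N} v_j \le \big(\tfrac1N\sum_{j\le N}v_j\big)^N$) to the $N$-tuple $(w_1,\dots,w_k,A,\dots,A)$ and using $kA + (N-k)A = NA$ to collapse the right-hand side gives $\sststile{N}{w}\Set{A^{N-k}\prod_{i=1}^k w_i \le A^{N}}$, i.e. $A^{N-k}\big(A^k - \prod_i w_i\big) \ge 0$. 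The main obstacle is the last step, cancelling the factor $A^{N-k}$: in general an inequality $a^j p \ge 0$ together with $a \ge 0$ does \emph{not} sum-of-squares-imply $p \ge 0$, so ``dividing'' is illegitimate and a separate argument is needed. The point that rescues it is that on the constraint set $A = 0$ forces every $w_i = 0$ and hence $\prod_i w_i = 0 = A^k$, so $\prod_i w_i \le A^k$ genuinely holds with the right slack; I would make this precise the way it is done in the cited source — either by running the entire argument with each $w_i$ replaced by $w_i + \eta$, obtaining an explicit certificate, and letting $\eta \downarrow 0$, or by the substitution $w_i = x_i^2$, observing that $\tfrac1k\sum_i x_i^{2k} - \prod_i x_i^2$ is an everywhere-nonnegative symmetric polynomial that is in fact a sum of squares (via factorizations in the spirit of $\sum_i a_i^3 - k\prod_i a_i = \big(\sum_i a_i\big)\cdot\tfrac12\sum_{i<j}(a_i-a_j)^2$, lifted appropriately), and reading off the sum-of-squares proof in the $w$-variables. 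Finally $A^k \le \tfrac1k\sum_i w_i^k$ (power mean) upgrades the multiplicative conclusion to the stated one, and a degree count confirms the whole proof has degree $O(k)$.
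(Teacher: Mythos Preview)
The paper does not prove this lemma at all; it simply cites it from \cite{DBLP:conf/stoc/BarakKS15}, so there is no proof here to compare against. Your approach via Cauchy's forward--backward induction is the standard one, and the doubling (forward) half is clean and correct, with the right degree.

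There are, however, two genuine issues in the backward half. First, the ``equivalent multiplicative form'' $\prod_j v_j \le \big(\tfrac1N\sum_j v_j\big)^N$ is \emph{not} equivalent to the form $\prod_j v_j \le \tfrac1N\sum_j v_j^N$ that your forward step actually proves --- the multiplicative form is strictly stronger (it is AM--GM, the other is a power-mean consequence of it). Your forward induction can be rewritten to yield the multiplicative form directly (base case $(\tfrac{w_1-w_2}{2})^2 \ge 0$ gives $w_1w_2 \le (\tfrac{w_1+w_2}{2})^2$; doubling via $A_1^N A_2^N \le (\tfrac{A_1+A_2}{2})^{2N}$), but as written the backward step does not follow from what you established. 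Second, the cancellation of $A^{N-k}$ is, as you recognise, the crux, and neither proposed fix is actually carried out: the $\eta \downarrow 0$ limit is dangerous because sum-of-squares certificates need not survive limits (coefficients can blow up), and the $w_i = x_i^2$ route requires you to exhibit $\tfrac1k\sum_i x_i^{2k} - \prod_i x_i^2$ as an explicit sum of squares and then translate that back into a proof from $\{w_i \ge 0\}$, which is nontrivial and is exactly the content of the lemma. Finally, even granting all this, your backward argument runs at degree $N = 2^{\lceil \log_2 k\rceil}$, which is $O(k)$ but can be nearly $2k$, whereas the lemma asserts degree $k$; you would need to tighten this or argue that $O(k)$ suffices for all downstream uses in the paper (it does).
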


\begin{lemma}
  \label[lemma]{lem:binomial}
  For all even $k$,
  \begin{equation}
    \sststile{k}{a,b} \Set{(a+b)^k\le 2^{k-1}(a^k + b^k)}
    \,.
  \end{equation}
\end{lemma}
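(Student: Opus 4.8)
The plan is to write down an explicit sum-of-squares certificate, since $2^{k-1}(a^k+b^k)-(a+b)^k$ is a genuinely SoS polynomial and not merely a nonnegative one. Set $g(a,b) = 2^{k-1}(a^k+b^k) - (a+b)^k$. First I would establish the algebraic identity
\[
  g(a,b) \;=\; \sum_{\substack{2\le j\le k\\ j\text{ even}}} \binom{k}{j}\,(a+b)^{k-j}(a-b)^{j}\,.
\]
To prove it, substitute $a = s+t$ and $b = s-t$ (i.e.\ $s=\tfrac{a+b}{2}$, $t=\tfrac{a-b}{2}$); by the binomial theorem the terms of odd degree in $t$ cancel, giving $(s+t)^k+(s-t)^k = 2\sum_{j\text{ even}}\binom{k}{j}s^{k-j}t^j$. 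Substituting back $s^{k-j}t^j = 2^{-k}(a+b)^{k-j}(a-b)^j$, multiplying through by $2^{k-1}$, and peeling off the $j=0$ term $(a+b)^k$ yields the identity above.

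The second step is the observation that upgrades this from an identity to an \emph{SoS} proof: because $k$ is even and every index $j$ in the sum is even, the exponent $k-j$ is also even, so
\[
  (a+b)^{k-j}(a-b)^{j} \;=\; \Bigl((a+b)^{(k-j)/2}(a-b)^{j/2}\Bigr)^{2}\,.
\]
Hence $g = \sum_{j}\binom{k}{j}\bigl((a+b)^{(k-j)/2}(a-b)^{j/2}\bigr)^{2}$ writes $g$ as a sum of squares of polynomials of degree $k/2$, each weighted by the nonnegative integer $\binom{k}{j}$. This is precisely a degree-$k$ sum-of-squares derivation of $\{g\ge 0\}$ from the empty constraint system, which is the claim $\sststile{k}{a,b}\{(a+b)^k \le 2^{k-1}(a^k+b^k)\}$.

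There is no serious obstacle here: the argument is a direct computation, and the only point needing care is the parity bookkeeping (ensuring that, after the odd terms cancel, both $(a+b)^{k-j}$ and $(a-b)^j$ in each surviving summand are perfect squares) together with the harmless remark that the change of variables $a=s+t,\ b=s-t$ is merely a device for organizing the expansion and introduces neither new variables nor extra degree. As a cross-check, one could instead deduce the bound from the SoS AM--GM inequality of \cref{lem:AM-GM-sos} after expanding $(a+b)^k=\sum_j\binom{k}{j}a^jb^{k-j}$ and bounding each monomial, but that route would need the sign hypotheses $a,b\ge 0$, whereas the identity above works unconditionally.
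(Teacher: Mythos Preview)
Your proof is correct and in fact more direct than the paper's. Your identity
\[
  2^{k-1}(a^k+b^k) - (a+b)^k \;=\; \sum_{\substack{2\le j\le k\\ j\text{ even}}} \binom{k}{j}\,\bigl((a+b)^{(k-j)/2}(a-b)^{j/2}\bigr)^{2}
\]
is a clean explicit SoS certificate of degree $k$, and the derivation via the substitution $a=s+t$, $b=s-t$ checks out.

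The paper proceeds differently: it first shows the $k=2$ case via $(a-b)^2\ge 0$, then iterates to obtain $\sststile{k}{a,b}\{(a+b)^k \le 2^{k/2}(a^2+b^2)^{k/2}\}$, and finally expands $(a^2+b^2)^{k/2}$ by the binomial theorem and applies the SoS AM--GM inequality (\cref{lem:AM-GM-sos}) to each term $a^{2i}b^{k-2i}$ to bound $(a^2+b^2)^{k/2}\le 2^{k/2-1}(a^k+b^k)$. So the paper does take an AM--GM route, but not the one you flag in your final remark: by first passing to $a^2,b^2$ (which are SoS-nonnegative automatically) it sidesteps the need for sign hypotheses on $a,b$. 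Your approach has the advantage of producing a single explicit certificate without invoking any auxiliary lemma; the paper's approach is more modular and illustrates how the AM--GM building block composes with other SoS inequalities.
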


\begin{proof}
  The lemma holds for $k=2$, because
  \begin{equation}
    \sststile{2}{a,b} \Set{2a^2+2b^2 - (a+b)^2 = a^2 + b^2 - 2ab = (a-b)^2 \ge 0}
    \,.
  \end{equation}
  By iterating this inequality $k/2$ times, we obtain
  \begin{equation}
    \sststile{k}{a,b} \Set{(a+b)^k = ((a+b)^2)^{k/2} \le 2^{k/2}(a^2 + b^2)^{k/2}}\,.
  \end{equation}
  By the binomial identity and \cref{lem:AM-GM-sos},
  \begin{equation}
    \sststile{k}{a,b} \quad
    \begin{aligned}[t]
      (a^2 + b^2)^{k/2}
      &= \sum_{i=0}^{k/2}\binom{k}{i} a^{2i}b^{k-2i}\\
      &\le \sum_{i=0}^{k/2}\binom{k}{i} \Paren{\tfrac {2i}{k} \cdot a^{k} + \tfrac{k-2i}{k} \cdot b^k}\\
      &= \sum_{i=0}^{k/2}\binom{k}{i} \Paren{\tfrac {1}{2} \cdot a^{k} + \tfrac 1{2} \cdot b^k}\\
      &= 2^{k/2-1}\cdot (a^k + b^k) \,.
    \end{aligned}
  \end{equation}
  Here, we used for the second equality that $\binom{k}{i}=\binom{k}{k-i}$.
  
  Combining the previous two inequalities yields the desired inequality,
  \begin{equation}
    \sststile{k}{a,b}
    (a+b)^k \le 2^{k/2} \cdot (a^2 + b^2)^{k/2} \le 2^{k-1}\cdot (a^k+b^k)\,.
  \end{equation}
\end{proof}

\begin{lemma}
  \label[lemma]{lem:sos-fact-2}
  For any even $k$,
  \begin{equation}
    \Set{f^k \leq 1}
    \sststile{k}{f}
    \Set{f \leq 1}\,.
  \end{equation}
\end{lemma}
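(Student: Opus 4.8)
The plan is to exhibit an explicit SoS identity expressing $1 - f$ as a nonnegative combination of $1 - f^k$ and squares. First I would observe that for even $k$ we have the polynomial factorization $1 - f^k = (1-f)\cdot g(f)$ where $g(f) = 1 + f + f^2 + \cdots + f^{k-1}$. This is not yet an SoS certificate because $g$ is not manifestly nonnegative, so the key step is to combine this factorization with the hypothesis $f^k \le 1$ in a way that controls $g(f)$. Concretely, I would try to show that $g(f) + (\text{something})\cdot(1 - f^k)$ is a sum of squares, or equivalently find an SoS polynomial $h(f)$ such that $(1-f)\cdot h(f)$ together with a multiple of $(1-f^k)$ telescopes to $h(f) - h(f) f^k + \cdots$; the cleanest route is to write
\begin{equation}
  h(f)\cdot(1 - f^k) = (1-f)\cdot h(f)\cdot g(f) = (1 - f)\cdot \bigl(\text{SoS}\bigr),
\end{equation}
choosing $h$ so that $h(f)\cdot g(f)$ is a sum of squares. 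Since $g(f) = (1-f^k)/(1-f)$ and $g$ has all nonnegative coefficients, a natural candidate is to multiply by a suitable power or to use the substitution trick reducing to the previously established Lemma A.8 (\cref{lem:sos-fact-2} with the roles adjusted), though that lemma is itself the target here, so I cannot invoke it.

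The approach I actually expect to work is a short induction on $k$ via repeated squaring, paralleling the structure of \cref{lem:binomial}. Since $k$ is even, write $k = 2m$. From $f^k \le 1$, i.e. $(f^m)^2 \le 1$, and the degree-$2$ fact $\{a^2 \le 1\} \sststile{2}{a} \{a \le 1\}$ applied with $a = f^m$ (this base case is itself just $1 - a = \tfrac12(1-a)^2 + \tfrac12(1 - a^2) \ge 0$, a valid degree-$2$ SoS proof), we get $f^m \le 1$ using a degree-$2m$ proof after the substitution rule. Iterating, if $m$ is again even we repeat; in general after $O(\log k)$ steps we reduce the exponent to $1$. The bookkeeping is that each substitution multiplies the proof degree by the degree of the substituted polynomial, and the product of these degrees telescopes to exactly $k$, so the final proof has degree $k$ as required. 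The main obstacle is handling the case where $k$ is even but $k/2$ is odd: then I would instead peel off one factor at a time using $\{f^j \le 1, f \ge 0\}$-type reasoning, but since we are not given $f \ge 0$, I would route around this by noting that from $f^k \le 1$ with $k$ even we first deduce $f^2 \le 1$ (reduce all the way down the powers of $2$ dividing $k$ won't suffice), so the genuinely safe move is the one-line algebraic certificate:
\begin{equation}
  1 - f = \tfrac{1}{k}(1 - f^k) + \tfrac{1}{k}\sum_{j=1}^{k-1}(f^j - f^{j+1}) \cdot (\text{weights}),
\end{equation}
which I would massage into $1 - f = \tfrac1k(1-f^k) + \sum_i q_i(f)^2$ by grouping the cross terms; verifying that the leftover polynomial is SoS is the one routine computation I would carry out in full.

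The step I expect to be the main obstacle is precisely pinning down that leftover-polynomial-is-SoS claim cleanly without a case split on $k/2$'s parity; the induction-by-squaring argument is conceptually simplest but needs $k$ to be a power of $2$ times something manageable, whereas the explicit telescoping identity works for all even $k$ at the cost of a slightly fiddly grouping of monomials. I would present the telescoping version, since it gives the degree-$k$ bound immediately and needs no hypothesis beyond $f^k \le 1$.
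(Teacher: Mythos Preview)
Your degree-$2$ base case $1 - a = \tfrac12(1-a)^2 + \tfrac12(1-a^2)$ is exactly right and is the paper's final step. The gap is in getting from $f^k \le 1$ down to $f^2 \le 1$ for \emph{arbitrary} even $k$. Your repeated-squaring route works only when $k$ is a power of $2$; for $k=6$ it strands you at $f^3 \le 1$, as you note. Your telescoping route has the right target identity---you want $1 - f = \tfrac1k(1-f^k) + p_k(f)$ with $p_k(f) = \tfrac{k-1}{k} - f + \tfrac1k f^k$ a sum of squares---but you never actually establish that $p_k$ is SoS, and ``a slightly fiddly grouping of monomials'' is not a uniform argument over $k$. (It can be done: $p_k \ge 0$ on $\R$ by scalar AM-GM, and univariate nonnegative polynomials are always SoS; but you do not say this, and it amounts to re-proving a form of SoS AM-GM.)

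The paper's proof closes this gap by invoking the already-available SoS AM-GM inequality, \cref{lem:AM-GM-sos}, with $k/2$ variables $w_1 = f^2$, $w_2 = \cdots = w_{k/2} = 1$. This yields $f^2 \le \tfrac{2}{k}\bigl(f^k + \tfrac{k}{2} - 1\bigr)$ with a degree-$k$ SoS certificate in $f$, hence $\{f^k \le 1\} \sststile{k}{f} \{f^2 \le 1\}$; then your base-case identity gives $\{f^2 \le 1\}\sststile{2}{f}\{f \le 1\}$. This is exactly the structure your telescoping approach was reaching for, with the ``leftover is SoS'' step packaged as a one-line appeal to a lemma rather than an ad hoc monomial computation.
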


\begin{proof}
  We apply \cref{lem:AM-GM-sos} for $k/2$ and substitute $w_1 = f^2$ and $w_2, w_3, \ldots, w_{k/2} = 1$ to obtain $\sststile{k}{f} f^2 \leq \frac{f^k+(k/2-1)}{k/2}$.
  Thus, 
  \begin{equation}
    \Set{f^k \leq 1} \sststile{k}{f} \Set{f^2 \leq 1}. \label{eq:reduce-deg-to-2}
  \end{equation}
  At the same time, since $2-2f = 1-f^2 + (1-f)^2$, we have 
  \begin{equation}
    \Set{f^2 \leq 1} \sststile{2}{f} \Set{1-f \geq 0}. \label{eq:reduce-deg-to-1}
  \end{equation}
   Combining \eqref{eq:reduce-deg-to-2} and \eqref{eq:reduce-deg-to-1} yields the desired conclusion.
\end{proof}

\begin{lemma} 
  \label[lemma]{lem:sos-fact-1}
  For every even $k\in \N$ and all $\delta<0.1$,
  \begin{equation} 
    \Set{(f-1)^k \leq \delta^k (f+1)^k} \sststile{k}{f} \Set{1-\delta' \le f \leq 1+\delta'},
  \end{equation}
  where $\delta' = 100 \delta$.
\end{lemma}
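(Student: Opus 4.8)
The plan is to substitute $u = f-1$, use the hypothesis together with \cref{lem:binomial} to force $u^k$ to be small, and then extract a two-sided linear bound on $u$ from \cref{lem:sos-fact-2}.

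Since $f+1 = u+2$, the hypothesis reads $u^k \le \delta^k(u+2)^k$. Applying \cref{lem:binomial} with $a=u$ and $b=2$ gives the degree-$k$ sum-of-squares consequence $(u+2)^k \le 2^{k-1}(u^k+2^k)$. Multiplying this by the nonnegative constant $\delta^k$ and adding it to the hypothesis yields
\[
  \Bigparen{1 - \tfrac{(2\delta)^k}{2}}\, u^k \;\le\; \delta^k\, 2^{2k-1}\,.
\]
Because $\delta<0.1$ and $k\ge 2$ is even, the coefficient $1-(2\delta)^k/2$ is a positive constant that is at least $1/2$; dividing through by it and then replacing the resulting right-hand constant by the slightly larger $\delta^k 2^{2k}=(4\delta)^k$ gives the degree-$k$ consequence $u^k \le (4\delta)^k$, equivalently $\bigparen{u/(4\delta)}^k \le 1$.

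Now I would invoke \cref{lem:sos-fact-2} twice. Since $k$ is even we also have $\bigparen{-u/(4\delta)}^k = \bigparen{u/(4\delta)}^k \le 1$, so substituting the degree-$1$ polynomials $u/(4\delta)$ and $-u/(4\delta)$ into the degree-$k$ proof of \cref{lem:sos-fact-2} gives $u/(4\delta)\le 1$ and $-u/(4\delta)\le 1$, i.e. $-4\delta \le f-1 \le 4\delta$. Since $\delta>0$ this already implies the asserted bound $1-\delta' \le f \le 1+\delta'$ with $\delta'=100\delta$ (with plenty of room). Every manipulation used — adding valid inequalities, multiplying by nonnegative constants, coarsening a constant upward, and substituting degree-$1$ polynomials into \cref{lem:binomial} and \cref{lem:sos-fact-2} — is a legitimate degree-$k$ sum-of-squares inference, so the resulting proof has degree $k$ as required.

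There is no real obstacle: the statement is very loose (the true constant is $4$, not $100$). The only points that need a little care are arithmetic — verifying $1-(2\delta)^k/2\ge 1/2$ (true since $2\delta<1$), so the division is by a genuine positive constant and the coarsening $\delta^k 2^{2k-1}/(1-(2\delta)^k/2)\le (4\delta)^k$ is valid — and remembering that substitution into \cref{lem:binomial} and \cref{lem:sos-fact-2} multiplies the proof degree only by the degree ($=1$) of the substituted polynomials.
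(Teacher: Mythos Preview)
Your proof is correct and follows essentially the same route as the paper: both apply \cref{lem:binomial} to bound $(f+1)^k=((f-1)+2)^k$ by a constant times $(f-1)^k$ plus a constant, absorb the $(f-1)^k$ term using the hypothesis, and finish by invoking \cref{lem:sos-fact-2} on $\pm(f-1)/c$. Your arithmetic is in fact cleaner than the paper's (which has a minor slip in the displayed $\delta'$), and you correctly obtain the sharper constant $4\delta$ in place of $100\delta$.
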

\begin{proof}
  By \cref{lem:binomial},
  \begin{displaymath}
    \sststile{k}{f} (f+1)^k = ((f-1) + 2)^k \le 2^{k} (f-1)^k + 4^{k}\,.
  \end{displaymath}
  Therefore, for $\delta'=\Paren{\frac {4\delta}{1-2\delta}}^k$,
  \begin{displaymath}
    \Set{(f-1)^k \leq \delta^k (f+1)^k} ~\sststile{k}{f}~ \Set{(f-1)^k \le (\delta')^k}\,.
  \end{displaymath}
  By \cref{lem:sos-fact-2}, $\Set{(f-1)^k \le (\delta')^k}\sststile{}{} \Set{ 1-\delta' \le f \le 1+\delta}$, which together with the previous inequality gives the desired inequality.
\end{proof}

\begin{proposition}
  Let $\cA$ be a set of polynomial equality axioms in variable $x$ such that: 
  \[
    \cA \sststile{2t}{} p(x,u) \geq 0,
  \]
  for a polynomial $p$ with total degree at most $2t$.

  Then, for any pseudo-distribution $D$ of degree $2t$ on $x$ satisfying $\cA$,
  $\sststile{2t}{u} \pE_{D} p(x,u) \geq 0$.\label[proposition]{prop:SoS-proof-for-unbounded-variable}
\end{proposition}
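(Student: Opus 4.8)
The plan is to take a degree-$2t$ sum-of-squares certificate witnessing $\cA \sststile{2t}{} p(x,u)\ge 0$, apply the pseudo-expectation operator $\pE_D$ in the $x$-variables (coefficient-wise in $u$), and observe that what comes out is visibly a sum of squares in $u$. First I would record the shape of the certificate. Since every axiom in $\cA$ is an equality in $x$ alone, the certificate can be written in the standard ``ideal plus cone'' form
\[
p(x,u) \;=\; \sum_{j} r_j(x,u)^2 \;+\; \sum_{i} \lambda_i(x,u)\, f_i(x),
\]
where $f_1,\dots,f_m$ are the equality axioms of $\cA$, the $\lambda_i$ are arbitrary polynomials with $\deg(\lambda_i f_i)\le 2t$, and each $r_j(x,u)$ has total degree at most $t$ (so $r_j^2$ has degree at most $2t$); grouping the terms $p_S\prod_{i\in S}f_i$ of a general degree-$2t$ certificate according to a chosen element of each nonempty $S$, and absorbing the $S=\emptyset$ term into $\sum_j r_j^2$, produces exactly this form — here one uses that for an equality axiom $f_i=0$ the multiplier of $f_i$ may be taken to be an arbitrary polynomial.

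Next I would apply $\pE_D$. Viewing $u$ as a formal parameter, $\pE_D$ sends any polynomial in $(x,u)$ whose $x$-degree is at most $2t$ to a polynomial in $u$, and it is linear, so
\[
\pE_D [p(x,u)] \;=\; \sum_j \pE_D[r_j(x,u)^2] \;+\; \sum_i \pE_D[\lambda_i(x,u)\,f_i(x)].
\]
For the second sum: expanding $\lambda_i(x,u)f_i(x)$ in monomials of $u$ exhibits each $u$-coefficient as an $x$-polynomial of the form $g(x)f_i(x)$ with $\deg(gf_i)\le 2t$, and since $D$ satisfies the equality axiom $f_i=0$ each such pseudo-expectation is zero; hence $\pE_D[\lambda_i f_i]=0$. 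For the first sum I would write $r_j(x,u)=\sum_\beta m_\beta(u)\,r_{j,\beta}(x)$, where the $m_\beta$ are the monomials in $u$ and $\deg r_{j,\beta}\le t-|\beta|$, so that $\pE_D[r_j^2]=\sum_{\beta,\beta'}(A_j)_{\beta\beta'}\,m_\beta(u)m_{\beta'}(u)$ with $(A_j)_{\beta\beta'}:=\pE_D[r_{j,\beta}r_{j,\beta'}]$; each entry is legitimate because $\deg(r_{j,\beta}r_{j,\beta'})\le 2t$. The matrix $A_j$ is the Gram matrix of the $r_{j,\beta}$'s under the form $(g,h)\mapsto\pE_D[gh]$, which is positive semidefinite on $x$-polynomials of degree at most $t$ because $D$ is a degree-$2t$ pseudo-distribution; since each $r_{j,\beta}$ has degree at most $t$, we get $A_j\succeq 0$, and writing $A_j=B_jB_j^\top$ gives $\pE_D[r_j^2]=\sum_\ell\Paren{\sum_\beta (B_j)_{\beta\ell}\,m_\beta(u)}^2$, a sum of squares of degree-$\le t$ polynomials in $u$. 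Summing over $j$ yields that $\pE_D[p(x,u)]$ is a sum of squares in $u$ of degree at most $2t$, i.e.\ $\sststile{2t}{u}\pE_D[p(x,u)]\ge 0$.

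The only real content beyond bookkeeping is the degree accounting, and that is also where the single thing one must be careful about lies: the ``$2t$'' in the degree of the certificate must match the level ``$2t$'' of the pseudo-distribution, so that (i) every pseudo-expectation above is applied to an $x$-polynomial of degree at most $2t$, (ii) the equality axioms act trivially because the multipliers $gf_i$ have degree at most $2t$, and (iii) the Gram matrices $A_j$ are positive semidefinite because the relevant $x$-polynomials have degree at most $t=2t/2$. I would state once, at the outset, that ``$D$ satisfies $\cA$'' is read for equality axioms as $\pE_D[gf_i]=0$ whenever $\deg(gf_i)\le 2t$, which is the standard convention and is exactly what the argument needs. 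I do not anticipate any genuine obstacle.
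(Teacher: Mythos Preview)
Your proposal is correct and follows essentially the same route as the paper: write the degree-$2t$ certificate in the ideal-plus-cone form, kill the ideal part using that $D$ satisfies the equality axioms, and for each square express it as a quadratic form in the $u$-monomials with a Gram-type matrix in $x$ whose pseudo-expectation is PSD, yielding a sum of squares in $u$. The paper phrases the last step via the tensor identity $g_i(x,u)^2=\iprod{G(x)G(x)^\top,(1,u)^{\otimes 2t}}$ and checks PSDness of $\pE_D[G(x)G(x)^\top]$ by a quadratic-form argument, which is the same computation you do with the matrices $A_j$; your degree bookkeeping is in fact slightly more explicit than the paper's.
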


\begin{proof}
  Let $\cA = \{q_1 = 0, q_2 = 0, \ldots, q_r = 0\}$. 
  Since $\cA \sststile{2t}{} p(x,u) \geq 0$, there are degree $\leq 2t$ (in $x$) polynomials $g_1(x,u), g_2(x,u), \ldots g_{r'}(x,u)$ and polynomials $h_1(x,u), \ldots, h_{r}(x,u)$ such that 
  $p(x,u) = \sum_i g_i(x,u)^2 + \sum_{j} h_j(x,u) q_j(x)$ where the degree of any $h_j(x,u) \cdot q_j(x)$ in $x$ is at most $2t$.

  Taking pseudo-expectations with respect to $D$, we have: 
  $\pE_D p(x,u) = \sum_i \pE_D g_i(x,u)^2 + \sum_j \pE_D h_j(x,u) q_j(x) = \sum_i \pE_D g_i(x,u)^2,$ 
  where we used that the second term must vanish as $D$ satisfies $\cA$.

  Now, observe that by matching degrees on both sides as polynomial in $u$, no $\pE_D g_i(x,u)^2$ has degree more than $2t$. 

  Further, each $g_i(x,u)^2$ can be written as $\iprod{G(x)G(x)^{\top}, (1,u)^{\otimes 2t}}$ for some matrix valued polynomial $G(x)$.

  Taking pseudo-expectations with respect to $D$ yields that: 
  $\pE_D p(x,u) = \sum_i \iprod{\pE_D[G(x)G(x)^{\top}], (1,u)^{\otimes 2t}}$. 

  Now, we claim that $\pE_D G(x) G(x)^{\top}$ is psd. To see this, let $w$ be any vector and consider the quadratic form of this matrix in $w$. Then, the quadratic form equals $\pE_D \iprod{G(x), w}^2$ which is non-negative by the positivity of pseudo-expectations. 

  Thus, $\pE_D p(x,u)$ is a sum of squares polynomial in $u$ implying that $\sststile{2t}{u} \pE_D p(x,u) \geq 0$.
\end{proof}

\end{document}